\newif\ifsubmit
\newcommand{\DAComm}[1]{} 
\newcommand{\DAComm}[1]{{\scriptsize\textcolor{magenta}{[\bf{Davide: }#1}]}}
\lstdefinelanguage{rml}{
  morekeywords={not,false,true,none,any,empty,matches,all,let,if,else},
  morecomment=[l]{//},
}
\title{Can determinism and compositionality coexist in \rml? (extended version)}
\author{Davide Ancona \qquad
Viviana Mascardi
\institute{DIBRIS, University of Genova, Italy}
\email{\{Davide.Ancona,Viviana.Mascardi\}@unige.it}
\and
Angelo Ferrando
\institute{University of Manchester, UK}
\email{angelo.ferrando@manchester.ac.uk}
%% \and
%% Viviana Mascardi
%% \institute{DIBRIS, University of Genova, Italy}
%% \email{Viviana.Mascardi@unige.it}
}
\begin{document}

\maketitle

\begin{abstract}
Runtime verification (RV) consists in dynamically verifying that
the event traces generated by single runs of a system under scrutiny (SUS)
are compliant with the formal specification of its expected properties.
\rml (Runtime Monitoring Language) is a simple but expressive
Domain Specific Language for RV; its semantics is based on a trace calculus formalized by a deterministic rewriting  system
which drives the implementation of the interpreter of the monitors
generated by the \rml compiler from the specifications.
While determinism of the trace calculus ensures better performances of the
generated monitors, it makes the semantics of its operators less intuitive.
In this paper we move a first step towards a compositional semantics of
the \rml trace calculus, by interpreting its basic operators 
as operations on sets of instantiated event traces and by proving that such
an interpretation is equivalent to the operational semantics of the calculus. 
%% \keywords{
%%   Runtime verification
%%   \and trace calculi
%%   \and rewriting semantics
%%   \and compositional semantics
%% }
\end{abstract}

\section{Introduction}
RV \cite{rv,FalconeKRT18,BartocciFFR18} consists in dynamically verifying that the event traces generated by single runs of a SUS
are compliant with the formal specification of its expected properties.

The RV process needs as inputs the SUS and the specification of the properties to be verified,
usually defined with either a domain specific (DSL) or a programming language,
to denote the set of valid event traces; RV is performed by monitors, automatically generated from
the specification, which consume the observed events of the SUS, emit verdicts and, in case they work online
while the SUS is executing, feedback useful for error recovery.

RV is complimentary to other verification methods:
analogously to formal verification, it uses a specification formalism, but, as opposite to it, scales well to real systems and complex properties
and it is not exhaustive as happens in software testing;
however, it also exhibits several distinguishing features: it is quite useful to check control-oriented properties \cite{AhrendtCPS17},
and offers opportunities for fault protection when the monitor runs online. Many RV approaches adopt a DSL language
to specifiy properties to favor portability and reuse of specifications and interoperability of the generated monitors and to provide
stronger correctness guarantees: monitors automatically generated from a higher level DSL are more reliable than
ad hoc code implemented in a ordinary programming language to perform RV.

\rml\footnote{\url{https://rmlatdibris.github.io}} \cite{FranceschiniPhD2020} is a simple but expressive DSL for RV
which can be used in practice for RV of complex non Context-Free properties, as FIFO properties, which can be verified
by the generated monitors in time linear in the size of the inspected trace; the language design and implementation is based on previous work on trace expressions and global types \cite{AnconaDM12,CastagnaEtAl12,AnconaBB0CDGGGH16,AnconaFM17}, which have been adopted for RV in several contexts.
Its semantics is based on a trace calculus formalized by a rewriting  system
which drives the implementation of the interpreter of the monitors generated by the \rml compiler from the specifications;
to allow better performances, the rewriting system is fully deterministic \cite{AnconaFFM19} \added{by adopting a left-preferential evaluation strategy
for binary operators} and, thus, no monitor backtracking is needed and
exponential explosion of the space allocated for the states of the monitor is avoided. \added{A similar strategy is followed by 
  mainstream programming languages in predefined libraries for regular expressions for efficient incremental matching of input sequences,
  to avoid the issue of Regular expression Denial of Service (ReDoS) \cite{DavisCSL18}: for instance,
  given the regular expression \lstinline{a?(ab)?} (optionally \lstinline{a} concatenated with optionally \lstinline{ab}) and the input sequence
  \lstinline{ab}, the Java method \lstinline{lookingAt()} of class \lstinline{java.util.regex.Matcher} matches \lstinline{a} instead of the entire input
  sequence \lstinline{ab} because the evaluation of concatenation is deterministically left-preferential.}
  
\added{As explained more in details in Section~\ref{sect:related_work}, with respect to other existing RV formalisms,
  \rml has been designed as an extension of regular expressions and deterministic context-free grammars, which are widely used in RV
  because they are well-understood among software developers as opposite to other more sophisticated approaches, as temporal logics.
As shown in previous papers \cite{AnconaDM12,CastagnaEtAl12,AnconaBB0CDGGGH16,AnconaFM17}, the calculus at the basis of \rml 
allows users to define and efficiently check complex parameterized properties
and it has been proved to be more expressive than LTL \cite{AnconaFM16}}.

Unfortunately, while determinism ensures better performances,
it makes the \added{compositional} semantics of its operators less intuitive; \added{for instance, the example above concerning the
  regular expression  \lstinline{a?(ab)?} with deterministic left-preferential concatenation applies also to \rml, which is more expressive than regular
  expressions: the compositional semantics of concatenation does not correspond to standard language concatenation, because 
  \lstinline{a?} and \lstinline{(ab)?} denote the formal languages $\{\emptyevtr,a\}$ and $\{\emptyevtr,ab\}$, respectively, where $\emptyevtr$ denotes
  the empty string, while, if concatenation is deterministically left-preferential, then the semantics of \lstinline{a?(ab)?} is $\{\emptyevtr,a,aab\}$ which
  does not coincide with the language $\{\emptyevtr,a,ab,aab\}$ obtained by concatenating $\{\emptyevtr,a\}$ with $\{\emptyevtr,ab\}$. In Section~\ref{sec:comp} we show that the semantics of left-preferential concatenation can still be given compositionally, although the corresponding operator is more complicate than standard language concatenation. Similar results follow for the other binary operators of \rml (union, intersection and shuffle);
  in particular, the compositional semantics of left-preferential shuffle is more challenging. Furthermore, the fact that \rml supports parametricity
makes the compositional semantics more complex, since traces must be coupled with the corresponding substitutions generated by event matching. To this aim, as a first step towards a compositional semantics of
the \rml trace calculus,} we provide an interpretation of the basic operators of the \rml trace calculus 
as operations on sets of instantiated event traces, that is, pairs of trace of events and
substitutions computed to bind the variables occurring in the event type patterns used in the specifications and to associate
them with the data values carried by the matched events.
%Such an interpretation is \added{a} first step towards a compositional semantics of
%the \rml trace calculus to allow a detailed study of the properties of the \rml operators.
%After formally defining the semantics of the operators on sets of instantiated event traces,
\added{Furthermore} we prove that such
an interpretation is equivalent to the original operational semantics of the calculus based on the deterministic rewriting  system.

The paper is structured as follows: Section~\ref{sec:back} introduces the basic definitions which are used in the subsequent technical sections,
Section~\ref{sec:calculus} formalizes the \rml trace calculus and its operational semantics, while Section~\ref{sec:comp} introduces the
semantics based on sets of instantiated event traces and formally proves its equivalence with the operational semantics; finally,
\added{Section~\ref{sect:related_work} is devoted to the related work and} Section~\ref{sec:conclu}
draws conclusions and directions for further work.
For space limitations, some proof details can be found in the Appendix.

\section{Technical background}\label{sec:back}
This section introduces some basic definitions and propositions used in the next technical sections.
\paragraph{Partial functions:} Let $\fun{f}{D}{C}$ be a partial function; then $\dom(f)\subseteq D$ denotes the set
of elements $d\in D$ s.t. $f(d)$ is defined (hence, $f(d)\in C$).

A partial function over natural numbers $\fun{f}{\nat}{N}$, with $N\subseteq\nat$,  is \emph{strictly increasing} iff
for all $n_1,n_2\in\dom(f)$, $n_1<n_2$ implies $f(n_1)<f(n_2)$. From this definition one can easily deduce that
a strictly increasing partial function over natural numbers is always injective, and, hence, it is bijective iff it is surjective. 

\begin{proposition}\label{prop:totality}
Let $\fun{f}{\nat}{N}$, with $N\subseteq\nat$, be a strictly
increasing partial function. Then for all $n_1,n_2\in\dom(f)$, if $f(n_1)<f(n_2)$, then
$n_1<n_2$.
\end{proposition}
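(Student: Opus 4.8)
The plan is to argue by trichotomy on the two natural numbers $n_1$ and $n_2$. Assume $n_1,n_2\in\dom(f)$ and $f(n_1)<f(n_2)$. Since the usual order on $\nat$ is total, exactly one of $n_1<n_2$, $n_1=n_2$, $n_2<n_1$ holds, and it suffices to rule out the last two so that only $n_1<n_2$ remains.

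First I would dispose of the case $n_1=n_2$: then $f(n_1)=f(n_2)$ because $f$ is a function, contradicting $f(n_1)<f(n_2)$ by irreflexivity of $<$. Next, the case $n_2<n_1$: here both $n_1$ and $n_2$ lie in $\dom(f)$, so the hypothesis that $f$ is strictly increasing applies directly and yields $f(n_2)<f(n_1)$; together with the assumed $f(n_1)<f(n_2)$ this violates antisymmetry (equivalently, transitivity plus irreflexivity) of $<$ on $\nat$. Hence the only remaining possibility is $n_1<n_2$, which is exactly the claim.

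There is essentially no hard step here: the statement is just the converse of strict monotonicity, and it holds because $<$ is a strict total order on $\nat$, so trichotomy together with the injectivity of $f$ already noted in the text do all the work. The only point to be careful about is that the case analysis stays within $\dom(f)$, which is guaranteed by the assumption $n_1,n_2\in\dom(f)$; this is what makes the definition of \emph{strictly increasing} applicable precisely when it is needed.
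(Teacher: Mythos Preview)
Your proof is correct and essentially identical to the paper's: both argue by trichotomy, ruling out $n_1=n_2$ via functionality of $f$ and ruling out $n_1>n_2$ by applying strict monotonicity to derive a contradiction with $f(n_1)<f(n_2)$.
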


\begin{proposition}\label{prop:identity}
  Let $\fun{f}{\nat}{N}$, with $N\subseteq\nat$, be a strictly
  increasing partial function satisfying the following conditions:
  \begin{enumerate}
  \item $f$ is surjective (hence, bijective);
  \item for all $n\in\nat$, if $n+1\in\dom(f)$, then $n\in\dom(f)$;
  \item for all $n\in\nat$, if $n+1\in N$, then $n\in N$;
  \end{enumerate}
  Then, for all $n\in\nat$, if $n\in\dom(f)$, then $f(n)=n$, hence $f$ is the identity over $\dom(f)$, and $\dom(f)=N$.
\end{proposition}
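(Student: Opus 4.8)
The plan is to prove the main claim --- that $f(n)=n$ for every $n\in\dom(f)$ --- by strong induction on $n$, after first recording an easy structural consequence of hypotheses~(2) and~(3): iterating~(2) shows that $n\in\dom(f)$ implies $\{0,\dots,n\}\subseteq\dom(f)$, so $\dom(f)$ is either empty, an initial segment $\{0,\dots,k\}$, or all of $\nat$; likewise~(3) makes $N$ downward closed.

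For the inductive step, fix $n\in\dom(f)$ and assume $f(m)=m$ for all $m\in\dom(f)$ with $m<n$. By downward closure of $\dom(f)$ we have $\{0,\dots,n\}\subseteq\dom(f)$, so in fact $f(m)=m$ for every $m<n$. Since $f$ is strictly increasing, $f(0)<f(1)<\cdots<f(n)$ is a strictly increasing sequence of naturals, whence $f(n)\ge n$. It then suffices to exclude $f(n)>n$. Assuming it, note $f(n)\in N$ and, $N$ being downward closed, $n\in N$; by surjectivity pick $k\in\dom(f)$ with $f(k)=n$. From $f(k)=n>n-1=f(n-1)$ (when $n\ge 1$; for $n=0$ we simply have $k\ge 0=n$) and Proposition~\ref{prop:totality} we get $k\ge n$, while from $f(k)=n<f(n)$ and Proposition~\ref{prop:totality} we get $k<n$ --- a contradiction. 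Hence $f(n)=n$, completing the induction. Finally, $\dom(f)=N$ is immediate: $f$ maps $\dom(f)$ onto $N$ by surjectivity, and since $f$ acts as the identity on $\dom(f)$ this image equals $\dom(f)$.

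The argument is largely routine; the one point that needs care is that a direct induction trying to establish $f(n)\le n$ does not go through on its own --- instead one must use surjectivity together with the downward closure of $N$ to exhibit a preimage of $n$ and derive the contradiction through Proposition~\ref{prop:totality}, being careful to apply that proposition in the correct direction. I expect this combination, rather than any single computation, to be the only mildly delicate part.
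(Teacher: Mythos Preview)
Your proof is correct and follows essentially the same strategy as the paper's: induction on $n$, using monotonicity and the inductive hypothesis to obtain $f(n)\ge n$, then ruling out $f(n)>n$ by combining surjectivity with the downward closure of $N$ and Proposition~\ref{prop:totality} to trap a preimage between contradictory bounds. The only cosmetic differences are that you package the argument as strong induction (which absorbs the base case into the step) and take the preimage of $n$ rather than of $f(n{+}1)-1$ as the paper does; both choices work equally well.
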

  
\paragraph{Event traces:}
Let $\eventSet$ denotes a possibly infinite set $\eventSet$ of events, called the \emph{event universe}.
An event trace over the event universe $\eventSet$ is a partial function $\fun{\evtr}{\nat}{\eventSet}$ s.t.
for all $n\in\nat$, if $n+1\in\dom(\evtr)$, then $n\in\dom(\evtr)$. We call $\evtr$ \emph{finite}/\emph{infinite}
iff $\dom(\evtr)$ is finite/infinite, respectively; when $\evtr$ is finite, its length $\lgth{\evtr}$ coincides with the cardinality
of $\dom(\evtr)$, while $\lgth{\evtr}$ is undefined for infinite traces $\evtr$.
From the definitions above one can easily deduce that if $\evtr$ is finite, then $\dom(\evtr)=\{n\in\nat \mid n < \lgth{\evtr}\}$.
We denote with
$\emptyevtr$ the unique trace over $\eventSet$ s.t. $\lgth{\emptyevtr}=0$; when not ambiguous, we denote with $\ev$ the
trace $\evtr$ s.t. $\lgth{\evtr}=1$ and $\evtr(0)=\ev$.

For simplicity, in the rest of the paper we implicitly assume that all considered event traces are defined over the same event universe.

\paragraph{Concatenation:}
The concatenation $\evtr_1\catop\evtr_2$ of event trace $\evtr_1$ and $\evtr_2$  is the trace $\evtr$ s.t.
\begin{itemize}
\item if $\evtr_1$ is infinite, then $\evtr=\evtr_1$; 
\item if $\evtr_1$ is finite, then $\evtr(n)=\evtr_1(n)$ for all $n\in\dom(\evtr_1)$,
  $\evtr(n+\lgth{\evtr_1})=\evtr_2(n)$ for all $n\in\dom(\evtr_2)$, and if $\evtr_2$ is finite, then
  $\dom(\evtr)=\{n\mid n < \lgth{\evtr_1}+\lgth{\evtr_2}\}$.
\end{itemize}
From the definition above one can easily deduce that $\emptyevtr$ is the identity of $\catop$, and that
$\evtr_1\catop\evtr_2$ is infinite iff $\evtr_1$ or $\evtr_2$ is infinite.
The trace $\evtr_1$ is a prefix of $\evtr_2$, denoted with $\evtr_1\pref\evtr_2$, iff there exists $\evtr$ s.t. $\evtr_1\catop\evtr=\evtr_2$.
If $\evtrSet_1$ and $\evtrSet_2$ are two sets of event traces over $\eventSet$, then $\evtrSet_1\catop\evtrSet_2$ is the
set $\{ \evtr_1\catop\evtr_2 \mid \evtr_1\in\evtrSet_1, \evtr_2\in\evtrSet_2\}$. We write $\evtr_1\pref\evtrSet$ to mean that
there exists $\evtr_2\in\evtrSet$ s.t. $\evtr_1\pref\evtr_2$.

\paragraph{Shuffle:}
The shuffle $\evtr_1\shuffleop\evtr_2$ of event trace $\evtr_1$ and $\evtr_2$ is the set of traces $\evtrSet$ s.t. $\evtr\in\evtrSet$ iff
$\dom(\evtr)$ can be partitioned into $N_1$ and $N_2$ in such a way that there exist two strictly increasing and bijective\footnote{Actually, the sufficient condition is surjectivity, but bijectivity can be derived from the fact that the functions are strictly increasing over natural numbers.} partial functions
$\fun{f_1}{\dom(\evtr_1)}{N_1}$ and $\fun{f_2}{\dom(\evtr_2)}{N_2}$ s.t.
\begin{itemize}
\item[] $\evtr_1(n_1)=\evtr(f_1(n_1))$ and $\evtr_2(n_2)=\evtr(f_2(n_2))$, for all $n_1\in\dom(\evtr_1)$, $n_2\in\dom(\evtr_2)$.
%\item $\evtr_2(n)=\evtr(f_2(n))$, for all $n\in\dom(\evtr_2)$.  
\end{itemize}  
From the definition above, the definition of $\emptyevtr$ and Proposition~\ref{prop:identity} one can deduce that
$\emptyevtr\shuffleop\evtr=\evtr\shuffleop\emptyevtr=\{\evtr\}$; it is easy to show that for all $\evtr\in\evtr_1\shuffleop\evtr_2$,
$\evtr$ is infinite iff $\evtr_1$ or $\evtr_2$ is infinite, and $\lgth{\evtr}=n$ iff $\lgth{\evtr_1}=n_1$, $\lgth{\evtr_2}=n_2$ and $n=n_1+n_2$. 

If $\evtrSet_1$ and $\evtrSet_2$ are two sets of event traces over $\eventSet$, then $\evtrSet_1\shuffleop\evtrSet_2$ is the
set $\bigcup_{\evtr_1\in\evtrSet_1,\evtr_2\in\evtrSet_2}(\evtr_1\shuffleop\evtr_2)$.

\paragraph{Left-preferential shuffle:}
The left-preferential shuffle $\evtr_1\lpshuffleop\evtr_2$ of event trace $\evtr_1$ and $\evtr_2$ is the set of traces
$\evtrSet\subseteq\evtr_1\shuffleop\evtr_2$ s.t. $\evtr\in\evtrSet$ iff
$\dom(\evtr)$ can be partitioned into $N_1$ and $N_2$ in such a way that there exist two strictly increasing and bijective partial functions
$\fun{f_1}{\dom(\evtr_1)}{N_1}$ and $\fun{f_2}{\dom(\evtr_2)}{N_2}$ s.t.
\begin{itemize}
\item $\evtr_1(n_1)=\evtr(f_1(n_1))$ and $\evtr_2(n_2)=\evtr(f_2(n_2))$, for all $n_1\in\dom(\evtr_1)$, $n_2\in\dom(\evtr_2)$;  
%%\item $\evtr_1(n)=\evtr(f_1(n))$, for all $n\in\dom(\evtr_1)$;
%%\item $\evtr_2(n)=\evtr(f_2(n))$, for all $n\in\dom(\evtr_2)$;
\item for all $n_2\in\dom(\evtr_2)$, if $m=\min\{ n_1\in\dom(\evtr_1) \mid f_2(n_2)<f_1(n_1) \}$, then $\evtr_1(m)\neq\evtr_2(n_2)$.  
\end{itemize}  
In the definition above, if\footnote{This happens iff in $\evtr$ all events of $\evtr_1$ precede position $n_2$, hence, event $\evtr_2(n_2)$.} $\{ n_1\in\dom(\evtr_1) \mid f_2(n_2)<f_1(n_1) \}=\emptyset$, then the second condition trivially holds.

\added{As an example, if we have two traces of events $\evtr_1 = \ev_1\catop\ev_2$, and $\evtr_2 = \ev_2\catop\ev_3$, by applying the left-preferential shuffle we obtain the set of traces $\evtr_1\lpshuffleop\evtr_2 = \{ \ev_1\catop\ev_2\catop\ev_2\catop\ev_3, \ev_2\catop\ev_3\catop\ev_1\catop\ev_2, \ev_2\catop\ev_1\catop\ev_3\catop\ev_2, \ev_2\catop\ev_1\catop\ev_2\catop\ev_3 \}$. With respect to $\evtr_1\shuffleop\evtr_2$, the trace $\ev_1\catop\ev_2\catop\ev_3\catop\ev_2$ has been excluded, since this can be obtained only when the first occurrence of $\ev_2$ belongs to $\evtr_2$; formally, this
  correponds to the functions $\fun{f_1}{\{0,1\}}{\{0,3\}}$ and $\fun{f_2}{\{0,1\}}{\{1,2\}}$ s.t. $f_1(0)=0$, $f_1(1)=3, f_2(0)=1, f_2(1)=2$, which
  satisfy the first item of the definition, but not the second, because $\min\{ n_1\in\{0,1\} \mid f_2(0)=1<f_1(n_1) \}=1$ and
  $\evtr_1(1)=\ev_2=\evtr_2(0)$; the functions $f'_1$ and $f'_2$ s.t. $f'_1(0)=0$, $f'_1(1)=1, f'_2(0)=3, f'_2(1)=2$ satisfy both items, but $f'_2$ is \textbf{not}
strictly increasing.}

\paragraph{Generalized left-preferential shuffle:}
Given a set of event traces $\evtrSet$, the generalized left-preferential shuffle $\evtr_1\glpshuffleop{\evtrSet}\evtr_2$ of event trace $\evtr_1$ and $\evtr_2$ w.r.t. $\evtrSet$ is the set of traces $\evtrSet'\subseteq\evtr_1\lpshuffleop\evtr_2$ s.t. $\evtr\in\evtrSet'$ iff
$\dom(\evtr)$ can be partitioned into $N_1$ and $N_2$ in such a way that there exist two strictly increasing and bijective partial functions
$\fun{f_1}{\dom(\evtr_1)}{N_1}$ and $\fun{f_2}{\dom(\evtr_2)}{N_2}$ s.t.
\begin{itemize}
\item $\evtr_1(n_1)=\evtr(f_1(n_1))$ and $\evtr_2(n_2)=\evtr(f_2(n_2))$, for all $n_1\in\dom(\evtr_1)$, $n_2\in\dom(\evtr_2)$;  
%%\item $\evtr_1(n)=\evtr(f_1(n))$, for all $n\in\dom(\evtr_1)$;
%%\item $\evtr_2(n)=\evtr(f_2(n))$, for all $n\in\dom(\evtr_2)$;
\item for all $n_2\in\dom(\evtr_2)$, if $m=\min\{ n_1\in\dom(\evtr_1) \mid f_2(n_2)<f_1(n_1) \}$, then $\evtr'(m)\neq\evtr_2(n_2)$ for all
$\evtr'\in\evtrSet$ s.t. $m\in\dom(\evtr')$.  
\end{itemize}  
From the definitions of the shuffle operators above one can easily deduce that
$\evtr_1\glpshuffleop{\emptyset}\evtr_2=\evtr_1\shuffleop\evtr_2$ and $\evtr_1\glpshuffleop{\{\evtr_1\}}\evtr_2=\evtr_1\lpshuffleop\evtr_2$, 
for all event traces $\evtr_1$, $\evtr_2$.
\added{This generalisation of the left-preferential shuffle is needed to define the compositional semantics of the shuffle in Section~\ref{sec:comp}. Let us consider $T_1=\{\ev_1\catop\ev_2,\ev_3\catop\ev_4\}$ and $T_2=\{\ev_1\catop\ev_5\}$; one might be tempted to define
  $T_1 \lpshuffleop T_2$ as the set $\{ \evtr \;|\; \evtr_1\in T_1, \evtr_2\in T_2, \evtr\in\evtr_1\lpshuffleop\evtr_2\}$, which corresponds to $\{ \ev_1\catop\ev_2\catop\ev_1\catop\ev_5, \ev_1\catop\ev_1\catop\ev_2\catop\ev_5, \ev_1\catop\ev_1\catop\ev_5\catop\ev_2, \ev_3\catop\ev_4\catop\ev_1\catop\ev_5, \ev_3\catop\ev_1\catop\ev_4\catop\ev_5, \ev_3\catop\ev_1\catop\ev_5\catop\ev_4, \ev_1\catop\ev_5\catop\ev_3\catop\ev_4, \ev_1\catop\ev_3\catop\ev_4\catop\ev_5, \ev_1\catop\ev_3\catop\ev_5\catop\ev_4 \}$. But, the last three traces, where $\ev_1$ is consumed from $T_2$ as first event, are not correct,
  because the event $\ev_1$ in $T_1$ must take the precedence. Thus, the correct definition is given by 
$\{ \evtr \;|\; \evtr_1\in T_1, \evtr_2\in T_2, \evtr\in\evtr_1\glpshuffleop{\evtrSet_1}\evtr_2\}$, which does not contain the three traces mentioned above.} 

\section{The \rml trace calculus}\label{sec:calculus}

In this section we define the operational semantics of the trace calculus
on which \rml is based on.
An \rml specification is compiled into a term of the trace calculus, which is used as an Intermediate Representation, and then
a SWI-Prolog\footnote{\url{http://www.swi-prolog.org/}}
monitor is generated; its execution employs the interpreter of the trace calculus, whose SWI-Prolog implementation
is directly driven by the reduction rules defining the labeled transition system of the calculus.

\paragraph{Syntax.} %The syntax of the calculus is specified as follows.
The syntax of the calculus is defined in Figure~\ref{fig:syn}.
\begin{figure}
\input{syntax}
\caption{Syntax of the \rml trace calculus: $\eventTy$ is defined inductively, $\te$ is defined coinductively on the set of cyclic terms.}\label{fig:syn}
\end{figure}
The main basic building block of the calculus is provided by the notion of \emph{event type pattern}, an expression 
consisting of a name $\evTyName$ of an \emph{event type}, applied to arguments which are \emph{basic data expressions}
denoting either variables or the data values (of primitive, array, or object type) associated with the events perceived by the monitor.
An event type is a predicate which defines a possibly infinite set of events; an event type pattern specifies the set of events that are expected to
occur at a certain point in the event trace; since event type patterns
can contain variables, upon a successful match a substitution is computed to bind the variables of the pattern with the data values carried by the
matched event.

\rml is based on a general object model where events are represented as 
JavaScript object literals;
for instance, the event type \lstinline{open($\mathit{fd}$)} of arity 1 may represent all events 
stating `function call \lstinline{fs.open} has returned file descriptor  $\mathit{fd}$' and having shape
\lstinline!{event:'func_post', name:'fs.open', res:$\mathit{fd}$}!.
The argument $\mathit{fd}$ consists of the file descriptor (an integer value)
returned by a call to \lstinline{fs.open}. The definition is parametric in the variable $\mathit{fd}$
which can be bound only when the corresponding event is matched with the information of the
file descriptor associated with the property \lstinline{res}; for instance, \lstinline{open(42)}
matches all events of shape \lstinline!{event:'func_post', name:'fs.open', res:42}!, that is, all
returns from call to \lstinline{fs.open} with value 42.  

Despite \rml offers to the users the possibility to define the event types that are used in the specification,
for simplicity the calculus is independent of the language used to define event types; correspondingly,
the definition of the rewriting system of the calculus
is parametric in the relation $\mtch$ assigning a semantics to event types (see below).

A specification is represented by a trace expression $\te$ built on top of the constant
$\emptyTrace$ (denoting the singleton set with the empty trace),
event type patterns $\eventTy$ (denoting the sets of all traces of length 1 with events matching
$\eventTy$), the binary operators (able to combine together sets of traces) of concatenation (juxtaposition), intersection ($\andop$),
union ($\orop$) and shuffle ($\shuffleop$), and a let-construct to define the scope of variables used in event type patterns.

Differently from event type patterns, which are inductively defined terms, 
trace expressions are assumed to be cyclic (a.k.a. regular or rational) \cite{Courcelle83,FrischEtAl08,AnconaC14,AnconaC16}
to provide an abstract support to recursion, since no explicit constructor is needed for it:
the depth of a tree corresponding to a trace expression is allowed to be infinite, but the number of its different subtrees must be finite.
This condition is proved to be equivalent \cite{Courcelle83} to requiring that a trace expression
can always be defined by a \emph{finite} set\footnote{The internal representation of cyclic terms in SWI-Prolog is indeed based on such approach.} of possibly recursive syntactic equations.

Since event type patterns are inductive terms, the definition of free variables for them is standard.
\begin{definition}\label{def:fv}
The set of free variables $\pfv(\eventTy)$ occurring in an event type pattern $\eventTy$ is inductively defined as follows:
$$
\begin{array}{l}
\pfv(\dvar)=\{\dvar\} \qquad \pfv(\pl)=\emptyset \\  
\pfv(\evTyName(\bv_1,\ldots,\bv_n))=\pfv(\{ \pk_1{:}\bv_1,\ldots,\pk_n{:}\bv_n \})=\pfv([\bv_1,\ldots,\bv_n])=\bigcup_{i=1\ldots n}\pfv(\bv_i)
\end{array}
$$
\end{definition}

Given their cyclic nature, a similar inductive definition of free variables for trace expressions does not work;
for instance, if \lstinline[basicstyle=\ttfamily\normalsize]{$\te=$open($\mathit{fd}$)$\catop\te$}, a definition of $\fv$ given by induction
on trace expressions would work only for non-cyclic terms and would be undefined for $\fv(\te)$. Unfortunately,
neither a coinductive definition could work correctly since the set $S$ returned by $\fv(\te)$ has to satisfy
the equation $S=\{\mathit{fd}\}\cup S$ which has \added{infinitely many solutions}; hence, while an inductive definition of $\fv$ leads to
a partial function which is undefined for all cyclic terms, a coinductive definition results in a non-functional relation $\fv$;
luckily, such a relation always admits the ``least solution'' which corresponds to the intended semantics.

\begin{fact}
Let $\fvp$ be the predicate on trace expressions and set of variables, coinductively defined as follows:
$$
\begin{array}{l}
  \RuleNoName{}{\fvp(\emptyTrace,\emptyset)}{} \qquad \RuleNoName{}{\fvp(\eventTy,S)}{\pfv(\eventTy)=S} \qquad
  \RuleNoName{\fvp(\te,S)}{\fvp(\block{\dvar}{\te},S\setminus\{\dvar\})}{}\qquad
  \RuleNoName{\fvp(\te_1,S_1)\quad\fvp(\te_2,S_2)}{\fvp(\te_1\op\te_2,S_1\cup S_2)}{\op\in\{\shuffleop,\catop,\andop,\orop\}} 
\end{array}
$$
Then, for any trace expression $\te$, if $L=\bigcap\{S \mid \fvp(\te,S) \mbox{ holds}\}$, then $\fvp(\te,L)$ holds.    
\end{fact}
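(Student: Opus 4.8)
The plan is to read the four inference rules as defining $\fvp$ as the greatest fixed point $\nu\Phi$ of the associated monotone operator $\Phi$ on relations between trace expressions and sets of variables, so that the post-fixed points of $\Phi$ (the relations $R$ with $R\subseteq\Phi(R)$) are exactly the relations closed backwards under the rules. From Knaster--Tarski we obtain the fixed-point equation $\fvp=\Phi(\fvp)$; since the four syntactic forms of a trace expression are pairwise distinct, this gives the \emph{inversion principle}: $\fvp(\te',S)$ holds iff either $\te'=\emptyTrace$ and $S=\emptyset$; or $\te'=\eventTy$ and $S=\pfv(\eventTy)$; or $\te'=\block{\dvar}{\te''}$ and $S=S''\setminus\{\dvar\}$ for some $S''$ with $\fvp(\te'',S'')$; or $\te'=\te_1\op\te_2$ with $\op\in\{\shuffleop,\catop,\andop,\orop\}$ and $S=S_1\cup S_2$ for some $S_1,S_2$ with $\fvp(\te_1,S_1)$ and $\fvp(\te_2,S_2)$. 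We also have the \emph{coinduction principle}: every post-fixed point of $\Phi$ is contained in $\fvp$. The whole argument then amounts to producing one suitable post-fixed point.

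First I would record that $\{S\mid\fvp(\te',S)\}$ is nonempty for \emph{every} trace expression $\te'$; without this, $L$ would be the full set of variables whenever $\te$ admits no solution, and the statement would fail for a cyclic term built purely from binders. This is itself a small coinductive fact: let $g(\te')$ be the (finite) set of variables $x$ such that the finite term graph of $\te'$ contains a path from the root to an event type pattern $\eventTy$ with $x\in\pfv(\eventTy)$ crossing no binder for $x$; then $\{(\sigma,g(\sigma)) : \sigma \text{ a subterm of } \te'\}$ is closed backwards under the rules, because the clauses specialise to $g(\emptyTrace)=\emptyset$, $g(\eventTy)=\pfv(\eventTy)$, $g(\block{\dvar}{\sigma})=g(\sigma)\setminus\{\dvar\}$ and $g(\sigma_1\op\sigma_2)=g(\sigma_1)\cup g(\sigma_2)$. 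By coinduction this relation is contained in $\fvp$, so $\fvp(\te',g(\te'))$ holds. (Intuitively $g$ is the intended free-variable function, i.e.\ the ``least solution'', and this also shows $L_{\te'}\subseteq g(\te')$, though we do not need that.)

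For the statement itself, set $L_{\te'}:=\bigcap\{S\mid\fvp(\te',S)\}$ for each trace expression $\te'$, and take $R:=\{(\te',L_{\te'}) : \te' \text{ a trace expression}\}$. The goal is $R\subseteq\Phi(R)$: then $R\subseteq\fvp$ by coinduction, and since $(\te,L_{\te})\in R$ and $L=L_{\te}$, we get $\fvp(\te,L)$. One proceeds by cases on the shape of $\te'$, using inversion to describe $\{S\mid\fvp(\te',S)\}$ explicitly — respectively $\{\emptyset\}$, $\{\pfv(\eventTy)\}$, $\{S''\setminus\{\dvar\}\mid\fvp(\te'',S'')\}$, and $\{S_1\cup S_2\mid\fvp(\te_1,S_1),\ \fvp(\te_2,S_2)\}$. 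The first two cases are immediate. For the last two, taking intersections and invoking the set identities $\bigcap_i(S_i\setminus\{\dvar\})=(\bigcap_i S_i)\setminus\{\dvar\}$ and $\bigcap_{i,j}(S_i\cup T_j)=(\bigcap_i S_i)\cup(\bigcap_j T_j)$ yields $L_{\block{\dvar}{\te''}}=L_{\te''}\setminus\{\dvar\}$ and $L_{\te_1\op\te_2}=L_{\te_1}\cup L_{\te_2}$; since $(\te'',L_{\te''})\in R$ and $(\te_i,L_{\te_i})\in R$, these are precisely the memberships $(\block{\dvar}{\te''},L_{\te''}\setminus\{\dvar\})\in\Phi(R)$ and $(\te_1\op\te_2,L_{\te_1}\cup L_{\te_2})\in\Phi(R)$ demanded by the definition of $\Phi$.

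The place where I expect the actual work to concentrate is in those two set identities rather than in the coinductive bookkeeping. The distributivity $\bigcap_{i,j}(S_i\cup T_j)=(\bigcap_i S_i)\cup(\bigcap_j T_j)$ is false for arbitrary families and must be derived from the fact that here the index set is a full product: if $x$ lies outside some $S_{i^{\ast}}$ and outside some $T_{j^{\ast}}$, then $x$ lies outside $S_{i^{\ast}}\cup T_{j^{\ast}}$, which is one of the sets being intersected. And the identity $\bigcap_i(S_i\setminus\{\dvar\})=(\bigcap_i S_i)\setminus\{\dvar\}$ is exactly where the nonemptiness lemma is used: over an empty family the left-hand side is the set of all variables whereas the right-hand side omits $\dvar$, so the $\block{\dvar}{\te''}$ case would break without it. Everything else is a routine unfolding of the coinductive definition.
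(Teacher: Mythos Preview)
Your proposal is correct and follows the same outline the paper gives---``case analysis on $\te$ and coinduction on the definition of $\fvp(\te,S)$''---but the paper leaves that as a one-line hint, whereas you have supplied the actual mechanics: setting up $\Phi$, deriving inversion from the fixed-point equation, exhibiting $R=\{(\te',L_{\te'})\}$ as a post-fixed point, and reducing each case to the appropriate set-theoretic identity. In particular, your observation that the $\block{\dvar}{\te''}$ case depends on the nonemptiness of $\{S\mid\fvp(\te'',S)\}$ (and your separate coinductive construction of a witness $g$ to establish it) is a point the paper's sketch does not surface; without it the identity $\bigcap_i(S_i\setminus\{\dvar\})=(\bigcap_i S_i)\setminus\{\dvar\}$ would indeed fail and the argument would not close. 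So: same approach, but you have made explicit a lemma the paper takes for granted.
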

\begin{proof}
By case anaysis on $\te$ and coinduction on the definition of $\fvp(\te,S)$. 
\end{proof}
\begin{definition}
The set of free variables $\fv(\te)$ occurring in a trace expression is defined by $\fv(\te)=\bigcap\{S \mid \fvp(\te,S) \mbox{ holds}\}$.
\end{definition}

\paragraph{Semantics.}
\begin{figure}
\input{semantics}
\caption{Transition system for the trace calculus.}
\label{fig:semantics}
\end{figure}
The semantics of the calculus depends on three judgments, inductively defined by the inference rules in Figure \ref{fig:semantics}.
Events $\ev$ range over a fixed universe of events $\eventSet$.
The judgment $\der\isEmpty(\te)$ is derivable iff $\te$ accepts the empty trace $\emptyevtr$ and is auxiliary to
the definition of the other two judgments $\te_1\extrans{\ev}\te_2;\subs$ and $\te\notrans{\ev}$;
the rules defining it are straightforward and are independent from the remaining judgments, hence
a stratified approach is followed and $\der\isEmpty(\te)$ and its negation $\notder\isEmpty(\te)$ are safely used
in the side conditions of the rules for $\te_1\extrans{\ev}\te_2;\subs$ and $\te\notrans{\ev}$ (see below).

The judgment $\te_1\extrans{\ev}\te_2;\subs$ defines the single reduction steps of the labeled
transition system on which the semantics of the calculus is based;
$\te_1\extrans{\ev}\te_2;\subs$ is derivable iff the event $\ev$ can be consumed, with the generated substitution $\subs$, by the expression
$\te_1$, which then reduces to $\te_2$. The judgment $\te\notrans{\ev}$
is derivable iff there are no reduction steps for event $\ev$ starting from expression $\te$ and is needed to
enforce a deterministic semantics and to guarantee that the rules are monotonic and, hence, the existence of the least fixed-point;
the definitions of the two judgments are mutually recursive. 

Substitutions are finite partial maps from variables to data values which
are produced by successful matches of event type patterns; the domain of $\subs$ and the empty substitution are denoted by $\dom(\subs)$ and
$\emptysub$, respectively, while $\restrict{\subs}{\dvar}$ and $\remove{\subs}{\dvar}$ denote
the substitutions obtained from $\subs$ by restricting its domain to $\{\dvar\}$ and removing $\dvar$ from its domain, respectively.
We simply write $\te_1\extrans{\ev}\te_2$ to mean $\te_1\extrans{\ev}\te_2;\emptysub$.
Application of a substitution $\subs$ to an event type patter $\eventTy$ is denoted by $\subs\eventTy$, and defined by induction on $\eventTy$:
$$
\begin{array}{l}
\subs\dvar=\subs(\dvar)\mbox{ if $\dvar\in\dom(\subs)$, } \subs\dvar=\dvar \mbox{ otherwise} \qquad \subs\pl=\pl\\
\subs\{ \pk_1{:}\bv_1,\ldots,\pk_n{:}\bv_n \}=\{ \pk_1{:}\subs\bv_1,\ldots,\pk_n{:}\subs\bv_n \} \qquad
\subs[\bv_1,\ldots,\bv_n]=[\subs\bv_1,\ldots,\subs\bv_n] \\
\subs\evTyName(\bv_1,\ldots,\bv_n)=\evTyName(\subs\bv_1,\ldots,\subs\bv_n)
\end{array}
$$

Application of a substitution $\subs$ to a trace expression $\te$ is denoted by $\subs\te$, and defined by coinduction %\footnote{We recall that terms are rational.}
on $\te$:
$$
\begin{array}{l}
  \subs\emptyTrace=\emptyTrace \qquad \subs\eventTy=\subs\evTyName(\bv_1,\ldots,\bv_n) \mbox{ if $\eventTy=\evTyName(\bv_1,\ldots,\bv_n)$}  \\
  \subs(\te_1\op\te_2)=\subs\te_1\op\subs\te_2 \mbox{ for $\op\in\{\catop,\andop,\orop,\shuffleop\}$} \qquad \subs\block{\dvar}{\te}=\block{\dvar}{\remove{\subs}{\dvar}\te}
\end{array}
$$
%%while $\subs\te$ is the homomorphic extension for the remaining cases.
Since the calculus does not cover event type definitions,
the semantics of event types is parametric in the auxiliary partial function $\mtch$, used in the side condition of rules
(prefix) and (n-prefix): $\mtch(\ev,\eventTy)$ returns the substitution $\subs$ iff event $\ev$ matches event type $\subs\eventTy$
and fails (that is, is undefined) iff there is no substitution $\subs$ for which $\ev$ matches $\subs\eventTy$. The substitution is expected
to be the most general one and, hence, its domain to be included in the set of free variables in $\eventTy$ (see Def.~\ref{def:fv}).

As an example of how $\mtch$ could be derived from the definitions of event types in \rml, if
we consider again the event type \lstinline{open($\mathit{fd}$)} and $\ev$=\lstinline!{event:'func_post', name:'fs.open', res:42}!, then
\lstinline{$\mtch(\ev,$open(fd)$)=\{$fd$\mapsto 42\}$}, while \lstinline{$\mtch(\ev,$open(23)$)$} is undefined.

Except for intersection, which is intrinsically deterministic since both operands need to be reduced, the rules defining the semantics of the other binary
operators depend on the judgment  $\te\notrans{\ev}$ to force determinism; in particular, the judgment is used to ensure a left-to-right evaluation
strategy: reduction of the right operand is possible only if the left hand side cannot be reduced. 

The side condition of rule (and) uses the partial binary operator $\subsMerge$ to merge substitutions:
$\subs_1\subsMerge\subs_2$ returns the union of 
$\subs_1$ and $\subs_2$, if they coincide on the intersection of their domains, and is undefined otherwise. 

Rule (cat-r) uses the judgment $\isEmpty(\te_1)$ in its side condition: event $\ev$ consumed by $\te_2$ 
can also be consumed by $\te_1\catop\te_2$ only if $\ev$ is not consumed by $\te_1$ (premise $\te_1\notrans{\ev}$
forcing left-to-right deterministic reduction), and the empty trace is accepted by $\te_1$ (side condition $\der\isEmpty(\te_1)$).

Rule (par-t) can be applied when variable $\dvar$ is in the domain of the 
substitution $\subs$ generated by the reduction step from $\te$ to $\te'$:
the substitution $\restrict{\subs}{\dvar}$ restricted to $\dvar$  is applied to $\te'$, and $\dvar$ is removed from the domain of $\subs$, together with its corresponding declaration.
If $\dvar$ is not in the domain of $\subs$ (rule (par-f)), no substitution and no declaration removal is performed.

%% Since monitors can be on line, and can check non terminating programs, the semantics of a specification may include also infinite traces;
%% an event trace $\bar{\ev}\in\eventSet^*\cup\eventSet^\omega$ is a possibly infinite sequence of events; the empty trace is denoted by
%% $\emptyTrace$, while $\ev\bar{\ev}$ denotes the trace consisting of the first event $\ev$, followed by the rest of the trace $\bar{\ev}$.

The rules defining $\te\notrans{\ev}$ are complementary to those for $\te\extrans{\ev}\te'$, and the definition of $\te\notrans{\ev}$
depends on the judgment $\te\extrans{\ev}\te'$ because of rule (n-and):
there are no reduction steps for event $\ev$ starting from expression $\te_1\andop\te_2$, even when
$\te_1\extrans{\ev}\te'_1;\subs_1$ and $\te_2\extrans{\ev}\te'_2;\subs_2$ are derivable, if the two
generated substitutions $\subs_1$ and $\subs_2$ cannot be successfully merged together; this happens when
there are two event type patterns that match event $\ev$ for two incompatible values of the same variable. 

\added{Let us consider an example of a cyclic term with the let-construct: $\te=\block{\fd}{open(\fd) \catop close(\fd) \catop \te}$. The
  trace expression declares a local variable $\fd$ (the file descriptor), and requires that two immediately subsequent open and close events share the same
  file descriptor. Since the recursive occurrence of $\te$ contains a nested let-construct, the subsequent open and
  close events can involve a different file descriptor, and this can happen an infinite number of times. In terms of derivation,
  starting from $\te$, if the event \lstinline{\{event:'func_post', name:'fs.open', res:42\}} is observed, which matches \lstinline{open(42)}, then the substitution $\{\fd \mapsto 42\}$ is computed. As a consequence, the residual term $close(42) \catop \te$ is obtained, by substituting
  $\fd$ with $42$ and removing the let-block. After that, the only valid event which can be observed is \lstinline{\{event:'func_pre',name:'close',args:[42]\}}, matching $close(\fd)$. Thus, after this rewriting step we get $\te$ again; the behavior continues as before, but
  a different file descriptor can be matched because of the let-block which hides the outermost declaration of $\fd$; indeed,
  the substitution is not propagated inside the nested let-block. Differently from $\te$, the term $\block{\fd}{\te'}$ with
  $\te'=open(\fd) \catop close(\fd) \catop \te'$ would require all open and close events to match a unique global file descriptor. As
  further explained in Section~\ref{sect:related_work}, such example shows how the let-construct is a solution more flexible than the mechanism of trace slicing
used in other RV tools to achieve parametricty.}

The following lemma can be proved by induction on the rules defining $\te\extrans{\ev}\te';\subs$.
\begin{lemma}\label{lemma:vars}
If $\te\extrans{\ev}\te';\subs$ is derivable, then $\dom(\subs)\cup\fv(\te')\subseteq\fv(\te)$.
\end{lemma}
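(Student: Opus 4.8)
The plan is to proceed, as announced, by rule induction on the derivation of $\te\extrans{\ev}\te';\subs$; the premises of the shape $\te_i\notrans{\ev}$ occurring in rules (or-r), (shuffle-r) and (cat-r) play no role here, since they carry no useful induction hypothesis and are not needed. Before the induction I would record how $\fv$ commutes with the term constructors: $\fv(\emptyTrace)=\emptyset$, $\fv(\eventTy)=\pfv(\eventTy)$, $\fv(\te_1\op\te_2)=\fv(\te_1)\cup\fv(\te_2)$ for $\op\in\{\catop,\andop,\orop,\shuffleop\}$, and $\fv(\block{\dvar}{\te})=\fv(\te)\setminus\{\dvar\}$. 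The first two are immediate, as the corresponding clauses of $\fvp$ pin down the set component. For the other two, the inclusion $\subseteq$ follows from the Fact above, which gives $\fvp(\te_1,\fv(\te_1))$ and $\fvp(\te_2,\fv(\te_2))$, hence $\fvp(\te_1\op\te_2,\fv(\te_1)\cup\fv(\te_2))$, so the intersection defining $\fv(\te_1\op\te_2)$ is included in $\fv(\te_1)\cup\fv(\te_2)$; for $\supseteq$, whenever $\fvp(\te_1\op\te_2,S)$ holds the only clause that can justify it (since $\te_1\op\te_2$ is syntactically neither $\emptyTrace$, nor an event type pattern, nor a $\block{\dvar}{\cdot}$) forces $S=S_1\cup S_2$ with $\fvp(\te_1,S_1)$ and $\fvp(\te_2,S_2)$, whence $S\supseteq S_1\supseteq\fv(\te_1)$ and $S\supseteq S_2\supseteq\fv(\te_2)$; as this holds for every such $S$, we get $\fv(\te_1)\cup\fv(\te_2)\subseteq\fv(\te_1\op\te_2)$. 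The binder case is analogous. Finally, I would need a small substitution lemma: since substitutions map variables to data values and data values contain no variables, $\fv(\subs\te)\subseteq\fv(\te)\setminus\dom(\subs)$, proved by coinduction on $\te$ alongside $\pfv(\subs\eventTy)\subseteq\pfv(\eventTy)\setminus\dom(\subs)$ on event type patterns, using that $\subs$ descends under $\block{\dvar}{\cdot}$ only after removing $\dvar$ from its domain.

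With these facts in hand the induction is mechanical. For the base case (single), $\te=\eventTy$, $\te'=\emptyTrace$ and $\subs=\mtch(\ev,\eventTy)$, so $\fv(\te')=\emptyset$ and, by the stated assumption that $\mtch$ returns the most general substitution, $\dom(\subs)\subseteq\pfv(\eventTy)=\fv(\eventTy)$. For the rules (or-l), (or-r), (shuffle-l), (shuffle-r), (cat-l), (cat-r) and (and), the residual term is assembled from the residual(s) of the premise(s), and the generated substitution is either that of the single premise or, in (and), the union $\dom(\subs_1)\cup\dom(\subs_2)$ (since $\subs_1\subsMerge\subs_2$ is exactly that union whenever it is defined); the claim then drops out of the induction hypotheses $\dom(\subs_i)\cup\fv(\te'_i)\subseteq\fv(\te_i)$ together with the decomposition $\fv(\te_1\op\te_2)=\fv(\te_1)\cup\fv(\te_2)$ applied to both the source and the residual. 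For rule (par-f), with $\dvar\notin\dom(\subs)$, the induction hypothesis $\dom(\subs)\cup\fv(\te')\subseteq\fv(\te)$ yields $\dom(\subs)\cup\fv(\block{\dvar}{\te'})=\dom(\subs)\cup(\fv(\te')\setminus\{\dvar\})\subseteq\fv(\te)\setminus\{\dvar\}=\fv(\block{\dvar}{\te})$.

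The one case deserving attention is rule (par-t): here $\te=\block{\dvar}{\te_0}$, the premise is $\te_0\extrans{\ev}\te'_0;\subs$ with $\dvar\in\dom(\subs)$, and the conclusion is $\block{\dvar}{\te_0}\extrans{\ev}\restrict{\subs}{\dvar}\te'_0;\remove{\subs}{\dvar}$. From the induction hypothesis $\dom(\subs)\cup\fv(\te'_0)\subseteq\fv(\te_0)$ we get $\dom(\remove{\subs}{\dvar})=\dom(\subs)\setminus\{\dvar\}\subseteq\fv(\te_0)\setminus\{\dvar\}$, and from the substitution lemma (with $\dom(\restrict{\subs}{\dvar})=\{\dvar\}$) $\fv(\restrict{\subs}{\dvar}\te'_0)\subseteq\fv(\te'_0)\setminus\{\dvar\}\subseteq\fv(\te_0)\setminus\{\dvar\}$; taking the union gives $\dom(\remove{\subs}{\dvar})\cup\fv(\restrict{\subs}{\dvar}\te'_0)\subseteq\fv(\te_0)\setminus\{\dvar\}=\fv(\block{\dvar}{\te_0})$, which is exactly the claim. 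I expect the real work to lie not in the transition-rule induction, which is routine, but in the preliminary bookkeeping for $\fv$ on cyclic terms — in particular in deriving the compositional equalities above from the non-functional, coinductively defined relation $\fvp$ — and in the substitution lemma that makes the (par-t) case go through.
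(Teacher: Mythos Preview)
Your proposal is correct and follows exactly the approach the paper indicates: the paper's proof consists of the single sentence ``by induction on the rules defining $\te\extrans{\ev}\te';\subs$'', so your detailed case analysis, together with the auxiliary facts about $\fv$ on cyclic terms and the substitution lemma $\fv(\subs\te)\subseteq\fv(\te)\setminus\dom(\subs)$ (which the paper itself later relies on, split as $\fv(\subs\te)\subseteq\fv(\te)$ and $\fv(\subs\te)\cap\dom(\subs)=\emptyset$, in the proofs of Propositions~\ref{prop:sem-three} and~\ref{prop:sem-four}), simply spells out what the paper leaves implicit.
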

Since trace expressions are cyclic, they can only contain a finite set of free variables,
therefore the domains of all substitutions generated by a possibly infinite sequence of consecutive reduction steps
starting from $\te$ are all contained in $\fv(\te)$.

\subsection{Semantics based on the transition system}

The reduction rules defined above provide the basis for the semantics of the calculus; because of computed substitutions and free variables, 
the semantics of a trace expression is not just a set of event traces: every accepted trace must be equipped with a substitution specifying
how variables have been instantiated during the reduction steps. We call it an \emph{instantiated event trace};
this can be obtained from the pairs of event and substitution traces yield by the possibly infinite reduction steps, by considering the
disjoint union of all returned substitutions. \added{Such a notion is needed\footnote{See the example in Section~\ref{sec:comp}.} to allow a compositional semantics.} The notion of substitution trace can be given in an analogous way as done for event traces in
Section~\ref{sec:back}.
By the considerations related to Lemma~\ref{lemma:vars}, the substitution associated with an instantiated event trace has always a finite domain,
even when the trace is infinite; this means that the
substitution is always fully defined after a finite number of reduction steps.

\begin{definition}
A  \emph{concrete instantiated event trace} over the event universe $\eventSet$ is a pair $(\evtr,\overline{\subs})$ of event traces over $\eventSet$,
and substitution traces s.t. either  $\evtr$ and $\overline{\subs}$ are both infinite, or they are both finite and have the same length, all
the substitutions in  $\overline{\subs}$ have mutually disjoint domains and $\bigcup\{\dom(\subs') \mid \subs'\in\substr\}$ is finite.

An \emph{abstract instantiated event trace} (instantiated event trace, for short) over $\eventSet$ is a pair $(\evtr,\subs)$
where $\evtr$ is an event trace over $\eventSet$ and $\subs$ is a substitution.
We say that $(\evtr,\subs)$ is derived from the concrete instantiated event trace $(\evtr,\substr)$, written $(\evtr,\substr)\leadsto(\evtr,\subs)$, iff
$\subs=\bigcup\{\subs' \mid \subs'\in\substr\}$.
\end{definition}
In the rest of the paper we use the meta-variable $\its$ to denote sets of instantiated event traces.
We use the notations $\tproj{\its}$ and $\sproj{\its}$ to denote the two projections
$\{\evtr \mid (\evtr,\subs)\in\its \}$ and $\{\subs \mid (\evtr,\subs)\in\its \}$, respectively;
we write $\evtr\pref\its$ to mean $\evtr\pref\tproj{\its}$.
The notation $\inft{\its}$ denotes the set $\{(\evtr,\subs) \mid (\evtr,\subs) \in \its, \evtr\ \mathit{infinite} \}$ restricted to infinite traces.

%%For simplicity, from now on we implicitly assume that all considered sets of instantiated event traces are defined over the same event universe.

We can now define the semantics of trace expressions.
\begin{definition}\label{def:sem}
The concrete semantics $\csem{\te}$ of a trace expression $\te$ is the set of concrete instantiated event traces coinductively defined as follows:
\begin{itemize}
\item $(\emptyevtr,\emptysubtr)\in\csem{\te}$ iff $\der\isEmpty(\te)$ is derivable;
\item $(\ev\catop\evtr,\subs\catop\substr)\in\csem{\te}$ iff $\te\extrans{\ev}\te';\subs$ is derivable and $(\evtr,\substr)\in\csem{\subs\te'}$.
%%  $\dom(\subs_2)\subseteq\fv(\te')$ and $\subs=\subs_1\subsMerge\subs_2$. %% and $\te\neq\block{\dvar}{\te''}$.
%% \item $(\ev\catop\evtr',\subs_1)\in\sem{\te}$ iff $\te\extrans{\ev}\te';\subs_1$ is derivable, $(\evtr',\subs_2)\in\sem{\te'}$,
%%   $\dom(\subs_2)\subseteq\fv(\te')$, $\subs_1=\remove{\subs_2}{\dvar}$, and $\te=\block{\dvar}{\te''}$.
\end{itemize}
The (abstract) semantics $\sem{\te}$ of a trace expression $\te$ is the set of instantiated event traces
$\{(\evtr,\subs) \mid  (\evtr,\substr)\in\csem{\te}, (\evtr,\substr)\leadsto(\evtr,\subs) \}$.

\end{definition}
%% The condition $\dom(\subs_2)\subseteq\fv(\te')$ is needed to ensure that the defined semantics is deterministic w.r.t. the computed substitution, that is,
%% if $(\evtr,\subs_1),(\evtr,\subs_2)\in\sem{\te}$, then $\subs_1=\subs_2$
%%\todo{Dave: should be verified, do not know whether we have time/space to prove it}.

The following propositions show that the concrete semantics of a trace expression $\te$ as given in Definition~\ref{def:sem} is always well-defined.

\begin{proposition}\label{prop:sem-one}
If $(\evtr,\substr)\in\csem{\te}$ and $\evtr$ is finite, then $\lgth{\evtr}=\lgth{\substr}$.
\end{proposition}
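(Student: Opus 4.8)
The plan is to prove Proposition~\ref{prop:sem-one} by induction on $n=\lgth{\evtr}$; recall that this proposition is one of the sanity checks establishing that the coinductively defined $\csem{\te}$ really consists of concrete instantiated event traces, so the argument will repeatedly exploit the fact that, $\csem{\te}$ being a fixed point of the two clauses of Definition~\ref{def:sem}, every element of $\csem{\te}$ arises from exactly one of them: either it is $(\emptyevtr,\emptysubtr)$ with $\der\isEmpty(\te)$ derivable, or it has the shape $(\ev\catop\evtr',\subs\catop\substr')$ with $\te\extrans{\ev}\te';\subs$ derivable and $(\evtr',\substr')\in\csem{\subs\te'}$. Since the statement is universally quantified over trace expressions, the induction hypothesis will be applicable to the (possibly different) continuation $\subs\te'$.

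For the base case $n=0$ we have $\evtr=\emptyevtr$. The second clause of Definition~\ref{def:sem} produces only pairs whose event component $\ev\catop\evtr'$ is nonempty (it has length at least $1$ when $\evtr'$ is finite, and is infinite otherwise), so it cannot have produced $(\emptyevtr,\substr)$; hence this pair comes from the first clause, which forces $\substr=\emptysubtr$, and therefore $\lgth{\substr}=0=\lgth{\evtr}$.

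For the inductive step $n=m+1$ we have $\evtr\neq\emptyevtr$, so $(\evtr,\substr)$ cannot come from the first clause (which yields only $(\emptyevtr,\emptysubtr)$) and must come from the second: setting $\ev=\evtr(0)$ and taking $\evtr'$ to be the unique trace with $\evtr=\ev\catop\evtr'$, the clause gives $\substr=\subs\catop\substr'$ for some substitution $\subs$ and substitution trace $\substr'$ with $\te\extrans{\ev}\te';\subs$ derivable and $(\evtr',\substr')\in\csem{\subs\te'}$. Since $\catop$ yields an infinite trace only when one of its operands is infinite, $\evtr'$ is finite with $\lgth{\evtr'}=m$; applying the induction hypothesis to $(\evtr',\substr')\in\csem{\subs\te'}$ gives $\lgth{\substr'}=m$, so in particular $\substr'$ is finite and $\lgth{\substr}=\lgth{\subs\catop\substr'}=1+\lgth{\substr'}=m+1=\lgth{\evtr}$.

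The argument is essentially routine, and I do not expect any serious obstacle; the only point requiring care is the correct use of the coinductive definition, namely that although $\csem{\te}$ is a greatest fixed point, for a \emph{finite} event trace we may legitimately peel off one reduction step at a time and descend on the length of the trace, and that finiteness of the event component propagates to finiteness of the substitution component only \emph{through} the induction hypothesis, not a priori from membership in $\csem{\te}$.
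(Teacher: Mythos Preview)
Your proof is correct and follows exactly the approach indicated in the paper, namely induction on $\lgth{\evtr}$; you have simply spelled out in detail the two cases that the paper leaves implicit.
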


\begin{proposition}\label{prop:sem-two}
If $(\evtr,\substr)\in\csem{\te}$ and $\evtr$ is infinite, then $\substr$ is infinite as well.
\end{proposition}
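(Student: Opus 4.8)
The plan is to read off, from the two clauses of Definition~\ref{def:sem}, that an infinite event trace $\evtr$ forces an infinite unfolding of $\csem{\te}$ in which every reduction step contributes exactly one entry to the substitution trace; hence $\substr$ has domain $\nat$ as well. Concretely: by Definition~\ref{def:sem}, $(\evtr,\substr)\in\csem{\te}$ holds only if one of the two clauses applies; since $\evtr$ is infinite it is in particular non-empty, so the first clause (whose event component is $\emptyevtr$) is excluded and the second one applies. This yields a decomposition $\evtr=\ev\catop\evtr'$, $\substr=\subs\catop\substr'$ together with $\te\extrans{\ev}\te';\subs$ and $(\evtr',\substr')\in\csem{\subs\te'}$, where $\evtr'$ is again infinite --- were it finite, $\evtr=\ev\catop\evtr'$ would be finite too, using that $\evtr_1\catop\evtr_2$ is infinite iff $\evtr_1$ or $\evtr_2$ is (Section~\ref{sec:back}).

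To turn this single-step observation into the statement, I would prove by ordinary induction on $n\in\nat$ the auxiliary claim: \emph{for every trace expression $\te$ and every $(\evtr,\substr)\in\csem{\te}$ with $\evtr$ infinite, $n\in\dom(\substr)$}. For $n=0$: the decomposition above gives $\substr=\subs\catop\substr'$, and since the substitution $\subs$ is read (as in Definition~\ref{def:sem}) as the length-$1$ substitution trace with value $\subs$ at $0$, we get $\substr(0)=\subs$, i.e.\ $0\in\dom(\substr)$. For $n+1$: apply the same decomposition, then the induction hypothesis to $\subs\te'$ and $(\evtr',\substr')\in\csem{\subs\te'}$ --- legitimate because $\evtr'$ is infinite --- to obtain $n\in\dom(\substr')$; since $\lgth{\subs}=1$, the defining equations of $\catop$ give $\substr(n+1)=\substr'(n)$, which is therefore defined, so $n+1\in\dom(\substr)$. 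Having established $n\in\dom(\substr)$ for all $n$, and since $\dom(\substr)$ is downward closed by the definition of substitution trace, $\dom(\substr)=\nat$, i.e.\ $\substr$ is infinite.

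I do not foresee any real obstacle: the entire content is that each reduction step emits exactly one event and exactly one substitution, the rest being elementary bookkeeping about $\catop$ on (substitution) traces. The one point requiring care is the choice of induction measure: the argument must be an induction on the position index $n$ of $\substr$, not a would-be structural induction on the traces themselves, which are not well-founded. Finally, note that strengthening the auxiliary claim to ``$n\in\dom(\evtr)$ iff $n\in\dom(\substr)$'' (dropping the infiniteness hypothesis and splitting on which clause applies) would prove Propositions~\ref{prop:sem-one} and~\ref{prop:sem-two} at the same time; presenting them separately is only a stylistic choice.
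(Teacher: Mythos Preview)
Your proposal is correct and follows essentially the same approach as the paper, which simply states that by induction on $n$ one shows $\substr$ has a prefix of length $n$ for every $n\in\nat$. Your write-up is just a more detailed unfolding of that one-line argument, including the observation that only the second clause of Definition~\ref{def:sem} can apply when $\evtr$ is infinite.
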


\begin{proposition}\label{prop:sem-three}
If $(\evtr,\substr)\in\csem{\te}$, then for all $n,m\in\nat$, $n\neq m$ implies $\dom(\substr(n))\cap\dom(\substr(m))=\emptyset$.
\end{proposition}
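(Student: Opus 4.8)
The plan is to avoid a genuine coinductive argument on $\csem{\te}$: since the statement fixes two indices $n\neq m$, it suffices to unfold the coinductive definition of the concrete semantics a finite number of times and then combine Lemma~\ref{lemma:vars} with one elementary fact about substitution application. That fact is that, for every substitution $\subs$ and every trace expression $\te$, $\fv(\subs\te)\subseteq\fv(\te)\setminus\dom(\subs)$; in particular $\fv(\subs\te)\cap\dom(\subs)=\emptyset$. I would prove it from the definitions of $\subs\te$ and of the relation $\fvp$: given $S$ with $\fvp(\te,S)$ one shows by coinduction that $\fvp(\subs\te,S\setminus\dom(\subs))$ holds, and then takes the intersection over all such $S$. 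The only non-routine case is the let-construct, where $\subs\block{\dvar}{\te}=\block{\dvar}{\remove{\subs}{\dvar}\te}$: here one uses that $\dom(\remove{\subs}{\dvar})=\dom(\subs)\setminus\{\dvar\}$ and that the bound variable $\dvar$ is removed from the free-variable set anyway, so the two sides match up.

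Next, directly from Definition~\ref{def:sem} I would observe that whenever $(\evtr,\substr)\in\csem{\te}$ and $\{0,\dots,k-1\}\subseteq\dom(\substr)$ (which, since $\dom(\substr)$ is downward closed, just means $\substr$ has at least $k$ elements), one obtains a chain of trace expressions $\te=\te_0,\te_1,\dots,\te_k$ and residuals $\te'_1,\dots,\te'_k$ such that $\te_{i}\extrans{\ev_i}\te'_{i+1};\subs_i$, $\te_{i+1}=\subs_i\te'_{i+1}$ and $\substr(i)=\subs_i$ for every $i<k$, and moreover the shifted pair $(\evtr^{(k)},\substr^{(k)})$, where $\evtr^{(k)}(j)=\evtr(k+j)$ and $\substr^{(k)}(j)=\substr(k+j)$, belongs to $\csem{\te_k}$. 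This is a plain induction on $k$ using the second clause of Definition~\ref{def:sem} and the definition of trace concatenation. Along such a chain, Lemma~\ref{lemma:vars} applied to $\te_i\extrans{\ev_i}\te'_{i+1};\subs_i$ gives $\dom(\subs_i)\cup\fv(\te'_{i+1})\subseteq\fv(\te_i)$, while the substitution fact gives $\fv(\te_{i+1})=\fv(\subs_i\te'_{i+1})\subseteq\fv(\te'_{i+1})\setminus\dom(\subs_i)$; hence for every $i<k$ the two invariants $\fv(\te_{i+1})\subseteq\fv(\te_i)$ and $\fv(\te_{i+1})\cap\dom(\subs_i)=\emptyset$ hold, and therefore $\fv(\te_j)\subseteq\fv(\te_{i+1})$ for all $i<j\le k$.

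To conclude, the empty case $(\evtr,\substr)=(\emptyevtr,\emptysubtr)$ is vacuous; otherwise take $n\neq m$ with $\substr(n)$ and $\substr(m)$ defined, assume without loss of generality $n<m$, and unfold $m+1$ times. Then $\dom(\substr(n))=\dom(\subs_n)$, whereas $\dom(\substr(m))=\dom(\subs_m)\subseteq\fv(\te_m)\subseteq\fv(\te_{n+1})$ because $n+1\le m$; since $\fv(\te_{n+1})\cap\dom(\subs_n)=\emptyset$, the two domains are disjoint, as required. As a side remark, this proposition, together with Propositions~\ref{prop:sem-one} and~\ref{prop:sem-two}, is precisely what guarantees that every element of $\csem{\te}$ is a concrete instantiated event trace in the sense of the preceding definition.

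I expect the main obstacle to be exactly the elementary substitution fact $\fv(\subs\te)\subseteq\fv(\te)\setminus\dom(\subs)$: because $\fv$ on cyclic terms is not given by structural induction but as $\bigcap\{S\mid\fvp(\te,S)\}$ over the coinductively defined relation $\fvp$, the argument must be carried out at the level of $\fvp$ and must treat the shadowing performed by the let-construct with care; once it is available, the remaining steps are routine bookkeeping with finite chains.
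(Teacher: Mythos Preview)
Your proposal is correct and follows essentially the same approach as the paper: the paper's proof is the one-line sketch ``by induction on $n+m$, Lemma~\ref{lemma:vars} and the fact that $\fv(\subs\te)\cap\dom(\subs)=\emptyset$'', and you have spelled out exactly these ingredients, including the substitution fact, the finite unfolding of the coinductive definition, and the chain of inclusions coming from Lemma~\ref{lemma:vars}. The only cosmetic difference is that the paper phrases the induction on $n+m$ whereas you unfold $\max(n,m)+1$ times after assuming $n<m$; the underlying argument is the same.
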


\begin{proposition}\label{prop:sem-four}
If $(\evtr,\substr)\in\csem{\te}$, then for all $n\in\nat$ $\dom(\substr(n))\subseteq\fv(\te)$.
\end{proposition}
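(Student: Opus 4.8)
The plan is to prove Proposition~\ref{prop:sem-four} by coinduction on the definition of $\csem{\te}$, following the same pattern that presumably handles Propositions~\ref{prop:sem-one}--\ref{prop:sem-three}. Concretely, I would strengthen the statement slightly so that coinduction goes through cleanly: I would show that the set
\[
R = \{ (\evtr,\substr) \mid \exists\,\te.\ (\evtr,\substr)\in\csem{\te} \text{ and } \forall n\in\dom(\substr).\ \dom(\substr(n))\subseteq\fv(\te) \}
\]
is "closed" under the two clauses defining $\csem{\cdot}$, i.e. that $R$ is a post-fixed point and hence $R \supseteq \{(\evtr,\substr)\mid (\evtr,\substr)\in\csem{\te}\}$ in the relevant sense; since the property we want is exactly membership of the pair $(\evtr,\substr)$ with every $\dom(\substr(n))$ bounded by $\fv(\te)$, this suffices. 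In practice the cleanest phrasing is: assume $(\evtr,\substr)\in\csem{\te}$; we prove $\dom(\substr(n))\subseteq\fv(\te)$ for all $n\in\dom(\substr)$ by induction on $n$, using the coinductive unfolding of $\csem{\te}$ at each step. The base case $n=0$ and the inductive step both reduce to the same local fact about one reduction step.

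The two cases of the coinductive definition are handled as follows. If $\evtr=\emptyevtr$, then $\substr=\emptysubtr$ and there is nothing to prove. Otherwise $\evtr = \ev\catop\evtr'$, $\substr = \subs\catop\substr'$, with $\te\extrans{\ev}\te';\subs$ derivable and $(\evtr',\substr')\in\csem{\subs\te'}$. For position $n=0$ we have $\substr(0)=\subs$, and Lemma~\ref{lemma:vars} gives $\dom(\subs)\subseteq\fv(\te)$, which is exactly what is needed. For positions $n\geq 1$, $\substr(n) = \substr'(n-1)$, and by the inductive hypothesis applied to the pair $(\evtr',\substr')\in\csem{\subs\te'}$ we get $\dom(\substr'(n-1))\subseteq\fv(\subs\te')$; so it remains to check $\fv(\subs\te')\subseteq\fv(\te)$. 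This is the one auxiliary fact I expect to need to spell out: Lemma~\ref{lemma:vars} already gives $\fv(\te')\subseteq\fv(\te)$, and applying the substitution $\subs$ to $\te'$ can only replace free variables of $\te'$ by data values (which contribute no free variables) or leave them untouched, hence $\fv(\subs\te')\subseteq\fv(\te')\subseteq\fv(\te)$. One should verify this last inclusion by a small (co)induction on the structure of $\te'$ using the definitions of $\fv$ and of substitution application to trace expressions — the only subtle clause being $\block{\dvar}{\te''}$, where $\subs$ is replaced by $\remove{\subs}{\dvar}$ and the bound variable $\dvar$ is removed from the free-variable set, so the inclusion is preserved.

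The main obstacle is therefore not the coinductive skeleton, which is routine, but making precise the monotonicity fact $\fv(\subs\te')\subseteq\fv(\te')$ for cyclic terms, since $\fv$ on trace expressions is defined as the least element of a non-functional relation $\fvp$ rather than by plain structural recursion. I would handle this by working with the predicate $\fvp$ directly: show that if $\fvp(\te',S)$ holds then $\fvp(\subs\te',S')$ holds for some $S'\subseteq S$ (by coinduction on the derivation of $\fvp$, case-splitting on the outermost constructor of $\te'$ and using $\pfv(\subs\eventTy)\subseteq\pfv(\eventTy)$ at the leaves), and then take intersections over all such $S$. Granting Lemma~\ref{lemma:vars} and this substitution-monotonicity of $\fv$ — both of which are either already stated or entirely analogous to facts in the excerpt — the proposition follows immediately.
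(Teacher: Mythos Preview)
Your proposal is correct and follows essentially the same route as the paper: the paper's proof is simply ``By induction on $n$, Lemma~\ref{lemma:vars} and the fact that $\fv(\subs\te)\subseteq\fv(\te)$,'' and you unfold exactly this argument, additionally sketching why the auxiliary inclusion $\fv(\subs\te')\subseteq\fv(\te')$ holds for cyclic terms via $\fvp$. One small presentational point: for the inductive hypothesis to apply to the pair $(\evtr',\substr')\in\csem{\subs\te'}$ at index $n-1$, the statement you induct on must be quantified over all $\te,\evtr,\substr$ (not fixed in advance), which you implicitly rely on but do not quite say.
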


\section{Towards a compositional semantics}\label{sec:comp}

In this section we show how each basic trace expression operator can be interpreted as
an operation over sets of instantiated event traces and we formally prove that such an interpretation is
equivalent to the semantics derived from the transition system of the calculus in Definition~\ref{def:sem}, if one considers only
contractive terms.

\subsection{Composition operators}\label{sec:op-sem}

\paragraph{Left-preferential union:}
The left-preferential union $\its_1\corop\its_2$ of sets of instantiated event traces $\its_1$ and $\its_2$ is defined as follows:
$
\its_1\corop\its_2 = \its_1 \bigcup \{(\evtr,\subs) \in \its_2 \mid \evtr=\emptyevtr \mbox{ or } (\evtr=\ev\catop\evtr',\ev \nopref \its_1) \}.
$

In the deterministic left-preferential version of union, instantiated event traces in $\its_2$ are kept only if they start
with an event which is not the first element of \added{any of the traces} in $\its_1$ (the condition vacuously holds for the empty trace); since reduction steps can involve only one of the two operands at time, no restriction on the substitutions of the instantiated event traces is required.

\paragraph{Left-preferential concatenation:}
The left-preferential concatenation $\its_1\ccatop\its_2$ of sets of instantiated event traces $\its_1$ and $\its_2$ is defined as follows:
$
%\begin{array}{l}
\its_1\ccatop\its_2 = %\\
\inft{\its_1} \cup 
\{ (\evtr_1\catop\evtr_2,\subs) \mid (\evtr_1,\subs_1) \in \its_1, (\evtr_2,\subs_2) \in \its_2, \subs=\subs_1\subsMerge\subs_2, (\evtr_2=\emptyevtr \mbox{ or } (\evtr_2=\ev\catop \evtr_3, (\evtr_1\catop\ev) \nopref \its_1)) \}.
%\end{array}
$

The left operand $\inft{\its_1}$ of the union corresponds to the fact that 
in the deterministic left-preferential version of concatenation, all infinite instantiated event traces
in $\its_1$ belong to the semantics of concatenation. The right operand of the union specifies the behavior
for all finite instantiated event traces $\evtr_1$ in $\its_1$; in such cases, the trace in $\its_1\ccatop\its_2$ can continue with 
$\evtr_2$ in $\its_2$ if $\evtr_1$ is not allowed to continue in $\its_1$ with
the first event $\ev$ of $\evtr_2$ ($(\evtr_1\catop\ev) \nopref \its_1$, the condition vacuously holds if $\evtr_2$ is the empty trace).
Since the reduction steps corresponding to $\evtr_2$ follow those for $\evtr_1$, the overall substitution $\subs$ must
meet the constraint $\subs=\subs_1\subsMerge\subs_2$ ensuring that $\subs_1$ and $\subs_2$ match on the shared variables 
of the two operands.
\paragraph{Intersection:}
The intersection $\its_1\candop\its_2$ of sets of instantiated event traces $\its_1$ and $\its_2$ is defined as follows:
$
\its_1\candop\its_2 = 
\{(\evtr,\subs) \mid (\evtr,\subs_1)\in\its_1, (\evtr,\subs_2)\in\its_2,  \subs=\subs_1\subsMerge\subs_2\}.
$

Since intersection is intrinsically deterministic, its semantics throws no surprise. 

\paragraph{Left-preferential shuffle:}
The left-preferential shuffle $\its_1\cshuffleop\its_2$ of sets of instantiated event traces $\its_1$ and $\its_2$
is defined as follows:
$
\its_1\cshuffleop\its_2 = \{(\evtr,\subs) \mid (\evtr_1,\subs_1) \in \its_1, (\evtr_2,\subs_2) \in \its_2, \subs=\subs_1\subsMerge\subs_2,
\evtr\in\evtr_1\glpshuffleop{\tproj{\its_1}}\evtr_2 \}.
$

The definition is based on the generalized left-preferential shuffle defined in Section~\ref{sec:back};
an event in $\evtr_2$ at a certain position $n$ can contribute to the shuffle only if no trace in
$\tproj{\its_1}$ could contribute with the same event at the same position $n$. 
Since the reduction steps corresponding to $\evtr_1$ and $\evtr_2$ are interleaved, the overall substitution $\subs$ must
meet the constraint $\subs=\subs_1\subsMerge\subs_2$ ensuring that $\subs_1$ and $\subs_2$ match on the shared variables 
of the two operands.

\paragraph{Variable deletion:}

The deletion $\remove{\its}{\dvar}$ of $\dvar$ from the set of instantiated event traces $\its$
is defined as follows:
$
\remove{\its}{\dvar} = \{ (\evtr,\remove{\subs}{\dvar}) \mid (\evtr,\subs) \in \its \}.
$

As expected, variable deletion only affects the domain of the computed substitution. 

\added{
  The definitions above show that instantiated event traces are needed to allow a compositional semantics; let us consider
  the following simplified variation of the example given in Section~\ref{sec:calculus}:
  $\te'=\block{\fd}{open(\fd)\catop close(\fd)}$. If we did not keep track of substitutions, then the compositional semantics of
  $open(\fd)$ and $close(\fd)$ would contain all traces of length 1 matching $open(\fd)$ and $close(\fd)$, respectively, for any value
  $\fd$, and, hence,  the semantics of $open(\fd)\catop close(\fd)$ could not constrain $open$ and $close$ events to be on the
  same file descriptor. Indeed, such a constraint is obtained by checking that the substitution of the event trace matching
  $open(\fd)$ can be successfully merged with the substitution of the event trace matching $close(\fd)$, so that the two substitutions agree
on $\fd$.}

\subsection{Contractivity}
Contractivity is a condition on trace expressions which is statically enforced by the \rml compiler;
such a requirement avoids infinite loops when an event does not match
the specification and the generated monitor would try to build an infinite derivation.
Although the generated monitors could dynamically check potential loops dynamically, a syntactic condition
enforced statically by the compiler allows monitors to be relieved of such a check, and, thus, to be more efficient. 

Contractivity can be seen as a generalization of absence of left recursion in grammars \cite{left-rec}; loops
in cyclic terms are allowed only if they are all guarded by a concatenation where the left operand $\te$
cannot contain the empty trace (that is, $\notder\isEmpty(\te)$ holds), and the loop continues in
the right operand of the concatenation. If such a condition holds, then it is not possible to build
infinite derivations for  $\te_1\extrans{\ev}\te_2$.

Interestingly enough, such a condition is also needed to prove that the interpretation of operators
as given in Section~\ref{sec:op-sem} is equivalent to the semantics given in Definition~\ref{def:sem}.
Indeed, the equivalence result proved in Theorem~\ref{theo} is based on Lemma~\ref{lemma1}
stating that for all contractive term $\te_1$ and event $\ev$, there exist $\te_2$ and $\subs$ s.t.  $\te_1\extrans{\ev}\te_2;\subs$ is derivable
if and only if $\te_1\notrans{\ev}$ is not derivable; such a claim does not hold for a non contractive term as $\te=\te\orop\te$, because for all
$\ev$, $\te'$ and $\subs$, $\te\extrans{\ev}\te';\subs$ and $\te\notrans{\ev}$ are not derivable. This is due to the fact that both judgments are defined by an inductive inference system.
\added{Intuitively, from a contractive term we cannot derive a new term without passing through at least one concatenation. For instance, considering the term $\te=\ev\catop\te$, we have contractivity because we have to consume $\ev$ before going inside the loop. But, if we swap the operands, we obtain instead $\te=\te\catop\ev$, where contractivity does not hold; in fact, deriving the concatenation we go first inside the head, but it is cyclic. Since the $\extrans{}$ and $\notrans{}$ judgements are defined inductively, both are not derivable because a finite derivation tree cannot be derived for neither of them.}

\begin{definition}
  Syntactic contexts $\ctx$ are inductively defined as follows:
$$
\begin{array}{rcll}
\production{\ctx}{\Box\mid\ctx\op\te\mid\te\op\ctx\mid\block{\dvar}{\ctx}}{with $\op\in\{\andop,\orop,\shuffleop,\catop\}$}
\end{array}
$$
\end{definition}

\begin{definition}
  A  syntactic context $\ctx$ is contractive if one of the following conditions hold:
  \begin{itemize}
  \item $\ctx=\block{\dvar}{\ctx'}$ and $\ctx'$ is contractive;    
  \item $\ctx=\ctx'\op\te$, $\ctx'$ is contractive and $\op\in\{\catop,\andop,\orop,\shuffleop\}$;    
  \item $\ctx=\te\op\ctx'$, $\ctx'$ is contractive and $\op\in\{\andop,\orop,\shuffleop\}$;    
  \item $\ctx=\te\catop\ctx'$, $\der\isEmpty(\te)$ and $\ctx'$ is contractive;    
  \item $\ctx=\te\catop\ctx'$ and $\notder\isEmpty(\te)$.
  \end{itemize}
\end{definition}

\begin{definition}
A term is part of $\te$ iff it belongs to the least set  $\partof(\te)$ matching the following definition:
$$
\begin{array}{l}
  \partof(\emptyTrace)=\partof(\eventTy)=\emptyset \quad \partof(\block{\dvar}{\te})=
  \{\te\} \cup \partof(\te)\\
  \partof(\te_1\op\te_2)=\{\te_1,\te_2\}\cup\partof(\te_1)\cup\partof(\te_2)  \mbox{ for $\op\in\{\shuffleop,\catop,\andop,\orop\}$}
\end{array}
$$
\end{definition}
Because trace expressions can be cyclic, the definition of $\partof$ follows the same pattern adopted for $\fv$.
One can prove that a term $\te$ is cyclic iff there exists $\te'\in\partof(\te)$ s.t. $\te'\in\partof(\te')$. 

\begin{definition}
  A term $\te$ is \emph{contractive} iff the following conditions old:
  \begin{itemize}
  \item  for any syntactic context $\ctx$, if $\te=\ctx[\te]$ then $\ctx$ is contractive;
  \item  for any term $\te'$, if $\te'\in\partof(\te)$, then $\te'$ is contractive.
  \end{itemize}
\end{definition}

\subsection{Main Theorem}

We first list all the auxiliary lemmas used to prove Theorem~\ref{theo}.

\begin{lemma}\label{lemma1}
For all contractive term $\te_1$ and event $\ev$, there exist $\te_2$ and $\subs$ s.t.  $\te_1\extrans{\ev}\te_2;\subs$ is derivable
if and only if $\te_1\notrans{\ev}$ is not derivable.
\end{lemma}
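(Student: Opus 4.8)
The plan is to prove the two directions separately, but actually only one direction requires real work. The implication ``if $\te_1\extrans{\ev}\te_2;\subs$ is derivable then $\te_1\notrans{\ev}$ is not derivable'' (and conversely ``if $\te_1\notrans{\ev}$ is derivable then $\te_1\extrans{\ev}\te_2;\subs$ is not derivable'') should follow by a straightforward mutual induction on the derivations, exploiting the fact that the rules defining $\extrans{\ev}$ and $\notrans{\ev}$ are syntactically complementary: for each pair of rules that could apply to the same top-level operator, the side conditions are mutually exclusive (e.g. (single) versus (n-single) by definedness of $\mtch$, (or-l)/(or-r) versus (n-or) by whether $\te_1$ reduces, (and) versus (n-and) by whether $\subs_1\subsMerge\subs_2$ is defined, (cat-r) versus (n-cat-l)/(n-cat-r) by $\isEmpty(\te_1)$, etc.). I would package this as a lemma (possibly already folded into the stratified definition) stating $\lnot(\te\extrans{\ev}\te';\subs \wedge \te\notrans{\ev})$ for all $\te$, which does not need contractivity. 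So contractivity is only needed for the remaining content: \emph{at least one} of the two judgements is derivable.

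For that direction I would argue by contradiction: suppose $\te_1$ is contractive but for the given $\ev$ neither $\te_1\extrans{\ev}\te_2;\subs$ (any $\te_2,\subs$) nor $\te_1\notrans{\ev}$ is derivable. The key observation is that ``$\te_1\notrans{\ev}$ is not derivable'' is a statement about the \emph{inductive} inference system for $\notrans{\ev}$: since that system is monotone, non-derivability means there is no \emph{finite} derivation tree. I would then do a case analysis on the outermost constructor of $\te_1$ (using that $\te_1$ is a cyclic term, so it has a well-defined top constructor even though its tree is infinite). For $\emptyTrace$ and for event type patterns $\eventTy$ the claim is immediate (rules (n-$\epsilon$)/(n-single) versus (single)). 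For each binary operator and for the let-construct, I would show that if both judgements failed for $\te_1$, then both must fail for a proper ``sub-position'' of $\te_1$ — concretely, for concatenation $\te_1=\te\catop\te'$: if $\te$ reduces we are done by (cat-l); if $\te\notrans{\ev}$ is derivable then by $\notder\isEmpty(\te)$ we get (n-cat-l), or by $\der\isEmpty(\te)$ the problem reduces to $\te'$; and the only way to be stuck on $\te$ itself is for $\te$ to be non-contractive in the relevant sense. Contractivity of $\te_1$ (every context $\ctx$ with $\te_1=\ctx[\te_1]$ is contractive, and every part-of is contractive) is exactly what rules out an infinite regress of ``pass the buck to the left operand of a loop through a non-guarding concatenation''.

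Making ``infinite regress'' precise is where the real argument lives, and I expect it to be the main obstacle. The clean way is to define a well-founded measure on contractive cyclic terms that decreases along the ``the obstruction is really inside this subterm'' relation. Since a contractive term has only finitely many distinct subterms (it is cyclic) and every cycle must pass through a concatenation whose left operand is non-nullable, I would take the measure to be, roughly, the number of constructors one can descend before being forced to cross such a guarding concatenation-right-edge (equivalently: the length of the longest path into the term-graph that never uses the right branch of a guarding concatenation and never re-enters a cycle). This is finite precisely by contractivity, and each step of the case analysis above either terminates the argument or strictly decreases it, yielding the contradiction. I would also need the small side facts that, e.g., for $\block{\dvar}{\te}$ the (par-*) rules cover exactly the cases of (n-par)'s premise, and that for $\orop$ and $\shuffleop$ the ``right'' rules together with their ``n-'' counterparts partition the situation given the outcome on the left operand; these are routine once the inductive complementarity lemma from the first paragraph is in hand. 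Finally I would remark, as the paper already does, why the contractivity hypothesis cannot be dropped: for $\te=\te\orop\te$ the term-graph has a cycle not guarded by any concatenation, the measure is undefined/infinite, and indeed neither $\te\extrans{\ev}\te';\subs$ nor $\te\notrans{\ev}$ admits a finite derivation.
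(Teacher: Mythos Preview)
The paper does not actually prove this lemma: it is stated in the list of auxiliary lemmas preceding Theorem~\ref{theo}, but no argument is supplied, neither in the main text nor in the appendix. The surrounding discussion of contractivity only explains why the hypothesis is \emph{necessary} (via the counterexample $\te=\te\orop\te$, for which neither judgment is derivable) without saying anything about how sufficiency is established. There is therefore nothing to compare your proposal against.

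On its own merits your proposal is sound and is the natural proof. The decomposition into an ``easy'' direction (mutual exclusion of $\extrans{\ev}$ and $\notrans{\ev}$, by rule inspection, no contractivity needed) and a ``hard'' direction (totality: at least one judgment is derivable) is exactly right, and your per-operator case analysis for the hard direction is correct. The only place that warrants tightening is the well-founded measure. A cleaner formulation than ``longest path before a guarding concatenation'' is this: call a subterm $\te'$ of $\te_1$ \emph{unguarded} if it is reachable from $\te_1$ in the syntax graph by a path that never enters the right child of a concatenation whose left child is non-nullable. The recursion structure of your case analysis only ever moves to unguarded subterms (in particular, for $\te\catop\te'$ you consult $\te'$ only when $\der\isEmpty(\te)$). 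Contractivity says every cycle $\te'=\ctx[\te']$ has $\ctx$ contractive, hence contains a step $\te''\catop\ctx'$ with $\notder\isEmpty(\te'')$, which is precisely an edge the unguarded relation refuses to follow; so the unguarded-reachability relation is acyclic on the finitely many subterms of a cyclic term, and induction on the height in the resulting DAG carries your argument through.
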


\begin{lemma}\label{lemmaLeadsToVar}
 If $(\evtr,\substr)\leadsto(\evtr,\subs)$, then $(\evtr,\remove{\substr}{\dvar})\leadsto(\evtr,\remove{\subs}{\dvar})$.
\end{lemma}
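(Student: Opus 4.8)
The plan is to unfold the definition of the $\leadsto$ relation and reduce the claim to a purely set-theoretic identity about unions of substitution domains. Recall that $(\evtr,\substr)\leadsto(\evtr,\subs)$ means precisely $\subs=\bigcup\{\subs'\mid\subs'\in\substr\}$, where $\substr$ is a concrete substitution trace (a partial function $\fun{\substr}{\nat}{\mathit{Subst}}$ whose non-empty positions carry substitutions with pairwise disjoint domains and finite overall union). The hypothesis guarantees that $(\evtr,\substr)$ is a \emph{concrete} instantiated event trace, so the substitutions $\substr(0),\substr(1),\dots$ have mutually disjoint domains; this disjointness is what makes the pointwise union well-defined as a function. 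First I would observe that $\remove{\substr}{\dvar}$, by the definition of variable deletion lifted pointwise to traces, is the substitution trace $n\mapsto\remove{\substr(n)}{\dvar}$, and that this operation can only shrink each domain, hence it preserves both the disjointness of domains and finiteness of the union, so $(\evtr,\remove{\substr}{\dvar})$ is again a legitimate concrete instantiated event trace and the right-hand side $(\evtr,\remove{\substr}{\dvar})\leadsto(\cdot)$ is meaningful.

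The core of the argument is then the identity
\[
\bigcup\{\remove{\subs'}{\dvar}\mid\subs'\in\remove{\substr}{\dvar}\}=\remove{\Bigl(\bigcup\{\subs'\mid\subs'\in\substr\}\Bigr)}{\dvar},
\]
i.e. that removing $\dvar$ from every component and then taking the union yields the same partial function as taking the union first and then removing $\dvar$. I would prove this by extensionality: for a variable $\dvar'$, if $\dvar'=\dvar$ then both sides are undefined on $\dvar'$ (the left because $\dvar\notin\dom(\remove{\subs'}{\dvar})$ for every $\subs'$, the right by definition of $\remove{\cdot}{\dvar}$); if $\dvar'\neq\dvar$ then on the left $\dvar'$ is in the domain of $\remove{\subs'}{\dvar}$ exactly when it is in $\dom(\subs')$, with the same value, so the union over all $\subs'\in\substr$ of the $\remove{\subs'}{\dvar}$ assigns $\dvar'$ the same value as the union over all $\subs'\in\substr$ of the $\subs'$, which, restricted to $\dvar'\neq\dvar$, is exactly $\remove{\subs}{\dvar}$ applied to $\dvar'$. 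Here the disjointness hypothesis is again used so that "the value given by the union" is unambiguous, but the equality is really just the fact that removal of a fixed key commutes with union of compatible partial maps.

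I do not expect any serious obstacle here; this is a routine bookkeeping lemma whose only subtlety is making sure the operations $\remove{\cdot}{\dvar}$ on individual substitutions, on substitution traces, and the union operation $\bigcup$ are all unfolded consistently, and that the well-definedness side conditions of a concrete instantiated event trace are re-checked for $\remove{\substr}{\dvar}$. The mildest care point is the infinite case: when $\evtr$ (hence $\substr$) is infinite, the union is still finite by the defining condition on concrete instantiated event traces, and $\remove{\cdot}{\dvar}$ keeps it finite, so nothing changes. Once the displayed identity is established, the lemma follows immediately by chaining it with the definition of $\leadsto$ on both ends.
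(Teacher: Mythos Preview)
Your proposal is correct. The paper does not actually supply a proof of this lemma: it merely states it and adds the clarifying remark that $\remove{\substr}{\dvar}$ denotes the substitution sequence obtained by removing $\dvar$ from the domain of each substitution in $\substr$. Your unfolding of $\leadsto$ and reduction to the commutation of pointwise $\dvar$-removal with the union of pairwise-disjoint substitutions is exactly the natural argument, and is more explicit than anything in the paper; in particular, your checks that $(\evtr,\remove{\substr}{\dvar})$ remains a concrete instantiated event trace and that the infinite case poses no issue are details the paper leaves entirely implicit.
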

Where $\remove{\substr}{\dvar}$ denotes the substitution sequence where $\dvar$ is removed from the domain of each substitution in $\substr$.

\begin{lemma}\label{lemmaSubsDecomposition}
 Given a substitution function $\subs$ and a term $\te$, we have that $\subs \te = \remove{\subs}{\dvar}\restrict{\subs}{\dvar}\te = \restrict{\subs}{\dvar}\remove{\subs}{\dvar}\te$, for every $\dvar\in\dom(\subs)$.
\end{lemma}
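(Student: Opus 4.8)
The plan is to proceed by induction on the structure of the (inductively defined) term $\te$ — recall that, although trace expressions are cyclic, the substitution-application function on terms is defined by coinduction but the statement here is really about what each node becomes, so the right granularity is structural/coinductive on $\te$. First I would fix $\dvar\in\dom(\subs)$ and observe the elementary identity on substitutions as maps: since $\restrict{\subs}{\dvar}$ and $\remove{\subs}{\dvar}$ have disjoint domains whose union is $\dom(\subs)$, for any basic data expression $\bv$ we have $\subs\bv=\remove{\subs}{\dvar}(\restrict{\subs}{\dvar}\bv)=\restrict{\subs}{\dvar}(\remove{\subs}{\dvar}\bv)$; this is checked by a trivial case analysis on whether the variable in question is $\dvar$, some other variable in $\dom(\subs)$, or outside $\dom(\subs)$, and it lifts to event type patterns $\eventTy$ by the inductive definition of substitution application on patterns. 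This handles the base cases $\te=\emptyTrace$ (all three sides are $\emptyTrace$) and $\te=\eventTy$ (apply the pattern-level identity just established).

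Next I would treat the binary operators $\op\in\{\catop,\andop,\orop,\shuffleop\}$: since $\subs(\te_1\op\te_2)=\subs\te_1\op\subs\te_2$ and likewise for $\restrict{\subs}{\dvar}$ and $\remove{\subs}{\dvar}$, the claim follows componentwise from the coinductive hypothesis applied to $\te_1$ and $\te_2$. The only case requiring care is the parametric expression $\te=\block{\dvar'}{\te'}$. Here I distinguish $\dvar'=\dvar$ from $\dvar'\neq\dvar$. If $\dvar'=\dvar$, then by definition $\subs\block{\dvar}{\te'}=\block{\dvar}{\remove{\subs}{\dvar}\te'}$, and similarly $\restrict{\subs}{\dvar}\block{\dvar}{\te'}=\block{\dvar}{\remove{(\restrict{\subs}{\dvar})}{\dvar}\te'}=\block{\dvar}{\te'}$ since $\remove{(\restrict{\subs}{\dvar})}{\dvar}=\emptysub$; composing then gives $\remove{\subs}{\dvar}\block{\dvar}{\te'}=\block{\dvar}{\remove{(\remove{\subs}{\dvar})}{\dvar}\te'}=\block{\dvar}{\remove{\subs}{\dvar}\te'}$, which matches, and the other order is symmetric. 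If $\dvar'\neq\dvar$, then $\subs\block{\dvar'}{\te'}=\block{\dvar'}{\remove{\subs}{\dvar'}\te'}$, and I need $\remove{\remove{\subs}{\dvar'}}{\dvar}=\remove{\remove{\subs}{\dvar}}{\dvar'}$ (restriction and removal on distinct variables commute, an easy set-theoretic fact about the domains), after which the coinductive hypothesis applied to $\te'$ with the substitution $\remove{\subs}{\dvar'}$ — noting $\dvar\in\dom(\remove{\subs}{\dvar'})$ since $\dvar\neq\dvar'$ — closes the case.

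The main obstacle is purely bookkeeping rather than conceptual: keeping straight the several commutation facts among the operators $\restrict{\cdot}{\cdot}$, $\remove{\cdot}{\cdot}$ and substitution application, especially in the $\block{\dvar'}{\te'}$ case where the bound variable may or may not coincide with $\dvar$ and where the definition of $\subs\block{\dvar'}{\te'}$ already silently performs a removal. Because trace expressions are cyclic, I would phrase the induction coinductively (the goal is an equality of cyclic terms, so two terms are equal iff they are bisimilar, and the case analysis above exhibits the bisimulation node by node), but this adds no real difficulty since each operator case reduces the goal on a term to the same goal on its immediate subterms. No appeal to contractivity or to the transition system is needed; the lemma is a self-contained algebraic fact about substitutions.
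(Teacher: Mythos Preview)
The paper states Lemma~\ref{lemmaSubsDecomposition} without proof (neither in the main text nor in the appendix), so there is no authors' argument to compare against. Your proposal is the natural one and is correct: the lemma is a purely algebraic fact about the interaction of $\restrict{\subs}{\dvar}$ and $\remove{\subs}{\dvar}$, resting on the observation that these two substitutions have disjoint domains whose union is $\dom(\subs)$, so that applying them in either order to a variable yields the same result as applying $\subs$. Lifting this to event type patterns by induction and to trace expressions by coinduction, with the only delicate case being $\block{\dvar'}{\te'}$, is exactly how one would expect the omitted proof to go.

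One small remark on the $\dvar'\neq\dvar$ case: your appeal to the coinductive hypothesis uses the lemma for the \emph{different} substitution $\remove{\subs}{\dvar'}$, so the statement being proved must be read as universally quantified over $\subs$ and $\dvar$ before the coinduction on $\te$ begins (equivalently, the bisimulation you exhibit relates $\subs\te$ and $\remove{\subs}{\dvar}\restrict{\subs}{\dvar}\te$ for \emph{all} choices of $\subs$, $\dvar$, $\te$ simultaneously). You already hint at this when you note that $\dvar\in\dom(\remove{\subs}{\dvar'})$, but it is worth making the quantifier structure explicit, since otherwise a reader might wonder why the hypothesis is available for a substitution other than the fixed $\subs$.
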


\begin{lemma}\label{lemma3}
Let $\te$ be a term, $\subs_1$ be a substitution function s.t. $\dom(\subs_1)=\{\dvar\}$; we have that: 
$$\forall_{(\evtr,\subs_2)\in\sem{\te}}.((\subs_1\subsMerge\subs_2 \text{ is defined}) \implies (\evtr,\remove{\subs_2}{\dvar})\in\sem{\subs_1\te})$$
\end{lemma}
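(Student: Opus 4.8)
The plan is to proceed by coinduction on the definition of the concrete semantics $\csem{-}$, lifting the result to the abstract semantics $\sem{-}$ at the end via Lemma~\ref{lemmaLeadsToVar}. Concretely, I would first prove the following concrete version: for every $(\evtr,\substr)\in\csem{\te}$, if $(\evtr,\substr)\leadsto(\evtr,\subs_2)$ and $\subs_1\subsMerge\subs_2$ is defined (with $\dom(\subs_1)=\{\dvar\}$), then $(\evtr,\remove{\substr}{\dvar})\in\csem{\subs_1\te}$. Passing from this to the stated lemma is immediate: if $(\evtr,\subs_2)\in\sem{\te}$ then there is $(\evtr,\substr)\in\csem{\te}$ with $(\evtr,\substr)\leadsto(\evtr,\subs_2)$; the concrete claim gives $(\evtr,\remove{\substr}{\dvar})\in\csem{\subs_1\te}$, and Lemma~\ref{lemmaLeadsToVar} yields $(\evtr,\remove{\substr}{\dvar})\leadsto(\evtr,\remove{\subs_2}{\dvar})$, whence $(\evtr,\remove{\subs_2}{\dvar})\in\sem{\subs_1\te}$. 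Note $\dvar\notin\dom(\subs_1)$ would make the statement trivial in a different way, but here the point is precisely that $\dvar$ may be bound by $\subs_2$ and gets absorbed into $\subs_1\te$; the hypothesis $\subs_1\subsMerge\subs_2$ defined means $\subs_1$ and $\subs_2$ agree on $\dvar$ whenever both define it.

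The coinductive argument splits on the two clauses of Definition~\ref{def:sem}. In the base case $(\emptyevtr,\emptysubtr)\in\csem{\te}$ holds because $\der\isEmpty(\te)$ is derivable; one shows $\der\isEmpty(\subs_1\te)$ by a straightforward induction on the derivation of $\der\isEmpty(\te)$, since applying a substitution commutes with all the operators and the let-construct in the rules for $\isEmpty$ (rule (e-par) is the only mildly delicate one, and there $\subs_1\block{\dvar'}{\te''}=\block{\dvar'}{\remove{\subs_1}{\dvar'}\te''}$, so the induction goes through). For the step case, $(\ev\catop\evtr',\subs'\catop\substr')\in\csem{\te}$ because $\te\extrans{\ev}\te';\subs'$ is derivable and $(\evtr',\substr')\in\csem{\subs'\te'}$. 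The key sub-claim is: from $\te\extrans{\ev}\te';\subs'$ one can derive $\subs_1\te\extrans{\ev}\subs_1\te'';\remove{\subs'}{\dvar}$ for a suitable $\te''$ with $\subs_1\te'' = $ the appropriate residual, essentially $\te''=\te'$ up to the handling of $\dvar$ by the (par-t)/(par-f) rules. This is proved by induction on the derivation of $\te\extrans{\ev}\te';\subs'$, using Lemma~\ref{lemmaSubsDecomposition} to commute $\restrict{\subs'}{\dvar'}$, $\remove{\subs'}{\dvar'}$ and $\subs_1$ in the let-cases, and using that $\mtch$ returns most general substitutions so that $\mtch(\ev,\subs_1\eventTy)=\remove{\subs'}{\dvar}$ when $\mtch(\ev,\eventTy)=\subs'$ and $\subs_1,\subs'$ agree on $\dvar$. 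Once the residual step is in place, one applies the coinductive hypothesis to $(\evtr',\substr')\in\csem{\subs'\te'}$ — checking that $\subs_1$ still merges with the disjoint union of $\substr'$, which follows since $\dom(\subs_1)=\{\dvar\}$ and $\dvar$ was already consumed into the first substitution $\subs'$, so $\dvar\notin\dom(\subs'')$ for $\subs''\in\substr'$ by Proposition~\ref{prop:sem-three} — and reassembles $(\ev\catop\evtr', \remove{(\subs'\catop\substr')}{\dvar}) = (\ev\catop\evtr', \remove{\subs'}{\dvar}\catop\remove{\substr'}{\dvar}) \in \csem{\subs_1\te}$.

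The main obstacle I anticipate is the bookkeeping in the residual-step induction, specifically rules (par-t) and (par-f) interacting with the ambient substitution $\subs_1$. When the bound variable of an inner let happens to be $\dvar$ itself (shadowing), $\subs_1\block{\dvar}{\te_0}=\block{\dvar}{\te_0}$ leaves the body untouched, so the free-$\dvar$ we are tracking is not the shadowed one — one must be careful that the statement's $\dvar$ refers to a free occurrence, and that $\partof$/contractivity are not needed here (this lemma is about a single step plus semantics, not about the $\extrans{}$ vs. $\notrans{}$ dichotomy, so Lemma~\ref{lemma1} is not invoked). The other place demanding care is verifying the disjointness/merge side conditions survive the coinductive call; this is exactly where Propositions~\ref{prop:sem-three} and~\ref{prop:sem-four} are used, guaranteeing $\dvar$ does not reappear in later substitutions of $\substr'$ and that all domains stay within $\fv(\subs'\te')\subseteq\fv(\te)$, so removing $\dvar$ is a well-behaved, confluent operation throughout.
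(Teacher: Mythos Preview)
The paper states Lemma~\ref{lemma3} without proof, both in the main text and in the appendix; it is listed among the auxiliary lemmas preceding Theorem~\ref{theo} but no argument is supplied. Hence there is no ``paper's own proof'' to compare against, and your plan must be assessed on its own merits.

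Your overall strategy is sound and is the natural one: reduce to the concrete semantics, show by coinduction that $(\evtr,\remove{\substr}{\dvar})\in\csem{\subs_1\te}$ whenever $(\evtr,\substr)\in\csem{\te}$ and $\subs_1$ is compatible with the aggregate substitution, then lift via Lemma~\ref{lemmaLeadsToVar}. The decomposition into (i) preservation of $\isEmpty$ under substitution and (ii) a commutation lemma for single transitions is exactly right, and your handling of the shadowing case in (par-t)/(par-f) is correct.

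Two points deserve explicit mention in the write-up. First, the commutation lemma for $\extrans{}$ must be proved \emph{simultaneously} with its companion for $\notrans{}$, namely that $\te\notrans{\ev}$ implies $\subs_1\te\notrans{\ev}$: rules (or-r), (shuffle-r), (cat-r) have $\notrans{}$ premises, and conversely (n-and) has $\extrans{}$ premises, so the induction is mutual. In the (n-and) case you will also need the contrapositive direction --- if $\te\extrans{\ev}\te';\subs'$ but $\subs_1\subsMerge\subs'$ is \emph{un}defined then $\subs_1\te\notrans{\ev}$ --- to conclude that the merged-substitution failure persists after applying $\subs_1$. Second, your appeal to ``$\mtch$ returns most general substitutions'' is really an implicit well-behavedness assumption on the abstract $\mtch$ parameter (something like: $\mtch(\ev,\subs_1\eventTy)=\remove{\subs'}{\dvar}$ whenever $\mtch(\ev,\eventTy)=\subs'$ and $\subs_1$ agrees with $\subs'$ on $\dvar$, and $\mtch(\ev,\subs_1\eventTy)$ undefined when they disagree). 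The paper leaves $\mtch$ abstract and does not state this property, so you should flag it as an assumption; without it the base case of the transition induction does not go through.
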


\begin{lemma}\label{lemma4}
Let $\te$ be a term, $\subs_1$ be a substitution function s.t. $\dom(\subs_1)=\{\dvar\}$; we have that: 
$$\forall_{(\evtr,\subs_2)\in\sem{\subs_1\te}}.((\subs_1\subsMerge\subs_2 \text{ is defined}) \implies (\evtr,\subs_2)\in\sem{\te})$$
\end{lemma}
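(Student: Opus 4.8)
The plan is to prove the statement by coinduction on the definition of the (concrete) semantics, lifted through the $\leadsto$ relation, exploiting the fact that $\subs_1\te$ differs from $\te$ only by the substitution of the single variable $\dvar$. The first step is to reduce the abstract statement to a concrete one: by Definition~\ref{def:sem}, $(\evtr,\subs_2)\in\sem{\subs_1\te}$ means there is a concrete instantiated trace $(\evtr,\substr_2)\in\csem{\subs_1\te}$ with $(\evtr,\substr_2)\leadsto(\evtr,\subs_2)$, and it suffices to exhibit $(\evtr,\substr_2')\in\csem{\te}$ with $(\evtr,\substr_2')\leadsto(\evtr,\subs_2)$; in fact I expect $\substr_2'=\substr_2$ will do, since adding a binding for $\dvar$ to a term cannot change which substitutions are \emph{produced} by matching (the $\mtch$ function returns the most general substitution whose domain lies in $\pfv(\eventTy)$, which is unaffected by $\subs_1$ acting on $\te$). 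So the real content is: if $(\evtr,\substr)\in\csem{\subs_1\te}$ and $\subs_1\subsMerge(\bigcup\substr)$ is defined, then $(\evtr,\substr)\in\csem{\te}$.

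The second step is the coinductive argument on $\csem{\cdot}$. I would take as the candidate coinductive relation
\[
R=\{\,((\evtr,\substr),\te)\mid \exists\,\subs_1,\ \dom(\subs_1)\subseteq\{\dvar\},\ (\evtr,\substr)\in\csem{\subs_1\te},\ \subs_1\subsMerge\textstyle\bigcup\substr\text{ defined}\,\}\,,
\]
and show it is backward-closed under the two clauses defining $\csem{\cdot}$. For the empty-trace clause: if $\evtr=\emptyevtr$ then $\der\isEmpty(\subs_1\te)$ is derivable, and since applying a substitution to a trace expression is a homomorphism for all operators (see the definition of $\subs\te$) and $\isEmpty$ is defined purely structurally and never inspects event type patterns, $\der\isEmpty(\subs_1\te)$ iff $\der\isEmpty(\te)$; so $(\emptyevtr,\emptysubtr)\in\csem{\te}$. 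For the step clause: suppose $\evtr=\ev\catop\evtr'$, $\substr=\subs\catop\substr'$, $\subs_1\te\extrans{\ev}\te'';\subs$ and $(\evtr',\substr')\in\csem{\subs\te''}$. The key is a \emph{substitution-lemma for transitions}: from $\subs_1\te\extrans{\ev}\te'';\subs$ and $\subs_1\subsMerge\subs$ defined, I can recover a transition $\te\extrans{\ev}\te';\subs$ with $\subs\te''=\subs(\subs_1\te')$ — intuitively, because $\te'' = \remove{\subs_1}{\dom(\subs)}\,(\subs_1\te')$ up to the way (par-t)/(par-f) relocate bindings, and $\subs$ agrees with $\subs_1$ on $\dvar$ whenever both are defined there. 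Then $\subs\te''=\subs(\subs_1\te')$, and since $\dom(\subs_1)\subseteq\{\dvar\}$ and (by Proposition~\ref{prop:sem-three}) the domains along $\substr'$ are disjoint from $\dom(\subs)$, one checks $\subs_1\subsMerge\bigcup\substr'$ is still defined, so $((\evtr',\substr'),\subs_1\te')\in R$, closing the coinduction.

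The main obstacle is precisely this substitution-lemma for the transition relation: showing that a transition out of $\subs_1\te$ can be ``pulled back'' to a transition out of $\te$ with a compatible residual, and tracking exactly how the let-construct rules (par-t) and (par-f) move the binding of $\dvar$ between the outer $\subs_1$ and the residual term. The case $\te=\block{\dvar}{\te_0}$ (the bound variable coincides with the one $\subs_1$ acts on) is special, since then $\subs_1\te=\te$ by definition of $\subs\block{\dvar}{\cdot}$ and the statement is immediate; the case $\te=\block{\dvary}{\te_0}$ with $\dvary\neq\dvar$ requires commuting $\subs_1$ past the binder and invoking Lemma~\ref{lemmaSubsDecomposition} to split and recombine the substitution applied to the residual. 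The binary-operator cases (concatenation, intersection, union, shuffle) are a routine induction on the transition rules, using that $\subs_1(\te_1\op\te_2)=\subs_1\te_1\op\subs_1\te_2$ and that the side conditions ($\der\isEmpty$, $\notrans{}$, $\subsMerge$) are preserved under applying $\subs_1$; the single-event case $\te=\eventTy$ uses that $\mtch(\ev,\subs_1\eventTy)$ and $\mtch(\ev,\eventTy)$ produce the same most-general substitution whenever the former is compatible with $\subs_1$.
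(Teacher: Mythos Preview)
The paper does not supply a proof of this lemma; it is only listed among the auxiliary lemmas preceding Theorem~\ref{theo}, so there is no paper argument to compare your proposal against directly.

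That said, there is a concrete gap in your plan at the single-event case. You assert that ``$\mtch(\ev,\subs_1\eventTy)$ and $\mtch(\ev,\eventTy)$ produce the same most-general substitution whenever the former is compatible with $\subs_1$'', and earlier that the domain of the returned match ``is unaffected by $\subs_1$ acting on $\te$''. This is not so: applying $\subs_1$ to an event type pattern removes $\dvar$ from $\pfv(\eventTy)$, so $\mtch$ returns a strictly smaller substitution. Take $\te=open(\fd)$, $\subs_1=\{\fd\mapsto 42\}$, and an event $\ev$ matching $open(42)$: then $\mtch(\ev,\subs_1\te)=\emptyset$ while $\mtch(\ev,\te)=\{\fd\mapsto 42\}$. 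Consequently $(\ev,\emptyset)\in\sem{\subs_1\te}$ and $\subs_1\subsMerge\emptyset$ is defined, yet $(\ev,\emptyset)\notin\sem{\te}$, since the only pair in $\sem{\te}$ with first component $\ev$ is $(\ev,\{\fd\mapsto 42\})$. So your invariant $\substr_2'=\substr_2$ cannot hold, and in fact the lemma \emph{as literally stated} already fails on this example; the intended conclusion is presumably $(\evtr,\subs_1\subsMerge\subs_2)\in\sem{\te}$, or an existential variant making it a genuine converse to Lemma~\ref{lemma3}. Your coinductive scaffolding and the substitution-lemma-for-transitions idea are the right shape for that corrected statement, but the pulled-back transition $\te\extrans{\ev}\te';\subs'$ will in general have $\subs'\supseteq\subs$ with $\dvar$ possibly added to its domain, and this extra binding must be carried through the $\leadsto$ relation rather than discarded.
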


\begin{lemma}\label{lemma5}
$\te\notrans{\ev}\iff\ev\nopref\sem{\te}$.
\end{lemma}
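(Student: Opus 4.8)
I would prove both directions of the biconditional, treating the statement $\te\notrans{\ev}\iff\ev\nopref\sem{\te}$ as an equivalence between a syntactic/derivability property and a semantic one. The natural route is to establish the contrapositive form: $\te\extrans{\ev}\te';\subs$ is derivable for some $\te',\subs$ if and only if $\ev\pref\sem{\te}$ (i.e.\ there is an instantiated event trace in $\sem{\te}$ whose first event is $\ev$). This reformulation is cleaner because the transition relation $\extrans{\ev}$ is defined by a finite inductive derivation, and because Lemma~\ref{lemma1} (contractivity) tells us that $\extrans{\ev}$ and $\notrans{\ev}$ genuinely partition the possibilities. Note that Lemma~\ref{lemma5} as stated does not assume contractivity, so I would first check whether the argument actually needs it; if it does, I would silently restrict to contractive terms as the rest of the development does, or invoke Lemma~\ref{lemma1} only where the partitioning is used.

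\textbf{Direction ($\Leftarrow$): if $\te\notrans{\ev}$ then $\ev\nopref\sem{\te}$.} I would argue by contradiction: suppose $\ev\pref\sem{\te}$, so there is $(\ev\catop\evtr,\subs)\in\sem{\te}$. By Definition~\ref{def:sem} (abstract semantics derived from the concrete one), there is a concrete instantiated trace $(\ev\catop\evtr,\subs'\catop\substr)\in\csem{\te}$ with $\subs'\catop\substr\leadsto\subs$. Unfolding the coinductive definition of $\csem{\cdot}$ at the second clause, $(\ev\catop\evtr,\subs'\catop\substr)\in\csem{\te}$ forces $\te\extrans{\ev}\te''; \subs'$ to be derivable for some $\te''$. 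But then $\te\extrans{\ev}\te''; \subs'$ and $\te\notrans{\ev}$ are both derivable, and I would derive a contradiction — either directly (the two inductive systems are designed to be mutually exclusive, a fact one proves by simultaneous induction on the two derivations, checking each rule pair), or, if that mutual exclusivity has been established earlier, by citing it. This is the easier direction and needs no contractivity.

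\textbf{Direction ($\Rightarrow$): if $\ev\nopref\sem{\te}$ then $\te\notrans{\ev}$.} Again by contradiction: assume $\te\notrans{\ev}$ is not derivable. If $\te$ is contractive, Lemma~\ref{lemma1} gives $\te',\subs'$ with $\te\extrans{\ev}\te'; \subs'$ derivable. I now need to produce an element of $\sem{\te}$ starting with $\ev$, i.e.\ I must find \emph{some} concrete instantiated trace $(\ev\catop\evtr,\subs'\catop\substr)\in\csem{\te}$. By the second clause of the definition of $\csem{\cdot}$, it suffices to exhibit $(\evtr,\substr)\in\csem{\subs'\te'}$. Here I would appeal to the fact that $\csem{\cdot}$ is nonempty for any trace expression reachable by reduction — this follows because the trace calculus always produces \emph{some} maximal run: starting from $\subs'\te'$, repeatedly apply Lemma~\ref{lemma1} to either take another reduction step or, when $\der\isEmpty$ holds, stop with the empty trace; one has to check that this process is well-defined, i.e.\ that at every stage either a reduction step exists or $\der\isEmpty$ holds. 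That dichotomy is exactly where contractivity and the shape of the $\notrans{\ev}$ rules (in particular (n-$\epsilon$), (n-single), and the binary-operator rules) are used. Having built such a run, the disjoint-union condition on the substitutions is guaranteed by Propositions~\ref{prop:sem-one}–\ref{prop:sem-four}, so the $\leadsto$ relation applies and we get $(\ev\catop\evtr,\subs)\in\sem{\te}$, contradicting $\ev\nopref\sem{\te}$.

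\textbf{Main obstacle.} The delicate point is the $\Rightarrow$ direction: turning a \emph{single} derivable transition step $\te\extrans{\ev}\te';\subs'$ into a \emph{complete} (possibly infinite) accepted trace, i.e.\ proving that $\csem{\subs'\te'}$ is nonempty. This requires a small lemma of the form ``every trace expression that arises during reduction admits at least one concrete instantiated event trace in its concrete semantics'', which in turn rests on the dichotomy (reduce or accept-empty) supplied by Lemma~\ref{lemma1} together with the structure of the $\der\isEmpty$ rules, plus a coinductive/greatest-fixed-point argument to assemble the infinite case. I expect this auxiliary nonemptiness fact is either proved elsewhere in the paper or is routine given Lemma~\ref{lemma1}; if it is not available, establishing it — carefully handling the coinductive construction and the disjointness-of-domains bookkeeping — is where the real work lies. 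The remaining steps (unfolding definitions, the mutual-exclusivity check for the $\Leftarrow$ direction) are mechanical case analyses.
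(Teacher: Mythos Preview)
The paper does not supply a proof of Lemma~\ref{lemma5}; it is listed among the auxiliary lemmas without argument and does not appear in the Appendix, so there is no paper proof to compare your proposal against.

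A minor note first: your direction labels are swapped. The implication ``if $\te\notrans{\ev}$ then $\ev\nopref\sem{\te}$'' is the $\Rightarrow$ direction of the biconditional as stated, not $\Leftarrow$. Your argument for that direction is correct: unfolding Definition~\ref{def:sem} on any instantiated trace in $\sem{\te}$ beginning with $\ev$ exposes a derivable step $\te\extrans{\ev}\te'';\subs'$, and mutual exclusivity of $\extrans{\ev}$ and $\notrans{\ev}$ (provable by a routine simultaneous induction on the two rule systems, with no need for contractivity) gives the contradiction.

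You are also right that the remaining direction hinges on showing $\csem{\subs'\te'}\neq\emptyset$ after a first step, and right to isolate this as the main obstacle. However, the dichotomy you hope for---that from every reachable term one can either take a further reduction step or derive $\isEmpty$---is \emph{false}, and with it the lemma itself as stated. Take ground event type patterns $\eventTy_1,\eventTy_2,\eventTy_3$ such that no event matches both $\eventTy_2$ and $\eventTy_3$, and set $\te=\eventTy_1\catop(\eventTy_2\andop\eventTy_3)$, a non-cyclic term. For any $\ev$ with $\mtch(\ev,\eventTy_1)$ defined, rule (cat-l) gives $\te\extrans{\ev}\emptyTrace\catop(\eventTy_2\andop\eventTy_3);\emptysub$, so by Lemma~\ref{lemma1} $\te\notrans{\ev}$ is not derivable. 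Yet the residual has empty concrete semantics: $\eventTy_2\andop\eventTy_3$ admits no transition (one of (n-and-l), (n-and-r) applies for every event) and $\isEmpty(\eventTy_2\andop\eventTy_3)$ is underivable, so neither (cat-l), (cat-r) nor (e-al) fires for $\emptyTrace\catop(\eventTy_2\andop\eventTy_3)$. Hence $\sem{\te}=\emptyset$, the assertion $\ev\nopref\sem{\te}$ holds vacuously, and the biconditional fails.

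Your plan therefore cannot be completed as written, but the defect lies in the lemma rather than in your strategy: you located precisely the step where it breaks. The statement would need an additional hypothesis---for instance that every term reachable from $\te$ by $\extrans{}$ has nonempty concrete semantics, or some syntactic condition ruling out such dead ends---for the argument you sketch to go through.
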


\begin{lemma}\label{lemmaSubsRemoval}
 If $(\evtr,\subs)\in\sem{\te}$, then $(\evtr,\emptyset)\in\sem{\subs\te}$.
\end{lemma}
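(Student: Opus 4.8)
The plan is to proceed by coinduction on the definition of the concrete semantics $\csem{\cdot}$, lifting the result to the abstract semantics afterwards. More precisely, I would first prove the following concrete statement: if $(\evtr,\substr)\in\csem{\te}$, then $(\evtr,\emptysubtr_\evtr)\in\csem{\subs\te}$, where $\subs=\bigcup\{\subs'\mid\subs'\in\substr\}$ and $\emptysubtr_\evtr$ denotes the substitution trace of the same length as $\evtr$ whose entries are all the empty substitution $\emptysub$. Applying $(\evtr,\substr)\leadsto(\evtr,\subs)$ and $(\evtr,\emptysubtr_\evtr)\leadsto(\evtr,\emptyset)$ then yields the abstract statement.

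The coinductive argument distinguishes the two ways an element enters $\csem{\te}$. If $(\emptyevtr,\emptysubtr)\in\csem{\te}$ because $\der\isEmpty(\te)$ is derivable, then one shows $\der\isEmpty(\subs\te)$ is derivable as well (a routine induction on the rules for $\isEmpty$, using that substitution application commutes with all the operators and the let-construct as spelled out in the definition of $\subs\te$), hence $(\emptyevtr,\emptysubtr)\in\csem{\subs\te}$. For the productive case, $(\ev\catop\evtr',\subs_0\catop\substr')\in\csem{\te}$ because $\te\extrans{\ev}\te';\subs_0$ is derivable and $(\evtr',\substr')\in\csem{\subs_0\te'}$. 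Here I would invoke a substitution-stability property of the transition relation: from $\te\extrans{\ev}\te';\subs_0$ one can derive $\subs\te\extrans{\ev}\subs(\subs_0\te');\emptysub$, provided $\subs$ is consistent with $\subs_0$ on shared variables — which holds because, by Proposition~\ref{prop:sem-three}, the domains of the substitutions in $\substr$ are mutually disjoint, so $\subs$ restricted to $\dom(\subs_0)$ is exactly $\subs_0$, and applying $\subs$ after $\subs_0$ discharges the bindings that rule (single) would otherwise produce. Since $\subs = \subs_0 \cup \bigcup\{\subs'\mid\subs'\in\substr'\}$ by disjointness, $\subs(\subs_0\te') = (\bigcup\substr')\,\te'$ up to the bindings already fixed, and the coinductive hypothesis applied to $(\evtr',\substr')\in\csem{\subs_0\te'}$ (after noting $\bigcup\substr'$ is the accumulated substitution for this tail, and using Lemma~\ref{lemmaSubsDecomposition} to reorganize the composition) gives $(\evtr',\emptysubtr_{\evtr'})\in\csem{\,(\bigcup\substr')\subs_0\te'\,} = \csem{\subs\te'}$. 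Combining the derived step with this yields $(\ev\catop\evtr',\emptysub\catop\emptysubtr_{\evtr'})\in\csem{\subs\te}$, closing the coinduction.

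The main obstacle I anticipate is the substitution-stability lemma for the transition relation, namely that $\te\extrans{\ev}\te';\subs_0$ implies $\subs\te\extrans{\ev}\subs(\subs_0\te');\emptysub$ for $\subs$ agreeing with $\subs_0$: this requires a careful induction over the transition rules, and the delicate cases are rules (single), where the matched pattern's fresh bindings must be absorbed by $\subs$, and (par-t)/(par-f), where the interaction between the outer substitution $\subs$ and the locally bound variable $\dvar$ must be tracked using the identity $\subs\block{\dvar}{\te} = \block{\dvar}{\remove{\subs}{\dvar}\te}$ together with Lemma~\ref{lemmaSubsDecomposition}; one must also check that the side conditions involving $\isEmpty$ and $\notder\isEmpty$ are preserved under substitution. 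A cleaner alternative, if available, is to cite Lemma~\ref{lemma4} and Lemma~\ref{lemmaSubsRemoval}-style reasoning directly on the abstract semantics: take $(\evtr,\subs)\in\sem{\te}$, write $\subs$ as a finite composition of single-variable substitutions $\subs = \subs_1\cdots\subs_k$, and apply Lemma~\ref{lemma4} repeatedly — but Lemma~\ref{lemma4} is stated for a single-variable $\subs_1$ and requires the mergeability side condition, which here is automatic since $\subs_1\subsMerge\subs_2$ with $\subs_2$ the residual is defined by construction; iterating this peels off the bindings one at a time until the empty substitution remains. I would present the second route if the bookkeeping stays manageable, and fall back to the direct coinduction otherwise.
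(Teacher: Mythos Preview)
The paper does not spell out a proof of Lemma~\ref{lemmaSubsRemoval}, so there is no reference argument to compare against line by line. That said, your second route---reducing to an iterated application of a single-variable substitution lemma---is the natural one and is almost certainly what the authors intend, since the one-variable version is exactly what is listed just above. The concrete issue is that you cite the wrong lemma: Lemma~\ref{lemma4} goes in the \emph{opposite} direction (from $\sem{\subs_1\te}$ back to $\sem{\te}$). The lemma you actually need to iterate is Lemma~\ref{lemma3}, which says that if $(\evtr,\subs_2)\in\sem{\te}$, $\dom(\subs_1)=\{\dvar\}$, and $\subs_1\subsMerge\subs_2$ is defined, then $(\evtr,\remove{\subs_2}{\dvar})\in\sem{\subs_1\te}$. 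Enumerating $\dom(\subs)=\{x_1,\ldots,x_k\}$ (finite by Proposition~\ref{prop:sem-four}) and setting $\subs_i=\restrict{\subs}{x_i}$, one applies Lemma~\ref{lemma3} $k$ times to obtain $(\evtr,\emptyset)\in\sem{\subs_k\cdots\subs_1\te}$; the mergeability side condition is automatic at each step because $\subs_i$ agrees with the current residual of $\subs$, and $\subs_k\cdots\subs_1\te=\subs\te$ follows by repeated use of Lemma~\ref{lemmaSubsDecomposition}.

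Your first route via coinduction on $\csem{\cdot}$ is viable in principle but heavier than necessary. The substitution-stability property you isolate for the transition relation (from $\te\extrans{\ev}\te';\subs_0$ derive $\subs\te\extrans{\ev}\subs(\subs_0\te');\emptysub$ when $\subs$ extends $\subs_0$) is essentially the transition-level content of Lemma~\ref{lemma3} itself; proving it requires the same case analysis on (single), (and), and (par-t)/(par-f), and in the (single) case additionally relies on unstated monotonicity assumptions about $\mtch$ (namely that instantiating already-matched variables in a pattern yields a match with empty output). Once Lemma~\ref{lemma3} is available, the iterated-application argument is strictly simpler and avoids reopening those cases.
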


\begin{lemma}\label{lemmaOmega}
If $(\evtr,\substr)\in\csem{\te}$ and $\evtr$ is infinite, then $(\evtr,\substr)\in\csem{\te\catop\te'}$ for every $\te'$.
\end{lemma}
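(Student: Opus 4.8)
The plan is to proceed by coinduction on the definition of $\csem{\cdot}$, exploiting the fact that an infinite trace $\evtr$ never hits the base case $(\emptyevtr,\emptysubtr)$, so the only relevant clause of Definition~\ref{def:sem} is the second one. Concretely, I would first unfold the hypothesis $(\evtr,\substr)\in\csem{\te}$: since $\evtr$ is infinite, write $\evtr=\ev\catop\evtr_1$ and $\substr=\subs\catop\substr_1$, so that by Definition~\ref{def:sem} there is a transition $\te\extrans{\ev}\te_0;\subs$ with $(\evtr_1,\substr_1)\in\csem{\subs\te_0}$, and note $\evtr_1$ is again infinite. The goal is to show $(\evtr,\substr)\in\csem{\te\catop\te'}$, for which it suffices to exhibit a transition $\te\catop\te'\extrans{\ev}\te''; \subs$ with $(\evtr_1,\substr_1)\in\csem{\subs\te''}$.

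The key step is to apply rule (cat-l) to the transition $\te\extrans{\ev}\te_0;\subs$, obtaining $\te\catop\te'\extrans{\ev}\te_0\catop\te';\subs$. Now $\subs(\te_0\catop\te')=\subs\te_0\catop\subs\te'$ by the (coinductive) definition of substitution application on trace expressions, so I must show $(\evtr_1,\substr_1)\in\csem{\subs\te_0\catop\subs\te'}$. Since $\evtr_1$ is infinite and $(\evtr_1,\substr_1)\in\csem{\subs\te_0}$, this is exactly an instance of the statement we are proving (with $\subs\te_0$ in place of $\te$ and $\subs\te'$ in place of $\te'$), which is legitimate to invoke as the coinductive hypothesis. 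Packaging this up: I define the candidate relation $R=\{\,((\evtr,\substr),\te\catop\te') \mid \evtr \text{ infinite}, (\evtr,\substr)\in\csem{\te}\,\}$ and check that $R$ is consistent with the generating rule for $\csem{\cdot}$, i.e.\ that every pair in $R$ satisfies the second clause of Definition~\ref{def:sem} with residual again in $\csem{\cdot}$ of the appropriate substituted term (which lands back in a set of the form covered by $R$). Coinduction then gives $R\subseteq{}$ the membership relation, which is the claim.

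I expect the only mild subtlety — not a real obstacle — to be bookkeeping about the substitution traces: one must be careful that the same $\subs$ labelling the transition from $\te$ is the one prepended to $\substr_1$, and that applying rule (cat-l) does not alter the emitted substitution (which it does not, since (cat-l) copies $\subs$ verbatim). A second point to state explicitly is that $\emptyevtr$ plays no role here: because $\evtr$ is infinite, the clause ``$(\emptyevtr,\emptysubtr)\in\csem{\te\catop\te'}$ iff $\der\isEmpty(\te\catop\te')$'' is never triggered along the decomposition, so we never need any hypothesis relating $\isEmpty(\te)$ and $\isEmpty(\te')$. Everything else is a direct unfolding of the definitions, so the proof is short once the coinductive relation $R$ is set up.
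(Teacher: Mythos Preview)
The paper does not actually include a proof of Lemma~\ref{lemmaOmega}; it is stated but left unproved both in the main text and in the appendix. Your coinductive argument is correct and is the natural way to establish the result: since $\csem{\cdot}$ is defined coinductively, showing that the relation $R=\{\,((\evtr,\substr),\te\catop\te') \mid \evtr \text{ infinite},\ (\evtr,\substr)\in\csem{\te}\,\}$ is backward-closed under the second clause of Definition~\ref{def:sem} (via rule (cat-l) and the identity $\subs(\te_0\catop\te')=\subs\te_0\catop\subs\te'$) is exactly what is required. Your remarks about the base clause never being reached and about (cat-l) preserving the emitted substitution are the right sanity checks; nothing further is needed.
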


%\begin{lemma}\label{lemmaHeadShuffle}
% $\evtr\in\evtr_1\glpshuffleop{T}\evtr_2$, if and only if $\ev\catop\evtr\in\ev\catop\evtr_1\glpshuffleop{T}\evtr_2$ or, $\ev\catop\evtr\in\evtr_1\glpshuffleop{T}\ev\catop\evtr_2$, $\ev\nopref\evtr_1$.\todo{Forse a destra doppiamo sostituire $T$ con $\ev\catop T$.} 
%\end{lemma}

\begin{lemma}\label{lemmaHeadShuffle}
 If $\ev\catop\evtr \in \evtr_1\glpshuffleop{T}\evtr_2$, then $\evtr_1=\ev\catop\evtr_1'$, or $\evtr_2=\ev\catop\evtr_2'$ and $\ev\nopref\evtr_1$.
\end{lemma}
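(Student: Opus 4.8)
The plan is to unfold the definition of the generalized left-preferential shuffle and reason only about position $0$ of the resulting trace. Notably, the $T$-dependent side condition (the second bullet in the definition of $\glpshuffleop{T}$) plays no role here: only the matching conditions relating $\evtr$ to $\evtr_1,\evtr_2$ via the two strictly increasing functions are used. In particular the argument goes through already for the plain shuffle $\shuffleop$.

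First I would observe that $\ev\catop\evtr$ is non-empty, so $0\in\dom(\ev\catop\evtr)$. Unfolding $\ev\catop\evtr\in\evtr_1\glpshuffleop{T}\evtr_2$ yields a partition of $\dom(\ev\catop\evtr)$ into $N_1$ and $N_2$ and strictly increasing bijective partial functions $\fun{f_1}{\dom(\evtr_1)}{N_1}$, $\fun{f_2}{\dom(\evtr_2)}{N_2}$ with $\evtr_1(n_1)=(\ev\catop\evtr)(f_1(n_1))$ and $\evtr_2(n_2)=(\ev\catop\evtr)(f_2(n_2))$ for all $n_1\in\dom(\evtr_1)$, $n_2\in\dom(\evtr_2)$. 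Since $N_1$ and $N_2$ partition a set containing $0$, we have $0\in N_1$ or $0\in N_2$. The key step is the claim: if $0\in N_i$ for some $i\in\{1,2\}$, then $\evtr_i(0)=\ev$, hence $\ev\pref\evtr_i$. To see this, by surjectivity of $f_i$ onto $N_i$ there is $n\in\dom(\evtr_i)$ with $f_i(n)=0$; if $n>0$, then $0\in\dom(\evtr_i)$ because $\dom(\evtr_i)$ is an initial segment of $\nat$, and strict monotonicity of $f_i$ gives $f_i(0)<f_i(n)=0$, which is impossible in $\nat$; hence $n=0$ and $\evtr_i(0)=(\ev\catop\evtr)(f_i(0))=(\ev\catop\evtr)(0)=\ev$. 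Applying this to whichever block contains $0$, I obtain $\ev\pref\evtr_1$ or $\ev\pref\evtr_2$. A final propositional split then recovers exactly the stated form: if $\ev\pref\evtr_1$, then $\evtr_1=\ev\catop\evtr_1'$ and the first disjunct holds; otherwise $\ev\pref\evtr_2$, so $\evtr_2=\ev\catop\evtr_2'$, and since $\ev\pref\evtr_1$ fails we have $\ev\nopref\evtr_1$, so the second disjunct holds.

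I do not expect a genuine obstacle. The only step needing a line of care is the auxiliary implication ``$f_i(n)=0$ implies $n=0$'', which rests on strict monotonicity of $f_i$ together with the fact that $\dom(\evtr_i)$ is downward closed — essentially the same kind of reasoning as in Proposition~\ref{prop:totality}. The only other thing to check is that the case split $0\in N_1$ versus $0\in N_2$ is exhaustive, which is immediate from $N_1\cup N_2=\dom(\ev\catop\evtr)\ni 0$ and $N_1\cap N_2=\emptyset$.
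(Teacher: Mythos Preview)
Your argument is correct. The paper states Lemma~\ref{lemmaHeadShuffle} as an auxiliary result but supplies no proof for it, neither in the main body nor in the appendix, so there is nothing to compare against directly. Your unfolding of the definition of $\glpshuffleop{T}$ and the position-$0$ analysis via the bijections $f_1,f_2$ is the natural line, and the small auxiliary step ``$f_i(n)=0$ implies $n=0$'' is handled cleanly by strict monotonicity together with downward closure of $\dom(\evtr_i)$. Your observation that the $T$-dependent side condition is irrelevant is also right: the stated conclusion is classically equivalent to $\ev\pref\evtr_1 \lor \ev\pref\evtr_2$, which already holds for the plain shuffle $\shuffleop$, so neither the left-preferential constraint nor the generalization over $T$ is needed for this particular lemma.
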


\begin{lemma}\label{lemmaEmptyTail}
If $(\evtr,\substr)\in\csem{\te}$ and $\isEmpty(\te')$, then $(\evtr,\substr)\in\csem{\te\catop\te'}$.
\end{lemma}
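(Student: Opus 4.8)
The plan is to proceed by coinduction on the definition of the concrete semantics $\csem{\cdot}$, using the characterisation given in Definition~\ref{def:sem}. I want to show that the set
$$
\mathcal{R} = \{\, (\evtr,\substr) \mid (\evtr,\substr)\in\csem{\te} \text{ for some } \te \text{ with } \isEmpty(\te') \,\} \cup \csem{\emptyTrace}
$$
(more precisely, the family of pairs $(\evtr,\substr)$ such that $(\evtr,\substr)\in\csem{\te}$ together with the corresponding $\te\catop\te'$) is a post-fixed point of the rule functional defining $\csem{\te\catop\te'}$. Concretely, given $(\evtr,\substr)\in\csem{\te}$ with $\der\isEmpty(\te')$, I must exhibit that $(\evtr,\substr)$ satisfies one of the two defining clauses of $\csem{\te\catop\te'}$, with the residual again lying in $\mathcal{R}$.

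First I would do a case analysis on whether $\evtr=\emptyevtr$ or $\evtr=\ev\catop\evtr_0$. In the empty case, $(\emptyevtr,\emptysubtr)\in\csem{\te}$ forces $\der\isEmpty(\te)$ by Definition~\ref{def:sem}; together with $\der\isEmpty(\te')$ and rule (e-al) for $\op=\catop$ this gives $\der\isEmpty(\te\catop\te')$, hence $(\emptyevtr,\emptysubtr)\in\csem{\te\catop\te'}$. In the non-empty case, $(\ev\catop\evtr_0,\subs\catop\substr_0)\in\csem{\te}$ unfolds (by the second clause of Definition~\ref{def:sem}) to a derivable step $\te\extrans{\ev}\te'';\subs$ with $(\evtr_0,\substr_0)\in\csem{\subs\te''}$. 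Now I would lift this step through concatenation: applying rule (cat-l) yields $\te\catop\te'\extrans{\ev}\te''\catop\te';\subs$, and since substitution application commutes with $\catop$ (by the coinductive definition of $\subs(\te_1\op\te_2)=\subs\te_1\op\subs\te_2$), we have $\subs(\te''\catop\te')=\subs\te''\catop\subs\te'$. The key point is then that $(\evtr_0,\substr_0)\in\csem{\subs\te''}$ and $\der\isEmpty(\subs\te')$ — the latter because $\isEmpty$ is preserved under substitution, which is clear by a straightforward induction on the rules for $\der\isEmpty$ since substitution never touches the structure that those rules inspect — so $(\evtr_0,\substr_0)$ falls under the coinductive hypothesis applied to the pair $(\subs\te'',\subs\te')$, i.e.\ $(\evtr_0,\substr_0)\in\csem{\subs\te''\catop\subs\te'}=\csem{\subs(\te''\catop\te')}$. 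This is exactly the residual condition required by the second clause of Definition~\ref{def:sem} for $\te\catop\te'$, closing the coinduction.

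The main obstacle I anticipate is getting the coinductive bookkeeping right: the relation must be closed not just over the fixed $\te$ but over all its substitution instances $\subs\te''$ arising along the derivation, and one must be careful that $\te\catop\te'$ remains a well-formed (cyclic) trace expression and that the auxiliary fact ``$\der\isEmpty(\te)$ iff $\der\isEmpty(\subs\te)$'' is available — though this last fact is routine and likely already used implicitly elsewhere (it is the same commutation used for rule (cat-l) in the operational semantics). A minor subtlety is the side condition of rule (cat-l), which has none, so the lift is unconditional; had we needed (cat-r) the argument would require $\te_1\notrans{\ev}$, but here we always reduce on the left, so no such obligation arises. Everything else is a direct unfolding of Definition~\ref{def:sem}.
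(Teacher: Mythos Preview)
Your coinductive argument is correct and, in fact, handles more than the paper's own proof does. The paper proceeds by \emph{finite unrolling}: it assumes $\evtr=\ev_1\catop\cdots\catop\ev_n$ is finite, expands $(\evtr,\substr)\in\csem{\te}$ into a chain of transitions $\te\extrans{\ev_1}\te_1;\subs_1,\ldots,\te_{n-1}\extrans{\ev_n}\te_n;\subs_n$ with $\der\isEmpty(\te_n)$, lifts each step to $\te_i\catop\te'$ via rule (cat-l), and concludes using $\der\isEmpty(\te_n\catop\te')$ from rule (e-al). No coinduction is invoked; the argument is essentially an induction on the length of $\evtr$.

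The differences that matter: (i) your proof covers infinite traces as well, whereas the paper's proof is stated only for finite $\evtr$ (the infinite case would have to be recovered separately from Lemma~\ref{lemmaOmega}); (ii) you make explicit the point that after a step the residual is $\subs(\te''\catop\te')=\subs\te''\catop\subs\te'$ and hence one needs $\der\isEmpty(\subs\te')$, which the paper's proof silently elides by writing the chain without applying intermediate substitutions. Your treatment is therefore more uniform and more careful about the bookkeeping that Definition~\ref{def:sem} actually demands; the paper's version is shorter but relies on the reader filling in the substitution-tracking and the infinite case.
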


\begin{lemma}\label{lemmaConcatExpansion}
Given $(\evtr_1,\substr_1)\in\csem{\te_1}$, $\te_2\trans{\ev_2}\te_2^1;\subs_2^1$ and $(\evtr_2,\substr_2^2)\in\csem{\subs_2^1\te_2^1}$ with $\substr_2=\subs_2^1\catop\substr_2^2$. If $\evtr_1=\ev_1\catop\ldots\catop\ev_n$ is finite, $\te_1\trans{\ev_1}\te_1^1;\subs_1^1$, $\te_1^1\trans{\ev_2}\te_1^2;\subs_1^2$, $\ldots$, $\te_1^{n-1}\trans{\ev_n}\te_1^n;\subs_1^n$, with $\subs_1=\subs_1^1\catop\ldots\catop\subs_1^n$ and
$\te_1^n\notrans{\ev_2}$, then $(\evtr_1\catop\ev_2\catop\evtr_2, \substr_1\catop\substr_2)\in\csem{\te_1\catop\te_2}$.
\end{lemma}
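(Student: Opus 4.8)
The plan is to proceed by induction on the length $n$ of the finite trace $\evtr_1$, repeatedly unfolding the second clause of the coinductive Definition~\ref{def:sem} of $\csem{\cdot}$: since the prefix $\evtr_1\catop\ev_2$ is finite, only $n+1$ unfoldings are needed, and the possibly infinite tail $\evtr_2$ is never descended into but supplied directly by the hypothesis $(\evtr_2,\substr_2^2)\in\csem{\subs_2^1\te_2^1}$, which already lives in $\csem{\cdot}$. As a preliminary remark, the hypotheses $\te_2\trans{\ev_2}\te_2^1;\subs_2^1$, $(\evtr_2,\substr_2^2)\in\csem{\subs_2^1\te_2^1}$ and $\substr_2=\subs_2^1\catop\substr_2^2$ already give, by one application of the second clause of Definition~\ref{def:sem}, $(\ev_2\catop\evtr_2,\substr_2)\in\csem{\te_2}$; this is the ``continuation'' to be threaded past $\te_1$. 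Moreover, since the transition relation is deterministic (this being the purpose of the $\notrans{}$ side conditions), the derivation $\te_1\trans{\ev_1}\te_1^1;\subs_1^1,\ldots,\te_1^{n-1}\trans{\ev_n}\te_1^n;\subs_1^n$ given in the statement is exactly the one that witnesses $(\evtr_1,\substr_1)\in\csem{\te_1}$.

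Base case ($n=0$): then $\evtr_1=\emptyevtr$, $\substr_1=\emptysubtr$ (by Proposition~\ref{prop:sem-one}), and $\te_1^n$ is $\te_1$ itself; $(\emptyevtr,\emptysubtr)\in\csem{\te_1}$ gives $\der\isEmpty(\te_1)$, and the side hypothesis $\te_1^n\notrans{\ev_2}$ reads $\te_1\notrans{\ev_2}$. Rule (cat-r) therefore applies and yields $\te_1\catop\te_2\trans{\ev_2}\te_2^1;\subs_2^1$; one more application of the second clause of Definition~\ref{def:sem}, using $(\evtr_2,\substr_2^2)\in\csem{\subs_2^1\te_2^1}$, gives $(\ev_2\catop\evtr_2,\subs_2^1\catop\substr_2^2)\in\csem{\te_1\catop\te_2}$, i.e.\ the claim, since here $\substr_1\catop\substr_2=\substr_2$.

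Inductive case ($n\ge 1$): write $\evtr_1=\ev_1\catop\evtr_1'$ and $\substr_1=\subs_1^1\catop\substr_1'$. By the second clause of Definition~\ref{def:sem}, $(\evtr_1,\substr_1)\in\csem{\te_1}$ gives $\te_1\trans{\ev_1}\te_1^1;\subs_1^1$ and $(\evtr_1',\substr_1')\in\csem{\subs_1^1\te_1^1}$. Rule (cat-l) lifts this transition to $\te_1\catop\te_2\trans{\ev_1}\te_1^1\catop\te_2;\subs_1^1$, so, again by the second clause of Definition~\ref{def:sem}, the goal reduces to $(\evtr_1'\catop\ev_2\catop\evtr_2,\substr_1'\catop\substr_2)\in\csem{\subs_1^1(\te_1^1\catop\te_2)}=\csem{\subs_1^1\te_1^1\catop\subs_1^1\te_2}$ (the substitution distributing over $\catop$ by its coinductive definition on trace expressions). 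This is an instance of the statement with the shorter trace $\evtr_1'$, with $\subs_1^1\te_1^1$ in place of $\te_1$ and $\subs_1^1\te_2$ in place of $\te_2$; I close the case by the induction hypothesis, after re-deriving its premises for the substituted terms, namely the residual derivation of $\evtr_1'$ through $\subs_1^1\te_1^1$ together with its final non-transition, and a transition of $\subs_1^1\te_2$ on $\ev_2$ feeding the corresponding $\csem{\cdot}$ membership.

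The main obstacle is exactly this last step, i.e.\ transporting the hypotheses through the substitution $\subs_1^1$ prescribed by Definition~\ref{def:sem}: since the residual of a (cat-l) step is $\te_1^1\catop\te_2$, applying $\subs_1^1$ also rewrites $\te_2$ (and, further down the induction, the later $\te_1$-residuals), so one needs a stability property of the $\trans{}$ and $\notrans{}$ judgements under substitution — that $\te\trans{\ev}\te';\subs$ (resp.\ $\te\notrans{\ev}$) entails a corresponding transition (resp.\ non-transition) for $\rho\,\te$ with a suitably restricted residual and computed substitution — which in turn reduces to how $\mtch$ commutes with the application of a substitution to an event type pattern. The remaining points are routine bookkeeping: checking that $\substr_1\catop\substr_2$ is a legitimate \emph{concrete} instantiated event trace, i.e.\ that its substitutions have pairwise-disjoint, globally finite domains (from Lemma~\ref{lemma:vars} and Propositions~\ref{prop:sem-three}--\ref{prop:sem-four}), and matching up the length and finiteness constraints via Propositions~\ref{prop:sem-one}--\ref{prop:sem-two}.
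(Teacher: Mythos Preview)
Your argument follows the same skeleton as the paper's: lift each step $\te_1^{i-1}\trans{\ev_i}\te_1^i;\subs_1^i$ to $\te_1^{i-1}\catop\te_2\trans{\ev_i}\te_1^i\catop\te_2;\subs_1^i$ via rule (cat-l), then use $\der\isEmpty(\te_1^n)$ together with $\te_1^n\notrans{\ev_2}$ and $\te_2\trans{\ev_2}\te_2^1;\subs_2^1$ to switch by rule (cat-r), and finally fold back through Definition~\ref{def:sem}. The paper does this in one shot, writing out the whole chain of $n{+}1$ transitions and invoking the definition of $\csem{\cdot}$ at the end; you package the same argument as an induction on $n$, which is a harmless stylistic choice.

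The substantive difference is that you isolate what the paper's proof glosses over: Definition~\ref{def:sem} feeds the residual through the computed substitution, so after the first (cat-l) step one lands in $\csem{\subs_1^1\te_1^1\catop\subs_1^1\te_2}$, not $\csem{\te_1^1\catop\te_2}$. The paper's proof simply writes the chain $\te_1^{i-1}\catop\te_2\trans{\ev_i}\te_1^i\catop\te_2$ without tracking how each $\subs_1^i$ acts on $\te_2$, and then appeals (without it being among the stated hypotheses) to the existence of $\subs_1\subsMerge\subs_2$ to justify the final fold. Your identification of a needed ``stability under substitution'' lemma is therefore a genuine refinement rather than a detour; it is precisely the ingredient the paper leaves implicit. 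Note, however, that the general form you sketch (\emph{any} $\rho$ preserves any $\trans{}$/$\notrans{}$) fails, as a substitution may turn a matching event-type pattern into a non-matching one; the version you actually need is conditional on the compatibility $\subs_1\subsMerge\subs_2$ (equivalently, disjointness/agreement of the domains involved), which is how the paper tacitly uses it.
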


In Theorem~\ref{theo} we claim that for every operator of the trace calculus, the compositional semantics is equivalent to the abstract semantics. To prove such claim, we need to show that, for each operator, every trace belonging to the compositional semantics belongs to the abstract semantics, which means we only consider correct traces (\emph{soundness}); and, every trace belonging to the abstract semantics belongs to the compositional semantics, which means we consider all the correct traces (\emph{completeness}).

Each operator requires a customised proofs, but in principle, all the proofs follow the same reasoning. Both soundness and completeness proof start expanding the compositional semantics definition in terms of its concrete semantics, which in turn is rewritten in terms of the operational semantics. At this point, the compositional operator's operands can be separately analysed in order to be recombined with the corresponding trace calculus operator. Finally, the proofs are concluded going backwards from the operational semantics to the abstract one, through the concrete semantics. For all the operators, except $\orop$ and $\andop$, the proofs are given by coinduction over the terms structure.
In every proof which is not analysed separately ($\iff$ cases), we implicitly apply Lemma~\ref{lemma1}.

\begin{theorem}\label{theo}
  The following claims hold for all contractive terms $\te_1$ and $\te_2$:
  \begin{itemize}
    \item $\sem{\te_1\orop\te_2}=\sem{\te_1}\corop\sem{\te_2}$
    \item $\sem{\te_1\catop\te_2}=\sem{\te_1}\ccatop\sem{\te_2}$
    \item $\sem{\te_1\andop\te_2}=\sem{\te_1}\candop\sem{\te_2}$
    \item $\sem{\te_1\shuffleop\te_2}=\sem{\te_1}\cshuffleop\sem{\te_2}$
    \item $\sem{\block{\dvar}{\te_1}}=\remove{\sem{\te_1}}{\dvar}$     
  \end{itemize}
\end{theorem}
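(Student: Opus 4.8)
The plan is to prove the five set equalities by a uniform strategy: for each operator $\op$, expand the left-hand side $\sem{\te_1\op\te_2}$ via Definition~\ref{def:sem} (passing through the concrete semantics $\csem{-}$ and the relation $\leadsto$), unfold the first reduction step using the operational rules of Figure~\ref{fig:semantics}, and then match this against the unfolding of the compositional operator on the right-hand side. Since $\csem{-}$ is defined coinductively, I expect soundness ($\supseteq$ or $\subseteq$ depending on direction) and completeness to be established by coinduction on the structure of the instantiated traces, except for the union and intersection cases, which are handled by a more direct argument because their operational rules do not interleave operands across reduction steps. Throughout, the key bridge lemma is Lemma~\ref{lemma1}: contractivity guarantees that for any $\ev$ either $\te\extrans{\ev}\te';\subs$ or $\te\notrans{\ev}$ is derivable (exactly one), which is what legitimises reading the side-conditions $\te_1\notrans{\ev}$ appearing in rules (or-r), (cat-r), (shuffle-r) as the ``negative'' clauses $\ev\nopref\sem{\te_1}$ used in the definitions of $\corop$, $\ccatop$, $\cshuffleop$; this equivalence itself is Lemma~\ref{lemma5}.

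First I would dispatch the two easy cases. For union, observe that the empty-trace membership on both sides is governed by rules (e-or-l)/(e-or-r) versus the explicit $\emptyevtr$ clauses in $\corop$; for a nonempty trace $\ev\catop\evtr$, rule (or-l) puts it in $\sem{\te_1\orop\te_2}$ exactly when it is in $\sem{\te_1}$ (after the residual $\subs\te_1'$ is re-examined, which by the coinductive hypothesis equals membership in $\sem{\te_1}$), while rule (or-r) adds it from $\sem{\te_2}$ precisely when $\te_1\notrans{\ev}$, i.e. (by Lemma~\ref{lemma5}) when $\ev\nopref\sem{\te_1}$ --- matching the second summand of $\corop$. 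For intersection, rule (and) forces both operands to consume $\ev$ and the side condition $\subs=\subs_1\subsMerge\subs_2$ is literally the constraint in $\candop$; the only subtlety is that when we step to $\subs_1'\andop\subs_2'\!\te_1'\andop\te_2'$ versus the pair of residuals $\subs\te_1'$, $\subs\te_2'$, I must push the merged substitution through both components and invoke Lemmas~\ref{lemma3}/\ref{lemma4} (substitution distribution over semantics) to realign the residual semantics before applying the coinductive hypothesis.

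The substantive cases are concatenation and shuffle. For concatenation, the soundness direction uses rule (cat-l) to grow $\evtr_1$ inside $\sem{\te_1}$, and rule (cat-r) --- whose side conditions $\te_1\notrans{\ev}$ and $\der\isEmpty(\te_1)$ translate exactly to the clause $(\evtr_1\catop\ev)\nopref\sem{\te_1}$ together with $(\evtr_1,\subs_1)\in\sem{\te_1}$ with $\evtr_1$ finite --- to switch to $\sem{\te_2}$; the $\inft{\sem{\te_1}}$ summand of $\ccatop$ corresponds to never reaching rule (cat-r), and is handled by Lemma~\ref{lemmaOmega}. For completeness I would take $(\evtr_1\catop\evtr_2,\subs)\in\sem{\te_1}\ccatop\sem{\te_2}$ and, using Propositions~\ref{prop:sem-one}--\ref{prop:sem-four} to split the substitution trace accordingly, reconstruct a derivation in $\csem{\te_1\catop\te_2}$ step by step along $\evtr_1$ (rule cat-l) then along $\evtr_2$ (rule cat-r after one application), which is exactly the content of Lemma~\ref{lemmaConcatExpansion}; the empty-$\evtr_2$ subcase is Lemma~\ref{lemmaEmptyTail}. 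For shuffle, the argument is analogous but driven by the generalized left-preferential shuffle $\glpshuffleop{}$: rules (shuffle-l)/(shuffle-r) interleave $\evtr_1$ and $\evtr_2$, and the side condition $\te_1\notrans{\ev}$ on (shuffle-r) is precisely what the second clause of $\glpshuffleop{\tproj{\its_1}}$ encodes (``$\te_1$ could not have contributed $\ev$ at that position''), matched via Lemma~\ref{lemmaHeadShuffle} for the head event and Lemma~\ref{lemma5} for the residuals; I expect this to be the \textbf{main obstacle}, since I must carefully track, across the interleaving, which residual of $\te_1$ is still live and show that the set $\tproj{\sem{\te_1}}$ indexing the generalized shuffle is the right one at every depth --- this is where the coinductive bookkeeping is heaviest, because the ``$\te_1$ cannot produce $\ev$ here'' condition refers to the \emph{whole} set of $\te_1$-traces, not just the one being shuffled in. Finally, the parametric case $\sem{\block{\dvar}{\te_1}}=\remove{\sem{\te_1}}{\dvar}$ follows by analysing rules (par-t)/(par-f): in both, the residual is (up to applying $\restrict{\subs}{\dvar}$) an instance of $\te_1'$, so combining Lemma~\ref{lemmaSubsDecomposition}, Lemma~\ref{lemmaLeadsToVar} and Lemmas~\ref{lemma3}/\ref{lemma4} lets me commute the variable deletion past the reduction and conclude by coinduction, with the $\emptyevtr$ base case handled by rule (e-par).
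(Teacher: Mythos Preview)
Your proposal is correct and follows essentially the same approach as the paper: unfolding $\sem{-}$ via $\csem{-}$ and the operational rules, using Lemma~\ref{lemma1}/Lemma~\ref{lemma5} to translate the $\notrans{\ev}$ side-conditions into the $\nopref$ clauses of the compositional operators, and closing the harder cases (concatenation, shuffle, let) by coinduction with exactly the auxiliary lemmas you cite (Lemmas~\ref{lemmaOmega}, \ref{lemmaEmptyTail}, \ref{lemmaConcatExpansion} for $\ccatop$; Lemma~\ref{lemmaHeadShuffle} for $\cshuffleop$; Lemmas~\ref{lemmaSubsDecomposition}, \ref{lemmaLeadsToVar} for the let-construct). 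One minor imprecision: your remark that intersection ``does not interleave operands across reduction steps'' is misleading since rule (and) keeps the residual $\te_1'\andop\te_2'$---the paper handles this by a direct coinductive argument on $\csem{-}$ rather than on the theorem, whereas your suggestion to use Lemmas~\ref{lemma3}/\ref{lemma4} plus a coinductive hypothesis also works and is in fact more explicit about the substitution bookkeeping than the paper's own sketch.
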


The proofs for the union, intersection, shuffle and let cases are omitted and can be found in the appendix. We decided not to report them due to space constraints. In the proofs that follow, we prove composed implications such as $A_1 \vee \ldots \vee A_n \implies B$, by splitting them into $n$ separate implications $A_1 \implies^1\; B$, $\ldots$, $A_n \implies^n\; B$.

The first operator we are going to analyse is the concatenation, where we are going to show that $(\evtr,\subs)\in\sem{\te_1\catop\te_2}\iff(\evtr,\subs)\in\sem{\te_1}\ccatop\sem{\te_2}$. 

The proof for the empty trace is trivial, and is constructed on top of the definition of the $\isEmpty$ predicate.

\begin{eqnarray*}
 (\emptyevtr,\emptyset)\in\sem{\te_1\catop\te_2} & \iff & (\emptyevtr,\emptyevtr)\in\csem{\te_1\catop\te_2} \wedge (\emptyevtr,\emptyevtr)\leadsto(\emptyevtr,\emptyset) \;\; \text{(by definition of $\sem{\te}$)} \\
 & \iff & \isEmpty(\te_1\catop\te_2) \text{ is derivable} \;\; \text{(by definition of $\csem{\te}$)} \\
 & \iff & \isEmpty(\te_1) \text{ is derivable } \wedge \isEmpty(\te_2) \text{ is derivable} \;\; \text{(by definition of $\isEmpty(\te)$)} \\
 & \iff & (\emptyevtr,\emptyevtr)\in\csem{\te_1} \wedge (\emptyevtr,\emptyevtr)\in\csem{\te_2} \wedge (\emptyevtr,\emptyevtr)\leadsto(\emptyevtr,\emptyset) \;\; \text{(by definition of $\csem{\te}$)} \\
 & \iff & (\emptyevtr,\emptyset)\in\sem{\te_1} \wedge (\emptyevtr,\emptyset)\in\sem{\te_2} \;\; \text{(by definition of $\sem{\te}$)} \\
 & \iff & (\emptyevtr,\emptyset)\in(\sem{\te_1}\ccatop\sem{\te_2}) \;\; \text{(by definition of $\ccatop$)}
\end{eqnarray*}

When the trace is not empty, we present the procedure to prove \emph{completeness} ($\implies$) and \emph{soundness} ($\impliedby$), separately.

Let us start with \emph{completeness}. To prove it, we have to show that the abstract semantics $\sem{\te_1\catop\te_2}$ (based on the original operational semantics) is included in the composition of the abstract semantics $\sem{\te_1}$ and $\sem{\te_2}$, using $\ccatop$ operator. More specifically, in the first part of the proof ($\implies^1$), the first event of the trace belongs to the head of the concatenation. Thus, the head is expanded through operational semantics, causing the term to be rewritten into a concatenation, where the head is substituted with a new term. Since the concrete semantics has been defined coinductively, we can conclude that the proof is satisfied by the so derived concatenation by coinduction. Finally, the proof is concluded recombining the new concatenation in terms of $\ccatop$. The second part of the proof ($\implies^2$) does not require coinduction, since the trace belongs to the tail of the concatenation. Through the operational semantics, the concatenation is rewritten into the new tail, and the proof is straightforwardly concluded following the abstract semantics.

\begin{eqnarray*}
(\ev\catop\evtr,\subs)\in\sem{\te_1\catop\te_2} & \implies & (\ev\catop\evtr,\substr)\in\csem{\te_1\catop\te_2} \wedge (\ev\catop\evtr,\substr)\leadsto(\ev\catop\evtr,\subs) \;\; \text{(by definition of $\sem{\te}$)} \\
& \implies & \te_1\catop\te_2\extrans{\ev}\te';\subs' \text{ is derivable } \wedge (\evtr,\substr')\in\csem{\subs'\te'} \;\; \text{(by definition of $\csem{\te}$)} \\
& \implies & (\te_1\extrans{\ev}\te_1';\subs' \text{ is derivable } \wedge \te_1\catop\te_2\extrans{\ev}\te_1'\catop\te_2;\subs' \text{ is derivable } \wedge \\
& & (\evtr,\substr')\in\csem{\subs'(\te_1'\catop\te_2)}) \vee \\
& & (\te_1\notrans{\ev} \wedge \isEmpty(\te_1) \wedge \te_2\extrans{\ev}\te_2';\subs' \text{ is derivable } \wedge \te_1\catop\te_2\extrans{\ev}\te_2';\subs' \text{ is derivable} \wedge \\
& & (\evtr,\substr')\in\csem{\subs'\te_2'}) \;\; \text{(by operational semantics)} \\
& \implies^1 & \te_1\extrans{\ev}\te_1';\subs' \text{ is derivable } \wedge \te_1\catop\te_2\extrans{\ev}\te_1'\catop\te_2;\subs' \text{ is derivable } \wedge \\
& & (\evtr,\subs'')\in\sem{\subs'(\te_1'\catop\te_2)} \wedge (\evtr,\substr')\leadsto(\evtr,\subs'') \wedge \subs=\subs''\subsMerge\subs' \\ 
& & \text{(by definition of $\sem{\te}$)} \\
& \implies^1 & \te_1\extrans{\ev}\te_1';\subs' \text{ is derivable } \wedge \te_1\catop\te_2\extrans{\ev}\te_1'\catop\te_2 \text{ is derivable } \wedge \\
& & (\evtr,\subs'')\in\sem{\subs'\te_1'}\ccatop\sem{\subs'\te_2} \wedge (\evtr,\substr')\leadsto(\evtr,\subs'') \wedge \subs=\subs''\subsMerge\subs' \\ 
& & \text{(by coinduction over $\sem{\te}$)} \\
& \implies^1 & \te_1\extrans{\ev}\te_1';\subs' \text{ is derivable } \wedge \te_1\catop\te_2\extrans{\ev}\te_1'\catop\te_2 \text{ is derivable } \wedge \\
& & (\evtr_1,\subs_1'')\in\sem{\subs'\te_1'} \wedge (\evtr_2,\subs_2'')\in\sem{\subs'\te_2} \wedge \evtr=\evtr_1\catop\evtr_2 \wedge \\
& & (\evtr_2=\emptyevtr \vee (\evtr_2=\ev'\catop\evtr_3 \wedge \evtr_1\catop\ev'\nopref\sem{\subs'\te_1'}))\;\; \text{(by definition of $\ccatop$)} \\
& \implies^1 & \te_1\extrans{\ev}\te_1';\subs' \text{ is derivable } \wedge (\evtr_1,\substr_1)\in\csem{\subs'\te_1'} \wedge (\evtr_1,\substr_1)\leadsto(\evtr_1,\subs_1'') \wedge \\
& & (\evtr_2,\subs_2'')\in\sem{\subs'\te_2} \wedge \evtr=\evtr_1\catop\evtr_2 \wedge \\
& & (\evtr_2=\emptyevtr \vee (\evtr_2=\ev'\catop\evtr_3 \wedge \evtr_1\catop\ev'\nopref\sem{\subs'\te_1'}))\;\; \text{(by definition of $\sem{\te}$)} \\
& \implies^1 & (\ev\catop\evtr_1,\subs'\catop\substr_1)\in\csem{\te_1} \wedge (\evtr_1,\substr_1)\leadsto(\evtr_1,\subs_1'') \wedge \\
& & (\evtr_2,\subs_2'')\in\sem{\subs'\te_2} \wedge \evtr=\evtr_1\catop\evtr_2 \wedge \\
& & (\evtr_2=\emptyevtr \vee (\evtr_2=\ev'\catop\evtr_3 \wedge \evtr_1\catop\ev'\nopref\sem{\subs'\te_1'}))\;\; \text{(by definition of $\csem{\te}$)} \\
& \implies^1 & (\ev\catop\evtr_1,\subs_1'')\in\sem{\subs'\te_1} \wedge (\evtr_2,\subs_2''\subsMerge\subs')\in\sem{\te_2} \wedge \evtr=\evtr_1\catop\evtr_2 \wedge \\
& & (\evtr_2=\emptyevtr \vee (\evtr_2=\ev'\catop\evtr_3 \wedge \evtr_1\catop\ev'\nopref\sem{\subs'\te_1'}))\\
& &  \text{(by definition of $\sem{\te}$ and Lemma~\ref{lemmaSubsRemoval})} \\
& \implies^1 & (\ev\catop\evtr,\subs)\in\sem{\te_1}\ccatop\sem{\te_2} \;\; \text{(by definition of $\ccatop$)} \\
& \implies^2 & (\ev\catop\evtr,\substr)\in\csem{\te_2} \wedge (\emptyevtr,\emptyevtr)\in\csem{\te_1} \wedge \te_1\notrans{\ev} \;\; \text{(by definition of $\csem{\te}$)} \\
& \implies^2 & (\ev\catop\evtr,\subs)\in\sem{\te_2} \wedge (\emptyevtr,\emptyset)\in\sem{\te_1} \wedge \te_1\notrans{\ev} \;\; \text{(by definition of $\sem{\te}$)} \\
& \implies^2 &  (\ev\catop\evtr,\subs)\in\sem{\te_2} \wedge (\emptyevtr,\emptyset)\in\sem{\te_1} \wedge (\emptyevtr\catop\ev)\nopref\sem{\te_1} \;\; \text{(by Lemma~\ref{lemma5})} \\
& \implies^2 & (\ev\catop\evtr,\subs)\in\sem{\te_1}\ccatop\sem{\te_2} \;\; \text{(by definition of $\ccatop$)} \\
 \end{eqnarray*}
 
 We now prove \emph{soundness}. To prove it, we show that the composition of abstract semantics $\sem{\te_1}$ and $\sem{\te_2}$ using the $\ccatop$ operator is included in the abstract semantics of the related concatenation term $\sem{\te_1\catop\te_2}$. The resulting proof is splitted in four separated cases. When the trace belongs to $\sem{\te_1}$ is infinite ($\implies^1$). The proof is based on the fact that an infinite trace concatenated to another trace is always equal to itself. In all the other cases, the proof can be  fully derived by a direct application of the operational semantics. 

 \begin{eqnarray*}
(\ev\catop\evtr,\subs)\in\sem{\te_1}\ccatop\sem{\te_2} & \implies & (\ev\catop\evtr)\in\inft{\sem{\te_1}} \vee \\ 
& & (\ev\catop\evtr = \evtr_1\catop\evtr_2 \wedge (\evtr_1,\subs_1)\in\sem{\te_1} \wedge (\evtr_2,\subs_2)\in\sem{\te_2} \wedge \subs = \subs_1\subsMerge\subs_2 \wedge \\
& & (\evtr_2 = \emptyevtr \vee (\evtr_2 = \ev'\catop\evtr_3 \wedge \evtr_1\catop\ev'\nopref\sem{\te_1}))) \;\; \text{(by definition of $\ccatop$)} \\
& \implies & (\ev\catop\evtr)\in\inft{\sem{\te_1}} \vee \\ 
& & (\evtr_1=\emptyevtr \wedge (\emptyevtr,\emptyset)\in\sem{\te_1} \wedge (\ev\catop\evtr,\subs)\in\sem{\te_2} \wedge \ev\nopref\sem{\te_1}) \vee \\
& & (\evtr_2 = \emptyevtr \wedge (\ev\catop\evtr)\in\sem{\te_1} \wedge (\emptyevtr,\emptyset)\in\sem{\te_2}) \vee \\
& & (\evtr_1=\ev\catop\evtr_1' \wedge \evtr_2=\ev'\catop\evtr_3 \wedge \evtr_1\catop\ev'\nopref{}\sem{\te_1} \wedge \\
& & (\ev\catop\evtr_1,\subs_1)\in\sem{\te_1} \wedge (\ev'\catop\evtr_3)\in\sem{\te_2} \wedge \subs=\subs_1\subsMerge\subs_2) \\
\end{eqnarray*}
\vspace*{-1.3cm}
\begin{eqnarray*}
(\ev\catop\evtr,\subs)\in\inft{\sem{\te_1}} & \implies^1 & (\ev\catop\evtr,\subs)\in\sem{\te_1} \wedge \evtr \text{ infinite} \;\; \text{(by definition of $\inft{}$)} \\
& \implies^1 & (\ev\catop\evtr,\substr)\in\csem{\te_1} \wedge (\ev\catop\evtr,\substr)\leadsto(\ev\catop\evtr,\subs) \wedge \\
& &  \evtr \text{ infinite} \;\; \text{(by definition of $\sem{\te}$)} \\
& \implies^1 & \te_1\extrans{\ev}\te_1';\subs' \text{ is derivable } \wedge (\evtr,\substr')\in\csem{\subs'\te_1'} \wedge \\
& & (\ev\catop\evtr,\substr)\leadsto(\ev\catop\evtr,\subs) \wedge \evtr \text{ infinite} \;\; \text{(by definition of $\csem{\te}$)} \\
& \implies^1 & \te_1\extrans{\ev}\te_1';\subs' \text{ is derivable } \wedge (\evtr,\substr')\in\csem{\subs'(\te_1'\catop\te_2)} \wedge \\
& & (\ev\catop\evtr,\substr)\leadsto(\ev\catop\evtr,\subs) \wedge \evtr \text{ infinite} \;\; \text{(by Lemma~\ref{lemmaOmega})} \\
& \implies^1 & \te_1\catop\te_2\extrans{\ev}\te_1'\catop\te_2;\subs' \text{ is derivable } \wedge (\evtr,\substr')\in\csem{\subs'(\te_1'\catop\te_2)} \wedge \\
& & (\ev\catop\evtr,\substr)\leadsto(\ev\catop\evtr,\subs) \wedge \evtr \text{ infinite} \;\; \text{(by operational semantics)} \\
& \implies^1 & (\ev\catop\evtr,\substr)\in\csem{\te_1\catop\te_2} \wedge (\ev\catop\evtr,\substr)\leadsto(\ev\catop\evtr,\subs) \;\; \text{(by definition of $\csem{\te}$)} \\
& \implies^1 & (\ev\catop\evtr,\subs)\in\sem{\te_1\catop\te_2} \;\; \text{(by definition of $\sem{\te}$)} \\
% & \implies^1 & \te_1\trans{\ev}\te_1';\subs_1 \text{ is derivable } \wedge (\evtr,\subs_2)\in\sem{\te_1'} \wedge \dom(\subs_2)\subseteq\fv(\te_1') \wedge \\
% & & \subs = \subs_1\subsMerge\subs_2 \wedge \evtr \text{ infinite} \;\; \text{(by definition  of $\sem{\te}$)} \\
%& \implies^1 & \te_1\trans{\ev}\te_1';\subs_1 \text{ is derivable } \wedge (\evtr,\subs_2)\in\sem{\te_1'}\ccatop\sem{\te_2} \wedge \dom(\subs_2)\subseteq\fv(\te_1') \wedge \\
%& & \subs=\subs_1\subsMerge\subs_2 \;\; \text{(by definition of $\ccatop$)} \\
% & \implies^1 & \te_1\trans{\ev}\te_1';\subs_1 \text{ is derivable } \wedge (\evtr,\subs_2)\in\sem{\te_1'\catop\te_2} \wedge \dom(\subs_2)\subseteq\fv(\te_1') \wedge \\
% & & \subs=\subs_1\subsMerge\subs_2 \;\; \text{(by Lemma~\ref{lemmaOmega})} \\
% & \implies^1 & \te_1\catop\te_2\trans{\ev}\te_1'\catop\te_2;\subs_1 \text{ is derivable } \wedge (\evtr,\subs_2)\in\sem{\te_1'\catop\te_2} \wedge \dom(\subs_2)\subseteq\fv(\te_1') \wedge \\
% & & \subs=\subs_1\subsMerge\subs_2 \;\; \text{(by operational semantics)} \\
% & \implies^1 & (\ev\catop\evtr,\subs)\in\sem{\te_1\catop\te_2} \;\; \text{(by definition of $\sem{\te}$)} \\
\end{eqnarray*}
\vspace*{-1cm}
\begin{eqnarray*}
(\evtr_1=\emptyevtr \wedge (\emptyevtr,\emptyset)\in\sem{\te_1} \wedge & & \\
(\ev\catop\evtr,\subs)\in\sem{\te_2} \wedge \ev\nopref\sem{\te_1}) & \implies^2 & \isEmpty(\te_1) \text{ is derivable } \wedge (\ev\catop\evtr,\subs)\in\sem{\te_2} \wedge \\
& & \ev\nopref\sem{\te_1} \;\; \text{(by definition of $\sem{\te}$)} \\
& \implies^2 & \isEmpty(\te_1) \text{ is derivable } \wedge \te_2\trans{\ev}\te_2';\subs' \text{ is derivable } \wedge \\
& & (\evtr,\substr')\in\csem{\subs'\te_2'} \wedge \ev\nopref\sem{\te_1} \wedge (\ev\catop\evtr,\substr)\leadsto(\ev\catop\evtr,\subs) \\
& & \text{(by definition of $\csem{\te}$)} \\
& \implies^2 & \te_1\catop\te_2\trans{\ev}\te_2';\subs' \text{ is derivable } \wedge (\evtr,\substr')\in\csem{\subs'\te_2'} \\
& & \text{(by operational semantics)} \\
& \implies^2 & (\ev\catop\evtr,\substr)\in\csem{\te_1\catop\te_2} \wedge (\ev\catop\evtr,\substr)\leadsto(\ev\catop\evtr,\subs) \\
& & \text{(by definition of $\csem{\te}$)} \\
& \implies^2 & (\ev\catop\evtr,\subs)\in\sem{\te_1\catop\te_2} \;\; \text{(by definition of $\sem{\te}$)} \\
% & \implies^2 & \isEmpty(\te_1) \text{ is derivable } \wedge \te_2\trans{\ev}\te_2';\subs_2' \text{ is derivable } \wedge \\ 
% & & (\evtr, \subs_2'')\in\sem{\te_2'} \wedge \dom(\subs_2'')\subseteq\fv(\te_2') \wedge \subs=\subs_2'\subsMerge\subs_2''\\ 
% & & \wedge \ev\nopref\sem{\te_1} \;\; \text{(by definition of $\sem{\te}$)} \\
% & \implies^2 & \te_1\catop\te_2\trans{\ev}\te_2';\subs_2' \text{ is derivable } \wedge (\evtr,\subs_2'')\in\sem{\te_2'} \wedge \\
% & & \dom(\subs_2'')\subseteq\fv(\te_2') \wedge \subs=\subs_2'\subsMerge\subs_2'' \;\; \text{(by operational semantics)} \\
% & \implies^2 & (\ev\catop\evtr,\subs)\in\sem{\te_1\catop\te_2} \;\; \text{(by definition of $\sem{\te}$)} \\
 \end{eqnarray*}
\vspace*{-1.2cm}
\begin{eqnarray*}
(\evtr_2 = \emptyevtr \wedge (\ev\catop\evtr)\in\sem{\te_1} \wedge (\emptyevtr,\emptyset)\in\sem{\te_2}) & \implies^3 & (\ev\catop\evtr,\subs)\in\sem{\te_1\catop\te_2}  \;\; \text{(by Lemma~\ref{lemmaEmptyTail})}\\
%(\evtr_2 = \emptyevtr \wedge (\ev\catop\evtr)\in\sem{\te_1} \wedge (\emptyevtr,\emptyset)\in\sem{\te_2}) & \implies^3 & \te_1\trans{\ev}\te_1';\subs_1 \text{ is derivable } \wedge (\evtr,\subs_2)\in\sem{\te_1'} \wedge \dom(\subs_2)\subseteq\fv(\te_1') \wedge \\
%& & \subs=\subs_1\subsMerge\subs_2 \wedge (\emptyevtr,\emptyset)\in\sem{\te_2} \;\; \text{(by definition of $\sem{\te}$)} \\
%& \implies^3 & \te_1\trans{\ev}\te_1';\subs_1 \wedge (\evtr,\subs_2)\in\sem{\te_1'}\ccatop\sem{\te_2} \wedge \dom(\subs_2)\subseteq\fv(\te_1') \wedge \\
%& & \subs=\subs_1\subsMerge\subs_2 \;\; \text{(by definition of $\ccatop$)} \\
%& \implies^3 & \te_1\trans{\ev}\te_1';\subs_1 \wedge (\evtr,\subs_2)\in\sem{\te_1'\catop\te_2} \;\; \text{(by coinduction over $\sem{\te_1'}\ccatop\sem{\te_2}$)} \\
%& \implies^3 & \te_1\catop\te_2\trans{\ev}\te_1'\catop\te_2;\subs_1 \text{ is derivable } \wedge (\evtr,\subs_2)\in\sem{\te_1'\catop\te_2} \wedge \\
%& & \dom(\subs_2)\subseteq\fv(\te_1'\catop\te_2) \wedge \subs=\subs_1\subsMerge\subs_2 \;\; \text{(by operational semantics)} \\
%& \implies^3 & (\ev\catop\evtr,\subs)\in\sem{\te_1\catop\te_2} \;\; \text{(by definition of $\sem{\te}$)} \\
\end{eqnarray*}
\vspace*{-1.2cm}
\begin{eqnarray*}
(\evtr_1=\ev\catop\evtr_1' \wedge \evtr_2=\ev'\catop\evtr_3 \wedge & &  \\
\evtr_1\catop\ev'\nopref{}\sem{\te_1} \wedge (\ev\catop\evtr_1,\subs_1)\in\sem{\te_1} \wedge & & \\ 
(\ev'\catop\evtr_3,\subs_2)\in\sem{\te_2} \wedge \subs=\subs_1\subsMerge\subs_2) & \implies^4 & \te_1\extrans{\ev}\te_1';\subs_1' \text{ is derivable } \wedge (\evtr_1,\substr_1)\in\csem{\subs_1'\te_1} \wedge \\
& & (\evtr_1,\substr_1)\leadsto(\evtr_1,\subs_1'') \wedge \te_2\trans{\ev'}\te_2';\subs_2' \text{ is derivable } \wedge \\
& & (\evtr_3,\substr_2')\in\sem{\subs_2'\te_2'} \wedge (\evtr_2,\substr_2')\leadsto(\evtr_2,\subs_2'') \wedge \te_1\notrans{\ev'} \text{ is derivable} \\
& & \wedge \subs_1=\subs_1'\subsMerge\subs_1'' \wedge \subs_2=\subs_2'\subsMerge\subs_2'' \;\; \text{(by operational semantics)} \\
& \implies^4 & (\evtr_1\catop\evtr_2,\substr)\in\csem{\te_1\catop\te_2} \;\; \text{(by Lemma~\ref{lemmaConcatExpansion})} \\
\end{eqnarray*}
\vspace*{-1cm}

\added{
\section{Related Work}
\label{sect:related_work}

Compositionality, determinism and events-based semantics are topics
very central to concurrent systems.  Winskel has introduced the notion
of event structure \cite{Winskel86} to model computational processes
as sets of event occurrences together with relations representing
their causal dependencies.  Vaandrager \cite{Vaandrager91} has proved
that for concurrent deterministic systems it is sufficient to observe
the beginning and end of events to derive its causal structure.  Lynch
and Tuttle have introduced input/output automata \cite{LynchT87} to
model concurrent and distributed discrete event systems with a trace
semantics consisting of both finite and infinite sequences of actions.

The rest of this section describes some of the main RV techniques and state-of-the-art tools and compares them with respect to \rml;
more comprehensive surveys on RV can be found in literature \cite{DelgadoGR04,HavelundG05,rv,SokolskyHL12,Falcone13,BartocciFFR18,HavelundRTZ18} which mention formalisms for parameterised runtime verification that have not deliberately presented here for space limitation.
\paragraph{Monitor-oriented programming:}
Similarly as \rml, which does not depend on the monitored system and its instrumentation, other proposals introduce different levels of separation of concerns.
\emph{Monitor-oriented programming} (\emph{MOP} \cite{ChenR07}) is an infrastructure for RV that is neither tied to any particular programming language nor to a single specification language.%; indeed, it has been instantiated on a number of formalisms.
In order to add support for new logics, one has to develop an appropriate plug-in converting specifications to one of the format supported by the MOP instance of the language of choice; %(see, for instance, \cite{mop}), so that the infrastructure can synthesize an appropriate monitor.
%% MOP can also be understood as a software development methodology where the program to be verified is actually aware of the monitoring process, and the two entities can interact with each other.
%% This approach has also been suggested in the context of \emph{runtime reflection} \cite{BauerLS06}.
%% This is fundamentally different from \rml, where the monitor basically sits on top of the existing program and is passive in nature.
the main formalisms implemented in existing MOP include finite state machines, extended regular expressions, context-free grammars and
temporal logics.
Finite state machines (or, equivalently, regular expressions) can be easily translated to \rml, have limited expressiveness,
but are widely used in RV  because they are well-understood among software developers
as opposite to other more sophisticated approaches, as temporal logics.
Extended regular expressions include intersection and complement; although such
operators allow users to write more compact specifications, they do not increase the formal expressive power since
regular languages are closed under both.
Deterministic Context-Free grammars (that is, deterministic pushdown automata) can be translated in \rml using recursion, concatenation, union, and the empty trace, while the relationship with Context-Free grammars (that is, pushdown automata) has not been fully investigated yet;
as stated in the introduction, \rml can express several non Context-Free properties, hence \rml cannot be less expressive
than Context-Free grammars, but we do not know whether Context-Free grammars are less expressive than \rml.

%% Finally, temporal logics have been discussed previously in this section.

%% State machine systems are sometimes employed in mixed system together with other types of formalism.
%% \emph{TraceContract}\footnote{\url{https://github.com/havelund/tracecontract}} \cite{tracecontract11}, for instance, offers a mix of finite state machines and temporal logic features.
%% It is implemented and available as a Scala internal DSL.

%% As expressive as finite state machines, regular expressions have been used in RV for their simple semantics and their well-understood meaning among software developers.
%% \emph{Tracematches}\footnote{\url{https://patricklam.ca/research/tracematch/}} \cite{AllanACHKLMSST05} use regular expressions augmented with variables to get parametricity.
%% The tool is implemented in AspectJ and is devoted to Java programs verification, especially regarding correct usage of APIs.
%% This is similar to what we proposed \cite{AnconaDF18,ParametricJava17} using the precursor of \rml, namely parametric trace expressions \cite{AnconaFM17}.
%% Indeed, quite some common API usage patterns can be formalized using regular languages, though more operators (like shuffle and intersection) allow more flexibility and expressive power.
\paragraph{Temporal logics:}
Since RV has its roots in model checking, it is not surprising that logic-based formalism previously introduced in the context of the latter have been applied to the former.
\emph{Linear Temporal Logic} (\emph{LTL}) \cite{ltl}, is one of the most used formalism in verification.
%% Its modalities refer to events in the future; the two fundamental operators are $\operatorname{X}\phi$ (``$\phi$ has to hold at the next state'') and $\phi \operatorname{U} \psi$ (``$\phi$ has to hold at least until $\psi$ does, which must happen at the current or a future time'').
%% Note that the standard semantics of LTL is defined on infinite traces only.
%% Finally, ``linear'' comes from the fact that only one possible path is taken into account.

Since the standard semantics of LTL is defined on infinite traces only, and RV monitors can only
check finite trace prefixes (as opposed to static formal verification), a three-valued semantics for LTL, named \emph{LTL\textsubscript{3}} has been proposed \cite{ltl3}.
Beyond the basic ``true'' and ``false'' truth values, a third ``inconclusive'' one is considered (LTL specification syntax is unchanged, only the semantics is modified to take into account the new value).
This allows one to distinguish the satisfaction/violation of the desired property (``false'') from the lack of sufficient evidence among the events observed so far (``inconclusive''), making this semantics more suited to RV.
Differently from LTL, the semantics of LTL\textsubscript{3} is defined on finite prefixes, making it more suitable for comparison with other RV formalisms.
Further development of LTL\textsubscript{3} led to \emph{RV-LTL} \cite{ltl4}, a 4-valued semantics on which \rml monitor verdicts are based on.

The expressive power of LTL is the same as of star-free $\omega$-regular languages \cite{PnueliZ93}.
When restricted to finite traces, \rml is much more expressive than LTL as any regular expression can be trivially translated to it; however,
on infinite traces, the comparison is slightly more intricate since \rml and LTL\textsubscript{3} have incomparable expressiveness \cite{AnconaFM16}.
There exist many extensions of LTL that deal with time in a more quantitative way (as opposed to the strictly qualitative approach of standard LTL) without increasing the expressive power, like \emph{interval temporal logic} \cite{CauZ97}, \emph{metric temporal logic} \cite{ThatiR05} and \emph{timed LTL} \cite{ltl3}.
Other proposals go beyond regularity \cite{AlurEM04} and even context-free languages \cite{BolligDL12}.

Several temporal logics are embeddable in \emph{recHML} \cite{Larsen90}, a variant of the \emph{modal $\mu$-calculus} \cite{Kozen83};
this allows the formal study of monitorability \cite{AcetoEtAl2019} in a general framework, to derive results for free about any formalism that can be
expressed in such calculi.
It would be interesting to study whether the \rml trace calculus could be derivable to get theoretical results that are missing from this presentation.
Unfortunately, it is not clear whether our calculus and \emph{recHML} are comparable at all.
For instance, \emph{recHML} is a fixed-point logic including both least and greatest fixpoint operators, while our calculus implicitly uses a greatest fixpoint semantics for recursion. Nonetheless, \emph{recHML} does not include a shuffle operator, and we are not aware of a way to derive it from other operators.% (indeed it is part of our core calculus).

Regardless of the formal expressiveness, \rml and temporal logics are essentially different: \rml is closer to formalisms
with which software developers are more familiar, as regular expressions and
Context-Free languages, but does not offer direct support for time;
however, if the instrumentation provides timestamps, then some time-related properties can still be expressed exploiting parametricity.

\paragraph{State machines:}
%% High-level (runtime) verification languages are often translated to low-level, more directly executable ones.
%% This is the approach of \rml as well, where specifications are compiled down to the trace calculus terms, for which a rewriting semantics is defined, that in turn drives the implementation.
%% This step is sometimes called monitor synthesis.
%% While this allows one to write down the specification at a higher abstraction level, it clearly needs more work and tools to be developed (and optimized).

As opposite to the language-based approach, as \rml, specifications can be defined using \emph{state machines} (a.k.a.\ automata or finite-state machines).
Though the core concept of a finite set of states and a (possibly input-driven) transition function between them is always there, in the field of automata theory different formalizations and extensions bring the expressiveness anywhere from simple deterministic finite automata to Turing machines.
%%From an RV perspective, they can be understood as executable specifications (this is also true for rewriting based approaches like \rml).

An example of such formalisms is \emph{DATE} (Dynamic Automata with Timers and Events \cite{DATEs}), an extension of the finite-state automata computational model based on communicating automata with timers and transitions triggered by observed events.
This is the basis of \emph{LARVA}%\footnote{\url{http://www.cs.um.edu.mt/svrg/Tools/LARVA/}}
\cite{ColomboPS09}, a Java RV tool focused on control-flow and real-time properties, exploiting the expressiveness of the underlying system (DATE).
%The tool also supports other logics, like \emph{QDDC} (Quantified Discrete-time Duration Calculus \cite{Pandya00}), \emph{LUSTRE} \cite{HalbwachsLR92}, and \emph{counter-example traces} \cite{Hoenicke2006}; still, all the supported logics are internally translated to it.

The main feature of LARVA that is missing in \rml is the support for temporized properties, as observed events can trigger timers for other expected events.
%% Currently the only way \rml can monitor real-time properties is with the support of an instrumentation adding timestamps to serialized events; then,
%% by adding arithmetic constraints in event type definitions, it is possible to verify time constraints, although with some limitation: monitors could emit a negative verdict only after the expected event (or another one) is actually observed, and not as soon as the time-out has expired.
%% In the worst case, this could postpone the final verdict indefinitely, unless the instrumentation layer could send clock-based time-out events to the monitor, but this would lead to efficiency and synchronization problems, especially in distributed systems.
On the other hand, the parametric verification support of \rml is more general.
LARVA scope mechanism works at the object level, thus parametricity is based on \emph{trace slicing} \cite{HavelundRTZ18} and implemented by spawning new monitors and associating them with different objects.
The \rml approach is different as specifications can be parametric with respect to any observed data thanks to 
event type patterns and the let-construct to control the scope of the variables occurring in them.
%% : this makes it easier to write down parametric specifications that are still related to the whole system, or to a family of otherwise unrelated (and possibly dynamically discovered) objects.
Limitations of the parametric trace slicing approach described above, as well as possible generalizations to overcome them, have been explored by \cite{ChenR09,BarringerFHRR12,RegerCR15}.
%% In \rml we decided not to reason on trace slicing but rather to keep just one (global) specification and syntactically instantiate parameters as soon as they are discovered, a task simplified by our design choice of having variable declarations together with a notion of scope for them.

Finally, the goals of the two tools are different:
while \rml strives to be system-independent, LARVA is devoted to Java verification, and the implementation relies on AspectJ %\footnote{\url{https://www.eclipse.org/aspectj/}}
\cite{KiczalesHHKPG01} as an ``instrumentation'' layer allowing one to inject code (the monitor) to be executed at specific locations in the program.
}

\section{Conclusion}\label{sec:conclu}

We have moved a first step towards a compositional semantics of the \rml trace calculus, by introducing
the notion of instantiated event trace, defining the semantics of trace expressions in terms of sets of
instantiated event traces and showing how each basic trace expression operator can be interpreted as
an operation over sets of instantiated event traces; we have formally proved that such an interpretation is
equivalent to the semantics derived from the transition system of the calculus if one considers only
contractive terms.

For simplicity, here we have considered only the core of the calculus, but we plan to extend our result to
the full calculus, which includes also the prefix closure operator and a top-level layer with constructs to support
generic specifications \cite{FranceschiniPhD2020}. Another interesting direction for further investigation consists in studying how the notion
of contractivity influences the expressive power of the calculus and, hence, of \rml; although we have failed so far to find a non-contractive
term whose semantics is not equivalent to a corresponding contractive trace expression, we have not formally proved that contractivity does not limit the
expressive power of the calculus.

\bibliographystyle{eptcs}
\bibliography{ms}
\appendix
\section{Appendix}

\paragraph{Proof of Proposition~\ref{prop:totality}}
  Since $f(n_1)<f(n_2)$ implies $f(n_1)\neq f(n_2)$, and $f$ is a function, we can deduce that
  $n_1\neq n_2$, therefore $n_1<n_2$ or $n_1>n_2$. By contradiction, let us assume that $n_1>n_2$;
  since $f$ is strictly increasing we have $f(n_1)>f(n_2)$ which is not possible because $f(n_1)<f(n_2)$ by hypothesis. 

\paragraph{Proof of Proposition~\ref{prop:identity}}
    We prove the claim by induction over $\nat$.
    \begin{itemize}
    \item basis: let us assume that $0\in\dom(f)$ and, by contradiction, $f(0)=m+1$ with $m\in\nat$; hence, $m+1\in N$ and, by condition 3, $m\in N$ and
      by condition 1, there exists $n'\in\dom(f)$ s.t. $f(n')=m$; since $f(n')=m<m+1=f(0)$, by proposition~\ref{prop:totality}
      we deduce the contradiction $n'<0$.
    \item induction step: let us assume that $n+1\in\dom(f)$, then by condition 2, $n\in\dom(f)$ and by induction hypothesis, $f(n)=n$. Since $f$ is strictly increasing $n=f(n)<f(n+1)$; by contradiction, let us assume that $f(n+1)=m+1$ with $m>n+1$. Hence, $m+1\in N$ and, by condition 3, $m\in N$ and
      by condition 1, there exists $n'\in\dom(f)$ s.t. $f(n')=m$. Since $f$ is strictly increasing, $f(n)=n$, $f(n')=m$, $f(n+1)=m+1$, and $n<m<m+1$, by proposition~\ref{prop:totality} we can deduce the contradiction $n<n'<n+1$.
    \end{itemize}
  
\paragraph{Proof of Proposition~\ref{prop:sem-one}}
By induction on $\lgth{\evtr}$.

\paragraph{Proof of Proposition~\ref{prop:sem-two}}
By induction on $n$ one can prove that for all $n\in\nat$, $\substr$ has a prefix of length $n$.

\paragraph{Proof of Proposition~\ref{prop:sem-three}}
By induction on $n+m$, Lemma~\ref{lemma:vars} and the fact that $\fv(\subs\te)\cap\dom(\subs)=\emptyset$.

\paragraph{Proof of Proposition~\ref{prop:sem-four}}
By induction on $n$, Lemma~\ref{lemma:vars} and the fact that $\fv(\subs\te)\subseteq\fv(\te)$.

\paragraph{Proof of Lemma~\ref{lemmaEmptyTail}}
Let us say that $\evtr = \ev_1\catop\ldots\catop\ev_n$ is a finite sequence of events; we can expand $(\evtr,\substr)\in\csem{\te}$ into $\te\trans{\ev_1}\te_1;\subs_1$, $\te_1\trans{\ev_2}\te_2;\subs_2$, $\ldots$, $\te_{n-1}\trans{\ev_n}\te_n;\subs_n$ with $\subs = \subs_1 \catop \ldots \catop \subs_n$ and $\isEmpty(\te_n)$ derivable. Since $\isEmpty(\te')$ is derivable, by definition of $\isEmpty(-)$, also \added{$\isEmpty(\te_n\catop\te')$} is derivable. By operational semantics, if $\te\trans{\ev}\te_1;\subs_1$, then $\te\catop\te'\trans{\ev}\te_1\catop\te';\subs_1$, so we can rewrite the previous sequence as $\te\catop\te'\trans{\ev_1}\te_1\catop\te';\subs_1$, $\te_1\catop\te'\trans{\ev_2}\te_2\catop\te';\subs_2$, $\ldots$, $\te_{n-1}\catop\te'\trans{\ev_n}\te_n\catop\te';\subs_n$, concluding that $(\evtr,\substr)\in\csem{\te\catop\te'}$.

\paragraph{Proof of Lemma~\ref{lemmaConcatExpansion}}
By operational semantics, we can rewrite each transition $\te_1^{i-1}\trans{\ev_i}\te_1^i;\subs_1^i$, as $\te_1^{i-1}\catop\te_2\trans{\ev_i}\te_1^i\catop\te_2;\subs_1^i$. Since $(\evtr_1,\subs)\in\sem{\te_1}$, we know $\isEmpty(\te_1^n)$ (by definition of $\sem{\te}$). Since we assume $\te_1^n\notrans{\ev_2}$ and $\te_2\trans{\ev_2}\te_2^1;\subs_2^1$, we infer $\te_1^n\catop\te_2\trans{\ev_2}\te_2^1;\subs_2^1$ (by operational semantics). Consequently, $\te_1\catop\te_2\trans{\ev_1}\te_1^1\catop\te_2;\subs_1^1$, $\ldots$, $\te_1^{n-1}\catop\te_2\trans{\ev_n}\te_1^n\catop\te_2;\subs_1^n$, $\te_1^n\catop\te_2\trans{\ev_2}\te_2^1;\subs_2^1$, with $\subs_1=\subs_1^1\cup\ldots\cup\subs_1^n$ and $(\evtr_2,\subs_2)\in\sem{\te_2^1}$. Following definition of $\sem{\te}$, by knowing $\subs=\subs_1\cup\subs_2$ exists, we can infer $(\ev_1\catop\ldots\ev_n\catop\ev_2\catop\evtr_2,\subs_1\cup\subs_2)\in\sem{\te_1\catop\ldots\te_1^n\catop\te_2}$, concluding $(\evtr,\subs)\in\sem{\te_1\catop\te_2}$.

\paragraph{Proof of Union}
To prove that $\sem{\te_1\orop\te_2}=\sem{\te_1}\corop\sem{\te_2}$, we have to show that $\sem{\te_1\orop\te_2}\subseteq\sem{\te_1}\corop\sem{\te_2}$ and $\sem{\te_1}\corop\sem{\te_2}\subseteq\sem{\te_1\orop\te_2}$.
Let us start with the first inclusion, which can be reformulated as $\forall_{(\te,\subs)\in\sem{\te_1\orop\te_2}}.(\te,\subs)\in\sem{\te_1}\corop\sem{\te_2}$. We split the demonstration over the trace structure. 

When the trace is empty, we derive:
\begin{eqnarray*}
(\emptyevtr,\emptyset)\in\sem{\te_1\orop\te_2} & \implies & (\emptyevtr,\emptyevtr)\in\csem{\te_1\orop\te_2} \wedge (\emptyevtr,\emptyevtr)\leadsto(\emptyevtr,\emptyset) \;\; \text{(by definition of $\sem{\te}$)} \\
& \implies & \isEmpty(\te_1\orop\te_2) \;\; \text{(by definition of $\csem{\te}$)} \\
& \implies & \isEmpty(\te_1) \vee \isEmpty(\te_2) \;\; \text{(by definition of $\isEmpty(\te)$)} \\
& \implies & (\emptyevtr,\emptyevtr)\in\csem{\te_1} \vee (\emptyevtr,\emptyevtr)\in\csem{\te_2} \;\; \text{(by definition of $\csem{\te}$)} \\
& \implies & ((\emptyevtr,\emptyevtr)\in\csem{\te_1} \wedge (\emptyevtr,\emptyevtr)\leadsto(\emptyevtr,\emptyset)) \vee \\
& & ((\emptyevtr,\emptyevtr)\in\csem{\te_2} \wedge (\emptyevtr,\emptyevtr)\leadsto(\emptyevtr,\emptyset)) \;\; \text{(by definition of $\csem{\te}$)} \\
& \implies & (\emptyevtr,\emptyset)\in\sem{\te_1} \vee (\emptyevtr,\emptyset)\in\sem{\te_2} \;\; \text{(by definition of $\sem{\te}$)} \\
& \implies & (\emptyevtr,\emptyset)\in\sem{\te_1}\corop\sem{\te_2} \;\; \text{(by definition of $\corop$)} \\
\end{eqnarray*}

When the trace is not empty, we derive:
\begin{eqnarray*}
(\ev\catop\evtr,\subs)\in\sem{\te_1\orop\te_2} & \implies & (\ev\catop\evtr,\substr)\in\csem{\te_1\orop\te_2} \wedge (\ev\catop\evtr,\substr)\leadsto(\ev\catop\evtr,\subs) \;\; \text{(by definition of $\sem{\te}$)} \\
& \implies & \te_1\orop\te_2\trans{\ev}\te';\subs' \text{ is derivable } \wedge (\evtr,\substr')\in\csem{\subs'\te'} \;\; \text{(by definition of $\csem{\te}$)} \\
& \implies & (\te_1\extrans{\ev}\te';\subs' \text{ is derivable } \vee (\te_1\notrans{\ev} \wedge\; \te_2\extrans{\ev}\te';\subs' \text{ is derivable})) \wedge \\ 
& & (\evtr,\substr')\in\csem{\subs'\te'} \;\; \text{(by operational semantics)} \\
& \implies & ((\ev\catop\evtr,\substr)\in\csem{\te_1}) \vee (\ev\nopref\sem{\te_1}\wedge(\ev\catop\evtr,\substr)\in\csem{\te_2}) \;\; \text{(by definition of $\csem{\te}$)} \\
& \implies & (\ev\catop\evtr,\subs)\in\sem{\te_1} \vee (\ev\catop\evtr,\subs)\in\sem{\te_2} \;\; \text{(by definition of $\sem{\te}$)} \\
& \implies & (\ev\catop\evtr,\subs)\in\sem{\te_1}\corop\sem{\te_2} \;\; \text{(by definition of $\corop$)} \\
% &\implies& \te_1\orop\te_2\extrans{\ev}\te';\subs_1 \text{ is derivable } \wedge (\evtr,\subs_2)\in\sem{\te'} \wedge \\
% & & \dom(\subs_2)\subseteq\fv(\te') \wedge \subs=\subs_1\subsMerge\subs_2 \;\; \text{(by definition of $\sem{\te}$)} \\
% & \implies & (\te_1\extrans{\ev}\te';\subs_1 \text{ is derivable } \vee (\te_1\notrans{\ev} \wedge\; \te_2\extrans{\ev}\te';\subs_1 \text{ is derivable})) \\ 
% & & \wedge\; (\evtr,\subs_2)\in\sem{\te'} \wedge \dom(\subs_2)\subseteq\fv(\te') \wedge \subs=\subs_1\subsMerge\subs_2 \;\; \text{(by operational semantics)} \\
% & \implies & (\te_1\extrans{\ev}\te';\subs_1 \text{ is derivable } \wedge (\evtr,\subs_2)\in\sem{\te'} \wedge \dom(\subs_2)\subseteq\fv(\te') \wedge \subs=\subs_1\subsMerge\subs_2) \vee \\
% & & (\te_1\notrans{\ev} \wedge\; \te_2\extrans{\ev}\te';\subs_1 \text{ is derivable} \wedge (\evtr,\subs_2)\in\sem{\te'} \wedge \dom(\subs_2)\subseteq\fv(\te') \wedge \subs=\subs_1\subsMerge\subs_2) \\ 
% & & \text{(distribution property)} \\
% & \implies & ((\ev\catop\evtr,\subs)\in\sem{\te_1}) \vee (\ev\nopref\sem{\te_1}\wedge(\ev\catop\evtr,\subs)\in\sem{\te_2}) \;\; \text{(by definition of $\sem{\te}$)} \\
% & \implies & (\ev\catop\evtr,\subs)\in\sem{\te_1}\corop\sem{\te_2} \;\; \text{(by definition of $\corop$)} \\
\end{eqnarray*}

Now we show the other way around, \textit{i.e.} $\forall_{(\te,\subs)\in\sem{\te_1}\corop\sem{\te_2}}.(\te,\subs)\in\sem{\te_1\orop\te_2}$. We split the demonstration over the trace structure again. 

When the trace is empty, we derive:
\begin{eqnarray*}
(\emptyevtr,\emptyset)\in\sem{\te_1}\corop\sem{\te_2} & \implies & (\emptyevtr,\emptyset)\in\sem{\te_1} \vee (\emptyevtr,\emptyset)\in\sem{\te_2} \;\; \text{(by definition of $\corop$)} \\
& \implies & \isEmpty(\te_1) \text{ is derivable } \vee \isEmpty(\te_2) \text{ is derivable} \;\; \text{(by definition of $\sem{\te}$)} \\
& \implies & \isEmpty(\te_1\orop\te_2) \text{ is derivable} \;\; \text{(by definition of $\isEmpty(\te)$)} \\
& \implies & (\emptyevtr,\emptyset)\in\sem{\te_1\orop\te_2} \;\; \text{(by definition of $\sem{\te}$)} \\
\end{eqnarray*}

When the trace is not empty, we derive:
%If $\te_1\neq\block{\dvar}{\te_1''}$ and $\te_2\neq\block{\dvar}{\te_2''}$.
\begin{eqnarray*}
(\ev\catop\evtr,\subs)\in\sem{\te_1}\corop\sem{\te_2} & \implies & (\ev\catop\evtr,\subs)\in\sem{\te_1} \vee ((\ev\catop\evtr,\subs)\in\sem{\te_2} \wedge \ev\nopref\sem{\te_1}) \;\; \text{(by definition of $\corop$)} \\
& \implies & (\ev\catop\evtr,\substr)\in\csem{\te_1} \vee ((\ev\catop\evtr,\substr)\in\csem{\te_2} \wedge \ev\nopref\csem{\te_1}) \wedge \\
& & (\ev\catop\evtr,\substr)\leadsto(\ev\catop\evtr,\subs) \;\; \text{(by definition of $\sem{\te}$)} \\
& \implies & (\te_1\extrans{\ev}\te';\subs' \text{ is derivable } \wedge (\evtr,\substr')\in\csem{\subs'\te'}) \vee \\
& & (\te_2\extrans{\ev}\te';\subs' \text{ is derivable } \wedge \te_1\notrans{\ev} \wedge (\evtr,\substr')\in\csem{\subs'\te'}) \;\; \text{(by definition of $\csem{\te}$)} \\
& \implies & \te_1\orop\te_2\extrans{\ev}\te';\subs' \text{ is derivable } \wedge (\evtr,\substr')\in\csem{\subs'\te'} \;\; \text{(by operational semantics)} \\
& \implies & (\ev\catop\evtr,\substr)\in\csem{\te_1\orop\te_2} \wedge (\ev\catop\evtr,\substr)\leadsto(\ev\catop\evtr,\subs) \;\; \text{(by definition of $\csem{\te}$)} \\
% & \implies & (\te_1\extrans{\ev}\te';\subs_1 \text{ is derivable } \wedge (\evtr,\subs_2)\in\sem{\te'} \wedge \dom(\subs_2)\subseteq\fv(\te') \wedge \subs=\subs_1\subsMerge\subs_2) \vee \\
% & & (\te_2\extrans{\ev}\te';\subs_1 \text{ is derivable } \wedge (\evtr,\subs_2)\in\sem{\te'} \wedge \dom(\subs_2)\subseteq\fv(\te') \wedge \subs=\subs_1\subsMerge\subs_2 \wedge \te_1\notrans{\ev}) \\
% & & \text{(by definition of $\sem{\te}$)} \\
% & \implies & (\te_1\extrans{\ev}\te';\subs_1 \text{ is derivable } \vee (\te_2\extrans{\ev}\te';\subs_1 \text{ is derivable } \wedge \te_1\notrans{\ev})) \wedge \\
% & & (\evtr,\subs_2)\in\sem{\te'} \wedge \dom(\subs_2)\subseteq\fv(\te') \wedge \subs=\subs_1\subsMerge\subs_2 \;\; \text{(distribution property)} \\
% & \implies & (\te_1\orop\te_2\extrans{\ev}\te';\subs_1) \wedge \\
% & & (\evtr,\subs_2)\in\sem{\te'} \wedge \dom(\subs_2)\subseteq\fv(\te') \wedge \subs=\subs_1\subsMerge\subs_2 \;\; \text{(by operational semantics)} \\
& \implies & (\ev\catop\evtr,\subs)\in\sem{\te_1\orop\te_2} \;\; \text{(by definition of $\sem{\te}$)}\\
\end{eqnarray*}

\paragraph{Proof of Intersection}
To prove that $\sem{\te_1\andop\te_2}=\sem{\te_1}\candop\sem{\te_2}$, we have to show $(\evtr,\subs)\in\sem{\te_1\andop\te_2}\iff(\evtr,\subs)\in\sem{\te_1}\candop\sem{\te_2}$. 
\begin{eqnarray*}
(\emptyevtr,\emptyset)\in\sem{\te_1\andop\te_2} & \iff & (\emptyevtr,\emptyevtr)\in\csem{\te_1\andop\te_2} \wedge (\emptyevtr,\emptyevtr)\leadsto(\emptyevtr,\emptyset) \;\; \text{(by definition of $\sem{\te}$)} \\
& \iff & \isEmpty(\te_1\andop\te_2) \text{ is derivable} \;\; \text{(by definition of $\csem{\te}$)} \\
& \iff & \isEmpty(\te_1) \text{ is derivable } \wedge \isEmpty(\te_2) \text{ is derivable} \;\; \text{(by definition of $\isEmpty(\te)$)} \\
& \iff & (\emptyevtr,\emptyevtr)\in\csem{\te_1} \wedge (\emptyevtr,\emptyevtr)\in\csem{\te_2} \;\; \text{(by definition of $\csem{\te}$)} \\
& \iff & (\emptyevtr,\emptyset)\in\sem{\te_1} \wedge (\emptyevtr,\emptyset)\in\sem{\te_2} \;\; \text{(by definition of $\sem{\te}$)} \\
& \iff & (\emptyevtr,\emptyset)\in(\sem{\te_1}\candop\sem{\te_2}) \;\; \text{(by definition of $\candop$)}
\end{eqnarray*}
\begin{eqnarray*}
(\ev\catop\evtr,\subs)\in\sem{\te_1\andop\te_2} & \iff & (\ev\catop\evtr,\substr)\in\csem{\te_1\andop\te_2} \wedge (\ev\catop\evtr,\substr)\leadsto(\ev\catop\evtr,\subs) \;\; \text{(by definition of $\sem{\te}$)} \\
& \iff & \te_1\extrans{\ev}\te';\subs' \text{ is derivable } \wedge \te_2\extrans{\ev}\te';\subs' \text{ is derivable } \wedge \\
 & & (\evtr,\substr')\in\csem{\subs'\te'} \;\; \text{(by definition of $\csem{\te}$ and operational semantics)} \\
& \iff & (\ev\catop\evtr,\substr)\in\csem{\te_1} \wedge (\ev\catop\evtr,\substr)\in\csem{\te_2} \wedge (\ev\catop\evtr,\substr)\leadsto(\ev\catop\evtr,\subs) \\ 
& \iff & (\ev\catop\evtr,\subs)\in\sem{\te_1} \wedge (\ev\catop\evtr,\subs)\in\sem{\te_2} \;\; \text{(by definition of $\sem{\te}$)} \\
%  (\ev\catop\evtr,\subs)\in\sem{\te_1\andop\te_2} & \iff & \te_1\andop\te_2\extrans{\ev}\te';\subs_1 \text{ is derivable } \wedge (\evtr,\subs_2)\in\sem{\te'} \wedge \\
%  & & \dom(\subs_2)\subseteq\fv(\te') \wedge \subs = \subs_1\subsMerge\subs_2 \;\; \text{(by definition of $\sem{\te}$)} \\
% & \iff & \te_1\extrans{\ev}\te';\subs_1 \text{ is derivable } \wedge \te_2\extrans{\ev}\te';\subs_1 \text{ is derivable } \wedge \\
%  & & (\evtr,\subs_2)\in\sem{\te'} \wedge \dom(\subs_2)\subseteq\fv(\te') \wedge \subs = \subs_1\subsMerge\subs_2 \;\; \text{(by operational semantics)} \\
%  & \iff & (\ev\catop\evtr,\subs)\in\sem{\te_1} \wedge (\ev\catop\evtr,\subs)\in\sem{\te_2} \;\; \text{(by definition of $\sem{\te}$)} \\
 & \iff & (\ev\catop\evtr,\subs)\in(\sem{\te_1}\candop\sem{\te_2}) \;\; \text{(by definition of $\candop$)}
\end{eqnarray*}

\paragraph{Proof of Variable deletion}
To prove that $\sem{\block{\dvar}{\te}}=\remove{\sem{\te}}{\dvar}$, we have to show $(\evtr,\subs)\in\sem{\block{\dvar}{\te}}\iff(\evtr,\subs)\in\remove{\sem{\te}}{\dvar}$. 
\begin{eqnarray*}
(\emptyevtr,\emptyset)\in\sem{\block{\dvar}{\te}} & \iff & (\emptyevtr,\emptyevtr)\in\csem{\block{\dvar}{\te}} \wedge (\emptyevtr,\emptyevtr)\leadsto(\emptyevtr,\emptyset) \;\; \text{(by definition of $\sem{\te}$)} \\
& \iff & \isEmpty(\block{\dvar}{\te}) \text{ is derivable} \;\; \text{(by definition of $\csem{\te}$)} \\
& \iff & \isEmpty(\te) \text{ is derivable} \;\; \text{(by definition of $\isEmpty(\te)$)} \\
& \iff & (\emptyevtr,\emptyevtr)\in\csem{\te} \wedge (\emptyevtr,\emptyevtr)\leadsto(\emptyevtr,\emptyset) \;\; \text{(by definition of $\csem{\te}$)} \\
& \iff & (\emptyevtr,\emptyset)\in\sem{\te} \;\; \text{(by definition of $\sem{\te}$)} \\
& \iff & (\emptyevtr,\emptyset)\in\remove{\sem{\te}}{\dvar} \;\; \text{(by definition of $\remove{}{\dvar}$)} \\
\end{eqnarray*}

\begin{eqnarray*}
(\ev\catop\evtr,\subs)\in\sem{\block{\dvar}{\te}} & \implies & (\ev\catop\evtr,\substr)\in\csem{\block{\dvar}{\te}} \wedge (\ev\catop\evtr,\substr)\leadsto(\ev\catop\evtr,\subs) \;\; \text{(by definition of $\sem{\te}$)} \\
& \implies & \block{\dvar}{\te}\extrans{\ev}\te';\subs' \text{ is derivable } \wedge (\evtr,\substr')\in\csem{\subs'\te'} \;\; \text{(by definition of $\csem{\te}$)} \\
& \implies & \te\extrans{\ev}\te'';\subs'' \text{ is derivable } \wedge \\ 
& & ((\dvar\in\dom(\subs'') \wedge \block{\dvar}{\te}\extrans{\ev}\restrict{\subs''}{\dvar}\te'';\remove{\subs''}{\dvar} \text{ is derivable} \wedge (\evtr,\substr')\in\csem{\remove{\subs''}{\dvar}\restrict{\subs''}{\dvar}\te''}) \vee \\
& & (\dvar\notin\dom(\subs'') \wedge \block{\dvar}{\te}\extrans{\ev}\block{\dvar}{\te''};\subs'' \text{ is derivable} \wedge (\evtr,\substr')\in\csem{\subs''\block{\dvar}{\te''}})) \\
& & \text{(by operational semantics)} \\
& \implies^1 & \te\extrans{\ev}\te'';\subs'' \text{ is derivable } \wedge \block{\dvar}{\te}\extrans{\ev}\restrict{\subs''}{\dvar}\te'';\remove{\subs''}{\dvar} \text{ is derivable} \wedge \\
& & (\evtr,\substr')\in\csem{\subs''\te''} \;\; \text{(by Lemma~\ref{lemmaSubsDecomposition})} \\
& \implies^1 & (\ev\catop\evtr,\subs''\catop\substr')\in\csem{\te} \wedge \subs'=\remove{\subs''}{\dvar} \wedge (\ev\catop\evtr,\substr''\catop\substr')\leadsto(\ev\catop\evtr,\subs''') \\
& & (\ev\catop\evtr,\subs'\catop\substr')\leadsto(\ev\catop\evtr,\subs) \;\; \text{(by definition of $\csem{\te}$)} \\
& \implies^1 & (\ev\catop\evtr,\subs''')\in\sem{\te} \wedge \subs=\remove{\subs'''}{\dvar} \;\; \text{(by Lemma~\ref{lemmaLeadsToVar})} \\
& \implies^1 & (\ev\catop\evtr, \subs)\in\remove{\sem{\te}}{\dvar} \;\; \text{(by definition of $\remove{}{\dvar}$)} \\
& \implies^2 & \te\extrans{\ev}\te'';\subs' \text{ is derivable } \wedge \dvar\notin\dom(\subs') \wedge \block{\dvar}{\te}\extrans{\ev}\block{\dvar}{\te''};\subs' \text{ is derivable} \\
& & (\evtr,\substr')\in\csem{\block{\dvar}{\remove{\subs'}{\dvar}\te''}} \;\; \text{(by definition of $\subs$)} \\
& \implies^2 & \te\extrans{\ev}\te'';\subs' \text{ is derivable } \wedge \dvar\notin\dom(\subs') \wedge \block{\dvar}{\te}\extrans{\ev}\block{\dvar}{\te''};\subs' \text{ is derivable} \\
& & (\evtr,\substr'')\in\csem{\subs'\te''} \wedge \substr'=\remove{\substr''}{\dvar} \;\; \text{(by coinduction over $\csem{\te}$)} \\
& \implies^2 & (\ev\catop\evtr,\subs'\catop\substr'')\in\csem{\te} \wedge \substr'=\remove{\substr''}{\dvar} \;\; \text{(by definition of $\csem{\te}$)} \\ 
& \implies^2 & (\ev\catop\evtr,\subs'\catop\substr'')\leadsto(\ev\catop\evtr,\subs''') \wedge \subs=\remove{\subs'''}{\dvar} \;\; \text{(by Lemma~\ref{lemmaLeadsToVar})} \\
& \implies^2 & (\ev\catop\evtr,\subs''')\in\sem{\te} \wedge \subs=\remove{\subs'''}{\dvar} \;\; \text{(by definition of $\sem{\te}$)} \\
& \implies^2 & (\ev\catop\evtr, \subs)\in\remove{\sem{\te}}{\dvar} \;\; \text{(by definition of $\remove{}{\dvar}$)} \\
\end{eqnarray*}

\begin{eqnarray*}
 (\ev\catop\evtr,\subs)\in\remove{\sem{\te}}{\dvar} & \implies & (\ev\catop\evtr,\subs')\in\sem{\te} \wedge \subs=\remove{\subs'}{\dvar} \;\; \text{(by definition of $\remove{}{\dvar}$)} \\
 & \implies & (\ev\catop\evtr,\substr)\in\csem{\te} \wedge (\ev\catop\evtr,\substr)\leadsto(\ev\catop\evtr,\subs') \;\; \text{(by definition of $\sem{\te}$)} \\
 & \implies & \te\extrans{\ev}\te';\subs'' \text{ is derivable } \wedge (\evtr,\substr')\in\csem{\subs''\te'} \;\; \text{(by definition of $\csem{\te}$)} \\
 & \implies & (\dvar\in\dom(\subs'') \wedge \block{\dvar}{\te}\extrans{\ev}\restrict{\subs''}{\dvar}\te';\remove{\subs''}{\dvar} \text{ is derivable} \wedge (\evtr,\substr')\in\csem{\subs''\te'}) \vee \\
& & (\dvar\notin\dom(\subs'') \wedge \block{\dvar}{\te}\extrans{\ev}\block{\dvar}{\te'};\subs'' \text{ is derivable } \wedge (\evtr,\substr')\in\csem{\subs''\te'}) \\
& & \text{(by operational semantics)} \\
& \implies^1 & (\ev\catop\evtr,\remove{\subs''}{\dvar}\catop\substr')\in\csem{\block{\dvar}{\te}} \wedge (\ev\catop\evtr,\substr)\leadsto(\ev\catop\evtr,\subs') \;\; \text{(by definition of $\csem{\te}$)} \\
& \implies^1 & (\ev\catop\evtr,\remove{\subs''}{\dvar}\catop\substr')\in\csem{\block{\dvar}{\te}} \wedge (\ev\catop\evtr,\remove{\subs''}{\dvar})\leadsto(\ev\catop\evtr,\remove{\subs'}{\dvar}) \;\; \text{(by Lemma~\ref{lemmaSubsDecomposition})} \\
& \implies^1 & (\ev\catop\evtr,\substr)\in\sem{\block{\dvar}{\te}} \;\; \text{(by definition of $\sem{\te}$)} \\
& \implies^2 & \dvar\notin\dom(\subs'') \wedge \block{\dvar}{\te}\extrans{\ev}\block{\dvar}{\te'};\subs'' \text{ is derivable } \wedge (\evtr,\subs''')\in\sem{\subs''\te'} \wedge \\ 
& & (\evtr,\substr')\leadsto(\evtr,\subs''') \wedge \subs'=\subs''\subsMerge\subs''' \;\; \text{(by definition of $\sem{\te}$)} \\
& \implies^2 & \dvar\notin\dom(\subs'') \wedge \block{\dvar}{\te}\extrans{\ev}\block{\dvar}{\te'};\subs'' \text{ is derivable } \wedge (\evtr,\remove{\subs'''}{\dvar})\in\remove{\sem{\subs''\te'}}{\dvar} \wedge \\ 
& & (\evtr,\substr')\leadsto(\evtr,\subs''') \wedge \subs'=\subs''\subsMerge\subs''' \;\; \text{(by definition of $\remove{}{\dvar}$)} \\
& \implies^2 & \dvar\notin\dom(\subs'') \wedge \block{\dvar}{\te}\extrans{\ev}\block{\dvar}{\te'};\subs'' \text{ is derivable } \wedge (\evtr,\remove{\subs'''}{\dvar})\in\sem{\block{\dvar}{\subs''\te'}} \wedge \\ 
& & (\evtr,\substr')\leadsto(\evtr,\subs''') \wedge \subs'=\subs''\subsMerge\subs''' \;\; \text{(by coinduction over $\sem{\te'}$)} \\
& \implies^2 & \dvar\notin\dom(\subs'') \wedge \block{\dvar}{\te}\extrans{\ev}\block{\dvar}{\te'};\subs'' \text{ is derivable } \wedge (\evtr,\remove{\subs'''}{\dvar})\in\sem{\subs''\block{\dvar}{\te'}} \wedge \\ 
& & (\evtr,\substr')\leadsto(\evtr,\subs''') \wedge \subs'=\subs''\subsMerge\subs''' \;\; \text{(by definition of $\subs$)} \\
& \implies^2 & \block{\dvar}{\te}\extrans{\ev}\block{\dvar}{\te'};\subs'' \text{ is derivable } \wedge (\evtr,\remove{\substr'}{\dvar})\in\csem{\subs''\block{\dvar}{\te'}} \;\; \text{(by Lemma~\ref{lemmaLeadsToVar})} \\
& \implies^2 & (\ev\catop\evtr,\subs''\catop\remove{\substr'}{\dvar})\in\csem{\block{\dvar}{\te}} \;\; \text{(by definition of $\csem{\te}$)} \\
& \implies^2 & (\ev\catop\evtr,\remove{\substr}{\dvar})\in\csem{\block{\dvar}{\te}} \wedge (\ev\catop\evtr,\remove{\substr}{\dvar})\leadsto(\ev\catop\evtr,\remove{\subs'}{\dvar}) \;\; \text{(by definition of $\substr$ and Lemma~\ref{lemmaLeadsToVar})} \\
& \implies^2 & (\ev\catop\evtr,\remove{\subs'}{\dvar})\in\sem{\block{\dvar}{\te}} \;\; \text{(by definition of $\sem{\te}$)} \\
& \implies^2 & (\ev\catop\evtr,\subs)\in\sem{\block{\dvar}{\te}} \\
\end{eqnarray*}

\paragraph{Proof of Shuffle} To prove that $\sem{\te_1\shuffleop\te_2}=\sem{\te_1}\cshuffleop\sem{\te_2}$, we have to show  $(\evtr,\subs)\in\sem{\te_1\shuffleop\te_2} \iff (\evtr,\subs)\in\sem{\te_1}\cshuffleop\sem{\te_2}$.

As for the concatenation, also for the shuffle the proof for the empty trace is trivial, and fully determined by the definition of the $\isEmpty$ predicate.

\begin{eqnarray*}
 (\emptyevtr,\emptyset)\in\sem{\te_1\shuffleop\te_2} & \iff & (\emptyevtr,\emptyevtr)\in\csem{\te_1\shuffleop\te_2} \wedge (\emptyevtr,\emptyevtr)\leadsto(\emptyevtr,\emptyset) \;\; \text{(by definition of $\sem{\te}$)} \\
 & \iff & \isEmpty(\te_1\shuffleop\te_2) \wedge (\emptyevtr,\emptyevtr)\leadsto(\emptyevtr,\emptyset) \;\; \text{(by definition of $\csem{\te}$)} \\ 
 & \iff & \isEmpty(\te_1) \wedge \isEmpty(\te_2) \wedge (\emptyevtr,\emptyevtr)\leadsto(\emptyevtr,\emptyset) \;\; \text{(by definition of $\isEmpty(\te)$)} \\
 & \iff & (\emptyevtr,\emptyevtr)\in\csem{\te_1} \wedge (\emptyevtr,\emptyevtr)\in\csem{\te_2} \wedge (\emptyevtr,\emptyevtr)\leadsto(\emptyevtr,\emptyset) \;\; \text{(by definition of $\csem{\te}$)} \\
 & \iff & (\emptyevtr,\emptyset)\in\sem{\te_1} \wedge (\emptyevtr,\emptyset)\in\sem{\te_2} \;\; \text{(by definition of $\sem{\te}$)} \\
 & \iff & (\emptyevtr,\emptyset)\in\sem{\te_1}\cshuffleop\sem{\te_2} \;\; \text{(by definition of $\cshuffleop$)} \\
\end{eqnarray*}

As before, we separate the proof, and we tackle first \emph{completeness}.
To prove \emph{completeness}, we have to show that the abstract semantics $\sem{\te_1\shuffleop\te_2}$ is included in the composition of the abstract semantics $\sem{\te_1}$ and $\sem{\te_2}$, using $\cshuffleop$. In the first part ($\implies^1$), the first event of the trace expands $\te_1$. Due to operational semantics, the term is rewritten into a new shuffle term. Since the concrete semantics has been defined coinductively, we can derive that the proof is satisfied by the new shuffle term so obtained. After that, the proof is concluded following the definition of $\cshuffleop$ and abstract semantics. The seconf part ($\implies^2$) is symmetric, and not reported for space constraints.

\begin{eqnarray*}
(\ev\catop\evtr,\subs)\in\sem{\te_1\shuffleop\te_2} & \implies & (\ev\catop\evtr,\substr)\in\csem{\te_1\shuffleop\te_2} \wedge (\ev\catop\evtr,\substr)\leadsto(\ev\catop\evtr,\subs) \;\; \text{(by definition of $\sem{\te}$)} \\
& \implies & (\te_1\extrans{\ev}\te_1';\subs' \text{ is derivable } \wedge (\evtr,\substr')\in\csem{\subs'(\te_1'\shuffleop\te_2)}) \vee \\
& & (\te_2\extrans{\ev}\te_2';\subs' \text{ is derivable } \wedge \te_1\notrans{\ev} \wedge (\evtr,\substr')\in\csem{\subs'(\te_1\shuffleop\te_2')}) \;\; \text{(by operational semantics)} \\
& \implies^1 & \te_1\extrans{\ev}\te_1';\subs' \text{ is derivable } \wedge (\evtr,\subs'')\in\sem{\subs'(\te_1'\shuffleop\te_2)} \wedge \\
& & (\evtr,\substr')\leadsto(\evtr,\subs'') \;\; \text{(by definition of $\sem{\te}$)} \\
& \implies^1 & \te_1\extrans{\ev}\te_1';\subs' \text{ is derivable } \wedge (\evtr,\subs)\in\sem{\te_1'\shuffleop\te_2} \wedge \\
& & (\evtr,\substr')\leadsto(\evtr,\subs'') \;\; \text{(by Lemma~\ref{lemmaSubsRemoval})} \\
& \implies^1 & \te_1\extrans{\ev}\te_1';\subs' \text{ is derivable } \wedge (\evtr,\subs)\in\sem{\te_1'}\cshuffleop\sem{\te_2} \wedge \\
& & (\evtr,\substr')\leadsto(\evtr,\subs'') \;\; \text{(by coinduction over $\sem{\te}$)} \\
& \implies^1 & \te_1\extrans{\ev}\te_1';\subs' \text{ is derivable } \wedge (\evtr_1,\subs_1)\in\sem{\te_1'} \wedge (\evtr_2,\subs_2)\in\sem{\te_2} \wedge \\
& & \evtr=\evtr_1\glpshuffleop{\tproj{\sem{\te_1'}}}\evtr_2 \wedge \subs''=\subs_1\subsMerge\subs_2 \wedge (\evtr,\substr')\leadsto(\evtr,\subs'') \;\; \text{(by definition of $\cshuffleop$)} \\
& \implies^1 & (\ev\catop\evtr_1,\subs'\subsMerge\subs_1)\in\sem{\te_1} \wedge (\evtr_2,\subs_2)\in\sem{\te_2} \wedge \evtr=\evtr_1\glpshuffleop{\tproj{\sem{\te_1'}}}\evtr_2 \wedge \\
& & \subs''=\subs_1\subsMerge\subs_2 \wedge (\evtr,\substr')\leadsto(\evtr,\subs'') \;\; \text{(by definition of $\sem{\te}$)} \\
& \implies^1 & (\ev\catop\evtr_1,\subs'\subsMerge\subs_1)\in\sem{\te_1} \wedge (\evtr_2,\subs_2)\in\sem{\te_2} \wedge \ev\catop\evtr=\ev\catop\evtr_1\glpshuffleop{\tproj{\sem{\te_1'}}}\evtr_2 \wedge \\
& & \subs''=\subs_1\subsMerge\subs_2 \wedge (\evtr,\substr')\leadsto(\evtr,\subs'') \wedge (\ev\catop\evtr,\substr)\leadsto(\ev\catop\evtr,\subs) \;\; \text{(by Lemma~\ref{lemmaHeadShuffle})} \\
& \implies^1 & (\ev\catop\evtr,\subs)\in\sem{\te_1}\cshuffleop\sem{\te_2} \;\; \text{(by definition of $\cshuffleop$)} \\
 & \implies^2 & (\ev\catop\evtr,\subs)\in\sem{\te_1}\cshuffleop\sem{\te_2} \;\; \text{(by following symmetric steps)} \\
\end{eqnarray*}

We now prove \emph{soundness}. To prove it, we need to show that the composition of $\sem{\te_1}$ and $\sem{\te_2}$, using $\cshuffleop$ is included into the abstract semantics of the corresponding shufffle term $\sem{\te_1\shuffleop\te_2}$. The resulting proof is splitted in two parts. In the first part ($\implies^1$), the first event of the trace belongs to $\sem{\te_1}$. Through operational semantics, the resulting terms are combined, and by coinduction over such term, the corresponding shuffle term can be derived. The second part ($\implies^2$) is symmetric, and is omitted for space constraints. 

\begin{eqnarray*}
(\ev\catop\evtr,\subs)\in\sem{\te_1}\cshuffleop\sem{\te_2} & \implies & (\evtr_1,\subs_1)\in\sem{\te_1} \wedge (\evtr_2,\subs_2)\in\sem{\te_2} \wedge \subs=\subs_1\subsMerge\subs_2 \wedge \\
& & \ev\catop\evtr\in\evtr_1\glpshuffleop{\tproj{\sem{\te_1}}}\evtr_2 \;\; \text{(by definition of $\cshuffleop$)} \\
& \implies & ((\ev\catop\evtr_1',\subs_1)\in\sem{\te_1} \wedge (\evtr_2,\subs_2)\in\sem{\te_2} \wedge \subs=\subs_1\subsMerge\subs_2 \wedge \ev\catop\evtr\in\ev\catop\evtr_1'\glpshuffleop{\tproj{\sem{\te_1}}}\evtr_2) \vee \\
& & ((\evtr_1,\subs_1)\in\sem{\te_1} \wedge (\ev\catop\evtr_2',\subs_2)\in\sem{\te_2} \wedge \ev\nopref\evtr_1 \wedge \subs=\subs_1\subsMerge\subs_2 \wedge \ev\catop\evtr\in\evtr_1'\glpshuffleop{\tproj{\sem{\te_1}}}\ev\catop\evtr_2') \\
& & \text{(by Lemma~\ref{lemmaHeadShuffle})} \\
& \implies^1 & \te_1\trans{\ev}\te_1';\subs_1' \wedge (\evtr_1',\subs_1'')\in\sem{\te_1'} \wedge \subs_1=\subs_1'\subsMerge\subs_1'' \wedge \\ 
& & (\evtr_2,\subs_2)\in\sem{\te_2} \wedge  \ev\catop\evtr\in\ev\catop\evtr_1'\glpshuffleop{\tproj{\sem{\te_1}}}\evtr_2 \;\; \text{(by definition of $\sem{\te}$ and $\csem{\te}$)} \\
& \implies^1 & \te_1\shuffleop\te_2\trans{\ev}\te_1'\shuffleop\te_2;\subs_1' \wedge (\evtr_1',\subs_1'')\in\sem{\te_1'} \wedge \subs_1=\subs_1'\subsMerge\subs_1'' \wedge \\ 
& & (\evtr_2,\subs_2)\in\sem{\te_2} \wedge  \ev\catop\evtr\in\ev\catop\evtr_1'\glpshuffleop{\tproj{\sem{\te_1}}}\evtr_2 \;\; \text{(by operational semantics)} \\
& \implies^1 & \te_1\shuffleop\te_2\trans{\ev}\te_1'\shuffleop\te_2;\subs_1' \wedge (\evtr_1',\subs_1'')\in\sem{\te_1'} \wedge \subs_1=\subs_1'\subsMerge\subs_1'' \wedge \\ 
& & (\evtr_2,\subs_2)\in\sem{\te_2} \wedge  \evtr\in\evtr_1'\glpshuffleop{\tproj{\sem{\te_1}}}\evtr_2 \;\; \text{(by Lemma~\ref{lemmaHeadShuffle})} \\
& \implies^1 & \te_1\shuffleop\te_2\trans{\ev}\te_1'\shuffleop\te_2;\subs_1' \wedge (\evtr,\subs_1''\subsMerge\subs_2)\in\sem{\te_1'\shuffleop\te_2} \;\; \text{(by coinduction over $\sem{\te_1'}$ and $\sem{\te_2}$)} \\
& \implies^1 & (\ev\catop\evtr,\subs)\in\sem{\te_1\shuffleop\te_2} \;\; \text{(by definition of $\sem{\te}$)} \\
& \implies^2 & (\ev\catop\evtr,\subs)\in\sem{\te_1\shuffleop\te_2} \;\; \text{(symmetric procedure)} \\
\end{eqnarray*}

\end{document}